\newcommand{\argmax}{\operatornamewithlimits{argmax\,}}
\newcommand{\E}{\mathbb{E}}
\newcommand{\Var}{\mathrm{Var}}
\mathchardef\mhyphen="2D
\def \x{x_{1:n}}
\def \y{y_{1:n}}
\def \a{\alpha_{1:n}}
\def \t{\theta}
\def \E{\mathbb{E}}
\def \dt {\mathrm{d}}
\def\tr{\text{\rm Tr}}
\newtheorem{assumption}{Assumption}
\newtheorem{remark}{Remark}
\newtheorem{theorem}{Theorem}
\newtheorem{lemma}{Lemma}
\newtheorem{example}{Example}
\begin{document}
	\title[Covering all your Bayes(es)]{Reliable Bayesian Inference in Misspecified Models}
	\author{David T. Frazier, Robert Kohn, Christopher Drovandi and David Gunawan}
	\thanks{Frazier: Department of Econometrics and Business Statistics, Monash University, Clayton VIC 3800, Australia. Kohn: Australian School of Business, School of Economics, University of New South Wales, Sydney NSW 2052, Australia. Drovandi: School of Mathematical Sciences, Centre for Data Science, Queensland University of Technology, Brisbane QLD 4000, Australia. Gunawan: : School of Mathematics and Applied Statistics, University of Wollongong. Corresponding author:  david.frazier@monash.edu}

\begin{abstract}
We provide a general solution to a fundamental open problem in Bayesian inference, namely poor uncertainty quantification, from a frequency standpoint, of Bayesian methods in misspecified models. While existing solutions are based on explicit Gaussian approximations of the posterior, or computationally onerous post-processing procedures, we demonstrate that correct uncertainty quantification can be achieved by replacing the usual posterior with an intuitive approximate posterior. Critically, our solution is applicable to likelihood-based, and generalized, posteriors as well as cases where the likelihood is intractable and must be estimated. We formally demonstrate the reliable uncertainty quantification of our proposed approach, and show that valid uncertainty quantification is not an asymptotic result but occurs even in small samples. We illustrate this approach through a range of examples, including linear, and generalized, mixed effects models.

\noindent \textsc{Keywords}: Bayesian inference, Generalized Bayesian inference, Model misspecification, Estimated likelihood,
pseudo-marginal
\end{abstract}

\maketitle


\section{Introduction}

Bayesian methods are lauded for their ability to tackle complicated models, deftly handle latent variables, and for providing a holistic (if subjective) treatment of the uncertainty regarding all model unknowns. While subjective in nature, under well-known regularity conditions, the expressions of uncertainty obtained using Bayesian methods asymptotically agree with frequentist methods. However, when the model underlying the (Bayesian) posterior update is misspecified the agreement between the Bayesian and frequentist statistical paradigms is lost, and expressions of uncertainty obtained using Bayesian methods are no longer `valid' from a frequentist standpoint. 

To correct this issue, several posterior post-processing methods have been suggested (see, .e.g., \citealp{muller2013risk}, \citealp{syring2019calibrating}, and \citealp{matsubara2021robust}). In general, all such procedures of which we are aware amount to the application of ex-post, frequentist, principles to the output of a Bayesian learning algorithm. Herein, we propose an intrinsically Bayesian solution to the fundamental problem of producing Bayesian inferences that are well-calibrated from a frequency standpoint.

Inspired by the literature on generalized Bayesian inference (\citealp{bissiri2016general}; \citealp{chernozhukov2003mcmc}), we devise a new type of posterior approximation that, when used in likelihood-based settings delivers equivalent behavior to `exact' Bayesian methods when the model is correctly specified, but produces posteriors whose credible sets are `valid' - from a frequentist standpoint - regardless of model specification. This approach is not based on any ad-hoc correction of posterior draws, or other extrinsic calibration techniques, and can be equally applied when the likelihood must be estimated. We demonstrate theoretically that under weak regularity conditions, and even when the likelihood is estimated, our proposed approach delivers valid frequentist uncertainty quantification regardless of model specification. Hence, we give for the first time a general framework for conducting Bayesian inference in possibly misspecified models that also delivers valid (frequentist) uncertainty quantification.  

However, unlike frequentist methods, our Bayesian approach delivers valid uncertainty quantification without needing to calculate second-derivative information. Thus, in cases where second derivatives are difficult to obtain, as in models with many latent variables, this approach provides a useful alternative to frequentist methods that must explicitly calculate these derivatives to correctly quantify uncertainty.

When the model is misspecified, Bayesian methods may not deliver inferences that are `fit for purpose'. To circumvent this issue, several researchers have  suggested using generalized Bayesian procedures that produce posteriors  based on loss functions that are specific to the task at hand; see, e.g., \citet{syring2020robust}, \cite{matsubara2021robust}, \cite{jewson2021general}, and \cite{loaiza2021focused} for examples. Unfortunately, as discussed by several authors, see, e.g., \cite{miller2021asymptotic} and \cite{syring2019calibrating}, generalized Bayesian posteriors are not well-calibrated: a credible set for a quantity of interest with posterior probability $(1-\alpha)$ contains the true quantity with actual probability - calculated under the true data generating process - smaller or larger than $(1-\alpha)$. Several approaches have been suggested to solve this issue; see, e.g., \cite{holmes2017assigning}, \cite{syring2019calibrating}, and see \cite{wu2020comparison} for a review of these methods. However, the proposed methods of which we are aware amount to the application of an extrinsic, and ad-hoc, correction to the output of the Bayesian learning algorithm.

We demonstrate that our approach can be directly adapted to the context of generalized Bayesian inference, where the likelihood function is replaced by a generic loss function, or a quasi-likelihood that is, at best, an approximation to the likelihood. The result is a class of posteriors for generalized Bayesian inference that produce loss-based credible sets that are well-calibrated. As such, we give, for the first time, a generalized Bayesian posterior whose credible sets are guaranteed to have the correct width, and which {does not rely on the use of expensive post-processing techniques.} 

The remainder of the paper proceeds as follows. Section \ref{sec:misspec} discusses the general issue of model misspecification in likelihood-based Bayesian inference, and demonstrates how a particular generalized posterior approach overcomes the known issues with Bayesian inference in this setting. Section \ref{sec:ext} extends this posterior approach to deal with situations where the likelihood depends on unobservable latent variables, and we show that even when the likelihood must be estimated our proposed approach delivers well-calibrated inferences. Section \ref{sec:general} demonstrates that the generalized posterior construction used in the likelihood-based Bayesian setting also applies in the context of posteriors built using general loss functions (as in \citealp{bissiri2016general}). Section \ref{sec:discuss} concludes the paper. Proofs of all stated results are given in the supplementary material. 

\section{Bayesian Inference in Misspecified Models}\label{sec:misspec}
The observed data $\y=(y_1,\dots,y_n)^\top$, where $y_i\in\mathcal{Y}\subseteq\mathbb{R}^{d_y}$ ($i=1,\dots,n$), is generated from some true unknown distribution $P^{(n)}_0$. Since $P^{(n)}_0$ is unknown, we approximate the distribution of $\y$ using a class  of models $\{P^{(n)}_\theta:\theta\in\Theta\}$, which depends on unknown parameters $\theta\in\Theta\subseteq\mathbb{R}^{d_\theta}$. We assume there exists a measure that dominates both joint distributions $P^{(n)}_0$ and $P^{(n)}_\theta$ so that the joint densities $p^{(n)}_0$ and $p_\theta^{(n)}$ exist for all $n\ge1$. Our prior beliefs on $\theta$ are expressed via the probability density function $\pi(\theta)$. The prior beliefs are updated upon the observation of $\y$ via Bayes rule, to produce the exact posterior
\begin{equation}\label{eq:post}
	\pi(\theta\mid\y)\propto \pi(\theta)\exp\{-\ell_{n}(\theta)\},\text{ where } \ell_n(\theta):=-\log p^{(n)}_\theta(\y).
\end{equation}

When $p^{(n)}_\theta(\y)$ can be analytically evaluated there exist many ways to obtain samples from $\pi(\theta\mid\y)$. However, in many interesting situations within Bayesian inference $p^{(n)}_\theta(\y)$ is analytically intractable. This often occurs in models that depend on unobservable, or latent, variables  $\a =(\alpha_1^\top,\dots,\alpha_n^\top)^\top$, where $\a\sim p(\a\mid\theta)$, and $\alpha_i\in\mathcal{A}\subseteq\mathbb{R}^{d_\alpha}$ ($i=1,\dots,n$). The observed-data likelihood $p^{(n)}_\theta(\y)$ is then obtained by  integration over $\a$ in the complete-data likelihood $p_\theta(\y,\a)$: $$
p^{(n)}_\theta(\y)=\int_{\mathcal{A}}p_\theta(\y,\a)\dt\a.
$$  If the integration cannot be performed analytically it often remains feasible to obtain an estimator $\widehat{p}_\theta(\y\mid z)$ that depends on simulated random variables $z=(z_1,\dots,z_N)^\top\in\mathcal{Z}^{ }$, where $N$ indexes the number of `simulated draws' used to construct $\widehat{p}_\theta(\y\mid z)$. The use of $\widehat{p}_\theta(\y\mid z)$ within an MCMC algorithm results in a pseudo-marginal algorithm (\citealp{andrieu2009pseudo}) that targets a joint posterior over $(\theta^\top,z^\top)^\top$: 
\begin{equation*}
		\widehat\pi_{N}(\theta, z\mid \y) =\widehat{p}_\theta(\y\mid z)\pi(\theta) h(z\mid\theta),
\end{equation*}where $h(z\mid\theta)$ denotes the conditional density for the simulated data $z$. If $\int \widehat{p}_\theta(\y\mid z)h(z\mid\t)\dt z=p_\theta^{(n)}(\y)$, the marginal posterior $\widehat\pi_N(\theta\mid\y)=\int_{\mathcal{Z}} 	\widehat\pi_{N}(\theta, z\mid \y)\dt z$ agrees with the exact posterior $\pi(\theta\mid\y)$ (\citealp{andrieu2009pseudo}).  

Regardless of whether $p^{(n)}_\t$ must be estimated, when $P_0^{(n)}\notin \{P_\t^{(n)}:\t\in\Theta\}$ posterior inference is not generally `well-calibrated'. To state precisely what we mean by well-calibrated inferences, first define $\theta_\star:=\argmax_{\theta\in\Theta}\lim_n\E_{}\log \{p^{(n)}_\theta/p^{(n)}_0\}$ as the value of $\theta\in\Theta$ that minimizes the (limiting) Kullback-Liebler divergence from $P^{(n)}_0$ to $\{P^{(n)}_\theta:\theta\in\Theta\}$, and where $\E(\cdot)$ denotes the expectation operator under $P^{(n)}_0$. A credible set for $\theta_\star$ based on $\pi(\theta\mid\y)$ and having posterior probability $(1-\alpha)$  is said to be \textit{well-calibrated} if the set asymptotically contains $\theta_\star$ with $P^{(n)}_0$-probability $(1-\alpha)$. Such a definition directly extends to any function of $\theta_\star$. 

To illustrate why $\pi(\t\mid\y)$ is not well-calibrated in general, we require a few additional definitions. For a twice differentiable function $f:\Theta\rightarrow\mathbb{R}$, let $\nabla_\theta f(\theta)$ denote the gradient of $f(\theta)$, and $\nabla_{\theta\theta}^2f(\theta)$ its Hessian matrix. For $\ell_n(\theta)$ as defined in \eqref{eq:post}, let $m_n(\theta):=-\nabla_\theta\ell_n(\theta)$ and $\mathcal{H}_n(\theta):=n^{-1}\nabla_{\theta\theta}^2\ell_n(\theta)$ denote the gradient and (minus the) Hessian, respectively; while $\mathcal{H}(\theta):=\lim_n\E[\mathcal{H}_n(\theta_\star)]$ denotes the expected observed information, and $\mathcal{I}(\theta):=\lim_n\text{Cov}\{m_n(\theta)/\sqrt{n}\}$ denotes the Fisher information. 

For $\hat\theta_n$ the value of $\theta\in\Theta$ that solves $0=m_n(\theta)$, i.e., the maximum likelihood estimator, under classical regularity conditions, see \cite{white1982maximum} or \cite{kleijn2012bernstein}, it is known that
$
\sqrt{n}(\hat\theta_n-\theta_\star)\Rightarrow N(0,\Sigma^{-1}_\star)$, where $\Sigma^{-1}_\star:=\mathcal{H}(\theta_\star)^{-1} \mathcal{I}(\theta_\star)\mathcal{H}(\theta_\star)^{-1}
$,  $N(\mu,\Sigma)$ denotes the Gaussian cumulative distribution function (CDF) with mean $\mu$ and variance $\Sigma$, and $\Rightarrow$ denotes weak convergence under $P^{(n)}_0$. Furthermore, the posterior satisfies the Bernstein-von Mises result
\begin{flalign*}
P^{(n)}_0\sup_{B\in\Theta}\left| \int_{B}\left[\pi(\vartheta\mid \y)-N\{\vartheta;\hat\theta_n,(n\mathcal{H}_\star)^{-1}\}\right]\dt\vartheta\right|\rightarrow{}0,
\end{flalign*}where $N\{x;\mu,\Sigma\}$ denotes the Gaussian probability density function (PDF) with mean $\mu$ and variance $\Sigma$; 
see, e.g., Theorem 2.1 in \cite{kleijn2012bernstein}. The Bernstein-von Mises result demonstrates that the `width' of credible sets based on $\pi(\t\mid\y)$ are determined by $\mathcal{H}(\theta_\star)^{-1}$, while the first result states that (asymptotically) valid frequentist confidence sets have `width' determined by the sandwich covariance $\Sigma_\star^{-1}$. Thus, credible sets for $\theta_\star$ with posterior probability $(1-\alpha)$ will not contain $\theta_\star$ with $P^{(n)}_0$-probability $(1-\alpha)$ in general,  unless $\mathcal{H}(\t_\star)=\mathcal{I}(\t_\star)$. Consequently, posterior credible sets are not well-calibrated, and Bayesian uncertainty quantification may be too optimistic to be practically useful.  

The lack of well-calibrated credible sets in misspecified models has led researchers to consider many different approaches to `correct' this issue. However, each of the suggested approaches of which we are aware are either based on computationally onerous bootstrapping procedures, or amount to an explicit Gaussianity assumption on the posterior and the application of ex-post corrections of $\pi(\theta\mid\y)$ to correct the coverage. Section \ref{sec:discuss1} discusses several such methods.

\subsection{Reliable Bayesian Uncertainty Quantification}\label{sec:cov}

If we are willing to move away from conducting inference using the `exact' posterior to a posterior based on a certain approximate likelihood that we will propose, then we can produce Bayesian credible sets that are asymptotically well-calibrated. To motivate this approach, consider the artificially simple case where the observed data are $Y_1,\dots, Y_n\stackrel{iid}{\sim} P^{ }_0$, with $Y_i\in\mathbb{R}^d$, and we wish to conduct inference on the unknown mean $\E[Y_i]=\theta_\star$.  We have meaningful prior beliefs $\pi(\theta)$ about the unknown $\theta_\star$, and our goal is to conduct posterior inference given $\pi(\theta)$ and  $Y_1,\dots,Y_n$. 

While it is possible to conduct nonparametric Bayesian inference on $\theta$, this seems overly-complicated machinery for such a simple task. A simpler approach is to notice that even though $P_0$ is unknown, the sample average $\overline{Y}=n^{-1}\sum_{i=1}^{n}Y_i$ can be used as a statistic to conduct Bayesian inference on $\theta$. Following the synthetic likelihood (SL) approach proposed by \cite{wood2010statistical}, see also \cite{price2018bayesian} and \cite{frazier2022bayesian}, even though $P^{ }_0$ is unknown we can approximate the distribution of $\overline{Y}\mid\t$ using a known distribution, with this approximation then used as our likelihood to produce posterior inference. 

For general summary statistics $S_n$, \cite{wood2010statistical} suggests approximating the distribution $S_n\mid\t$ using $N\{S_n;b(\theta),W_n(\theta)/n\}$ where $b(\theta)$, and $W_n(\theta)/n$ are the mean and (scaled) variance of $S_n\mid\t$ under the assumed model $P^{(n)}_\t$.\footnote{In the case of the summaries $\overline{Y}$,  SL suggests conducting Bayesian inference on $\theta$ using the approximate likelihood $N\{\overline{Y};\theta,W_n(\theta)/n\}$ where $W_n(\theta)$ is an estimator of $\text{Cov}(\sqrt{n}\overline{Y})$.} 
While Bayesian SL (BSL) is most commonly applied to intractable likelihood problems, BSL can be applied in any situation where we wish to conduct posterior inference on summaries rather than the full dataset; for further discussion on this point see \cite{drovandi2021Rbayesian}. As highlighted by \cite{lewis2021bayesian}, in misspecified models it may be particularly beneficial to condition posterior inferences not on the entire sample, encapsulated via $\ell_n(\theta)$, but on summary statistics that are `robust' to model misspecification. Indeed, when the model is misspecified the exact posterior is not `robust' from the standpoint of uncertainty quantification, and it may instead be feasible to produce Bayesian inferences based on a vector of summaries that ensure our posterior inference for $\theta$ is `robust' to this particular form of misspecification. 

Noting that the average score  $n^{-1}m_n(\theta)$ is simply a summary statistic whose mean is zero under the assumed model (under weak regularity conditions),\footnote{For $P^{(n)}_\theta$ denoting the assumed model,  $P^{(n)}_\theta m_n(\theta) =0$ for all $\theta\in\Theta$, so long as $\nabla_\theta P^{(n)}_\theta\ell_n(\theta)= P^{(n)}_\theta \nabla_\theta\ell_n(\theta)$, for all $\theta\in\Theta$. {We recall that the exchange of integration and differentiation is valid under the following conditions: 1)$\ell_n(\theta)$ is an integrable function of $\y$ with respect to $P^{(n)}_\t$; 2) For almost all $\y\in\mathcal{Y}^{(n)}$, $\nabla_\t\ell_n(\theta)$ exists for all $\theta\in\Theta$; 3) There is an integrable function $g(\y)$ such that $|\nabla_\t\ell_n(\t)|\le g(\y)$ for all $\theta\in\Theta$ and almost all $\y\in\mathcal{Y}^{(n)}$.}} we can apply the SL logic to this setting by constructing a matrix $W_{n}(\theta)$ that is a consistent estimator of $\text{Cov}\{m_n(\theta)/\sqrt{n}\}$. In particular, viewing $n^{-1}m_n(\theta)$ as a centred summary statistic, we can follow \cite{wood2010statistical} and approximate the distribution of $n^{-1}m_n(\theta)\mid\t$ using a mean-zero Gaussian distribution with variance $n^{-1}W_n(\theta)$, which produces the following BSL posterior
\begin{equation}\label{eq:genpost}
	\pi^Q_{n}(\theta)=\frac{M_n(\theta)^{-1/2}\exp\{- Q_n(\theta)\}\pi(\theta)}{\int_\Theta M_n(\theta)^{-1/2}\exp\{- Q_n(\theta)\}\pi(\theta)\dt\theta},	\; Q_n(\theta)=\frac{1}{2}\frac{m_n(\theta)^\top}{\sqrt{n}}W_{n}(\theta)^{-1}\frac{m_n(\theta)}{\sqrt{n}},
\end{equation}	
where $M_n(\theta)=|W_{n}(\theta)|$, and where the notation $\pi^Q_n(\theta)$ encodes the posteriors dependence on $Q_n(\theta)$.\footnote{This distributional approximation takes $n^{-1}m_n(\theta)|\t$ as mean-zero Gaussian with variance $n^{-1}W_n(\theta)$. The term $Q_n(\theta)$ in \eqref{eq:genpost} follows by re-arranging $\{n^{-1}m_n(\theta)\}^\top \{n^{-1}W_n(\theta)\}^{-1}\{n^{-1}m_n(\theta)\}=\{n^{-1/2}m_n(\theta)\}^\top W_n(\theta)^{-1}\{n^{-1/2}m_n(\theta)\}$.}  Critically, unlike standard BSL methods, the posterior in \eqref{eq:genpost} will not lead to a loss in information since, asymptotically, the information in $\ell_n(\theta)$ is the same as in $m_n(\theta)$. 

 While BSL is often applied to intractable models, in our case the exact posterior $\pi(\t\mid\y)$ is tractable, but we replace the information in the full sample  with that in the statistic $W_n(\theta)^{-1/2}m_n(\theta)$ in the hopes that this statistic delivers posterior inferences that are robust to the model misspecification. The posterior  $\pi^Q_{n}(\theta)$ amounts to replacing the usual likelihood $\ell_n(\theta)$, with the information in $Q_n(\theta)$, which resembles a type of (random) quadratic form in the scores $m_n(\t)$. Hence, to differentiate the usual BSL posterior from the posterior in \eqref{eq:genpost}, throughout the remainder we refer to $\pi^Q_n(\theta)$ as a Q-posterior.

	While $Q_n(\theta)$ resembles a quadratic approximation of the log-likelihood, this does not imply that $\pi^Q_n(\theta)$ will resemble a Gaussian distribution.  More generally, $\pi^Q_n(\theta)$ remains meaningful when the parameters are defined over a restricted space, and when the posterior places mass near the boundary of the support. In such cases, a Gaussian approximation to the posterior will not be particularly meaningful. In this way, the Q-posterior differs from existing approaches used to produce well-calibrated Bayesian inference, as it does not rely on post-processing of the accepted draws, or on additional replications of the sampling algorithm. As argued more explicitly in Section \ref{sec:discuss}, post-processing steps are generally based on an implicit, or explicit, normality assumption and thus are not meaningful in cases where the parameter has restricted support. In addition, unlike frequentest methods that seek to correctly quantify uncertainty, the Q-posterior correctly quantifies uncertainty without needing to estimate the second derivative matrix $\mathcal{H}_n(\theta)$.

The following remarks relate the Q-posterior with several existing approaches in the literature. 
\begin{remark}\normalfont The posterior $\pi^Q_n(\theta)$ resembles a form of the generalized posterior of \cite{bissiri2016general} under a (weighted) quadratic loss. In particular, if we have $m_i(\theta)=\nabla_\theta \log p(y_i\mid\theta)$,  $m_{n,i}(\theta)=\{W_{n}(\theta)/n\}^{-1/2}m_i(\theta)$, and $q_{n,i}(\theta)=-\frac{1}{2}\|m_{n,i}(\theta)\|^2$, we have
$
\exp\{-Q_n(\theta)\}=\exp\{\sum_{i=1}^{n}q_{n,i}(\theta)\}$	which resembles the generalized posterior based on a (weighted) quadratic loss function. However, our approach is motivated by attempting to produce posteriors with appropriate uncertainty quantification, and not the use of other loss functions within Bayesian inference (see Section \ref{sec:general} for discussion). Moreover, unlike existing generalized Bayes methods the Q-posterior does not require the difficult choice of tuning constant, which can greatly impact the posterior uncertainty quantification produced using generalized Bayesian methods. 
\end{remark}
\begin{remark}\normalfont 
\cite{chernozhukov2003mcmc} propose a type of generalized Bayesian inference based on either a set of over-identified estimation equations for $\theta$, more equations than unknown parameters, by taking a quadratic form of a vector of sample moments (see, also, \citealp{chib2018bayesian}), or by replacing the likelihood altogether with an $M$-estimator criterion; the latter is also used in a decision theoretic framework by \citet{bissiri2016general} to produce their generalized posterior. However, neither approach covers the case where the loss function defining the posterior is based on the score equations from the likelihood. Philosophically, the approaches of \cite{chernozhukov2003mcmc} and \cite{bissiri2016general} are based on conducting a form of Bayesian inference \textit{by bypassing the likelihood}, and not on conducting Bayesian inference using the likelihood in misspecified models. Therefore, from a philosophical standpoint, the two are distinct. {Structurally, the two are also distinct since the posterior in \eqref{eq:genpost} and that proposed by the aforementioned authors are of different, but related, forms.}
\end{remark}

\begin{example}[Linear Regression]\label{ex:reg}
	\normalfont
Consider the standard linear regression model
$$
y_{i}=x_i^\top\beta+\sigma\epsilon_i,\;(i=1,\dots,n),
$$where the error distribution is assumed to be $\epsilon_i\stackrel{iid}{\sim} N(0,1)$, and where $x_i$ and $\beta$ are $d_\beta$-dimensional vectors. For simplicity, we consider diffuse priors for these parameters: $\pi(\beta)\propto1$ and $\pi(\sigma)\propto (\sigma^2)^{-2}$. While we believe the model specification for the regression components to be correct, we are uncertain about the error specification. The true data generating process produces observed data under the following specification for the error term:
$
\epsilon_i\stackrel{iid}{\sim}N(0,\sqrt{1+|x_{1,i}|^\gamma}^2),
$ where $\gamma\ge0$, and $x_{1,i}$ denotes the first element of the vector $x_i$. When $\gamma=0$ the assumed model with homoskedastic errors is correctly specified, whereas if $\gamma\ne0$, the assumed model is misspecified and the exact posterior will likely produce credible sets that are too narrow. 

We consider generating $n=100$ observations from the model under the designs of $\gamma=0$, and $\gamma=2$;  $x_i$ is generated as tri-variate independent standard Gaussian random vectors, and we set the true values  as $\beta =(1,1,1)^\top$, and $\sigma =1$. When the model is misspecified it can be shown that the pseudo-true value of $\beta$ is unchanged and so we focus in this example only on inferences for $\beta$. We replicate this design 1000 times, and for each dataset sample the exact posteriors using Gibbs sampling, and the Q-posterior is sampled using random walk Metropolis-Hastings (RWMH) with a Gaussian proposal kernel; both posteriors are approximated using 10000 samples with the first 5000 discarded for burn-in. We set the matrix $W_n(\theta)$ in the Q-posterior to be $W_n(\theta)=n^{-1}\sum_{i=1}^{n}\{m_i(\theta)-n^{-1}{m}_n(\theta)\}\{m_i(\theta)-n^{-1}{m}_n(\theta)\}^\top$. Across each dataset and method, we compare the posterior bias, variance and marginal coverage for the regression coefficients. 

Table \ref{tab:reg} summarizes the results, and demonstrates that the Q-posterior produces results that are similarly located to exact Bayes, but which have larger posterior variance under both regimes. This larger posterior variance produces coverage rates that are close to the nominal 95\% level, and much more so than under the exact posterior. Even in the homoskedastic regime ($\gamma=0$), the exact posterior has converge rates around 85\%, while in the heteroskedastic regime ($\gamma=2$), the lowest level of coverage is less than 70\%.

	\begin{table}[H]
	\centering
	{\footnotesize
		\begin{tabular}{lrrrrrr}
			\hline\hline
			\textbf{$\gamma=0$} & \multicolumn{3}{c}{Q-posterior} & \multicolumn{3}{c}{Exact} \\
			\textbf{} & Bias & Var &Cov& Bias & Var &Cov \\ \hline
$\beta_1$ &-0.0009  &  0.0395 &   0.9720  & -0.0110 &   0.0104 &   0.8380\\
$\beta_2$ & 0.0188  &  0.0388  &  0.9680  & -0.0049 &   0.0103 &   0.8370\\
$\beta_3$ &-0.0212 &   0.0408  &  0.9710  & -0.0103 &   0.0105 &   0.8540\\\hline\hline
			\textbf{$\gamma=2$}& \multicolumn{3}{c}{Q-posterior} & \multicolumn{3}{c}{Exact}   \\ 
			\textbf{} & Bias & Var & Cov& Bias & Var &Cov \\ \hline
$\beta_1$ & -0.0072 &   0.0520  &  0.9590 &  -0.0215 &   0.0104 &   0.6860\\
$\beta_2$ & -0.0265  &  0.0343  &  0.9660 &  -0.0196  &  0.0104  &  0.8380\\
$\beta_3$ & -0.0027 &   0.0277  &  0.9680 &  -0.0082 &   0.0104 &   0.8440\\\hline\hline
	\end{tabular}}
	\caption{Posterior accuracy results in the normal linear regression model for the exact and Q-posterior.  Bias is the bias of the posterior mean across the replications. Var is the average posterior variance deviation across the replications. Cov is the posterior coverage. The nominal level is set to be 95\% for the experiments. }
	\label{tab:reg}
\end{table}

\end{example}

\subsection{Conjugacy for Exponential Family Models in Natural Form}
Consider the case where $Y_1,\dots,Y_n\stackrel{iid}{\sim}p_\t(y)=\exp\{\eta(\theta)^\top S(y)-A(\t)\}h(y),
$ where $S:\mathcal{Y}\rightarrow\mathbb{R}^{d_\t}$ is a vector of sufficient statistics,  $h:\mathcal{Y}\rightarrow \mathbb{R}$ a reference measure on $\mathcal{Y}$, and $A:\Theta\rightarrow\mathbb{R}$ the log-partition function. Then, the joint density $p_\t^{(n)}(\y)$ takes the form 
$$
p_\t^{(n)}(\y)=\exp\left\{\eta(\theta)^\top \sum_{i=1}^{n}S(y_i)-nA(\t)\right\}\prod_{i=1}^{n}h(y_i),
$$where $
A(\theta)=\log \left[\int \exp \left\{\eta(\theta)^\top S(x)\right\} h(x) \dt \mu(x)\right]
$.

Consider that our goal is inference on the natural parameters $\eta=\eta(\t)$. Then, we can show that the Q-posterior has a closed-form expression so long as our prior for $\eta$ is conjugate. In particular, 
it is simple to see that the above model has average scores $n^{-1}m_n(\eta)=\nabla_\eta A(\eta)-n^{-1}S_n$, where $S_n=\sum_{i=1}^{n}S(y_i)$. Since $A(\eta)$ is non-random, the variance of $n^{-1}m_n(\eta)$ can be estimated using the sample variance
$
W_n:=\frac{1}{n}\sum_{i=1}^{n}\left\{S(y_i)-n^{-1}S_n\right\} \left\{S(y_i)-n^{-1}S_n\right\} ^\top,
$ which does not depend on $\eta$.\footnote{More generally, any consistent estimator of $\text{Var}\{S(y_i)\}$ could also be used.} One could then consider inference on $\eta$ using $\pi^Q_n(\eta)\propto \exp\{-Q_n(\eta)\}\pi(\eta)$, where
$
Q_n(\eta)=\frac{n}{2}\left\{\nabla_\eta A(\eta)-n^{-1}S_n\right\}^\top W_n^{-1}\{\nabla_\eta A(\eta)-n^{-1}S_n\}.
$

Define the mean form parameter $\mu$ by the function $\mu=g(\eta)=\nabla_\eta A(\eta)$. In regular models the function $g(\eta)$ exists and is invertible for all $\eta$. The parameter $\mu=\mu(\eta)$ is referred to as the mean parameterization of the model and satisfies $\mu=\E_{Y\sim p^{(n)}_\t(\y) }[S_n]$. The form of the Q-posterior for $\eta$ then follows by finding the Q-posterior for $\mu$, and using the change of variables formula. 

\begin{lemma}\label{lem:conjugate}
Consider that $\mathcal{Y}=\mathbb{R}^d$. If our (transformed) prior beliefs for the mean parameter $\mu=g(\eta)$ can be written as $\pi(\mu)\propto \exp\{-\frac{1}{2}(\mu-\mu_0)^\top W_0^{-1}(\mu-\mu_0)\}$, then the Q-posterior for $\eta$ is
$
\pi^Q_n(\eta)=N\{g(\eta);b_n,\Sigma^{-1}_n\}|\nabla_\eta^2 A(\eta)|
$, where 
\begin{flalign*}
\Sigma^{-1}_n&=n^{-1}W_0\left[n^{-1}W_n+W_0\right]^{-1}W_n,\\ b_n&=W_0\left[n^{-1}W_n+W_0\right]^{-1}\frac{1}{n}\sum_{i=1}^{n}S(y_i)+W_0\left[n^{-1}W_n+W_0\right]^{-1}\frac{\mu_0}{n}.	
\end{flalign*}
\end{lemma}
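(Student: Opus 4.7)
The plan is to exploit standard Gaussian--Gaussian conjugacy after a change of variables from $\eta$ to the mean parameterisation $\mu = g(\eta) = \nabla_\eta A(\eta)$. First I would simplify the ingredients entering $\pi^Q_n(\eta)$. Since $\ell_n(\eta) = -\eta^\top S_n + n A(\eta) - \sum_{i=1}^n \log h(y_i)$, the centred summary is $n^{-1} m_n(\eta) = -g(\eta) + n^{-1} S_n$, so $Q_n(\eta) = \tfrac{n}{2}(g(\eta) - n^{-1} S_n)^\top W_n^{-1}(g(\eta) - n^{-1} S_n)$. Crucially, $W_n$ does not depend on $\eta$, so $M_n(\eta) = |W_n|$ is constant and absorbs into the normalising constant of the Q-posterior.

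Second, I would push the stated Gaussian prior $\pi(\mu) \propto \exp\{-\tfrac{1}{2}(\mu-\mu_0)^\top W_0^{-1}(\mu-\mu_0)\}$ forward to $\eta$. By regularity of the exponential family, $g$ is a smooth bijection with Jacobian $\nabla_\eta g(\eta) = \nabla_\eta^2 A(\eta)$, so the induced prior on $\eta$ is $\pi(\eta) \propto \pi(g(\eta))\, |\nabla_\eta^2 A(\eta)|$. Substituting this and the simplified $Q_n$ into \eqref{eq:genpost} yields
\begin{equation*}
\pi^Q_n(\eta) \propto \exp\bigl\{-\tfrac{n}{2}(g(\eta) - n^{-1}S_n)^\top W_n^{-1}(g(\eta) - n^{-1}S_n) - \tfrac{1}{2}(g(\eta)-\mu_0)^\top W_0^{-1}(g(\eta)-\mu_0)\bigr\}\, |\nabla_\eta^2 A(\eta)|.
\end{equation*}
Viewing the exponent as a quadratic in $\mu = g(\eta)$, I would complete the square to rewrite it as $-\tfrac{1}{2}(\mu - b_n)^\top \Sigma_n (\mu - b_n)$ up to a $\mu$-free additive constant, where $\Sigma_n = n W_n^{-1} + W_0^{-1}$ is the posterior precision and $b_n = \Sigma_n^{-1}(n W_n^{-1} \cdot n^{-1} \sum_i S(y_i) + W_0^{-1} \mu_0)$ is the posterior mean.

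Finally, I would translate these expressions into the form stated in the lemma using the identity $(A^{-1} + B^{-1})^{-1} = A(A+B)^{-1} B = B(A+B)^{-1} A$ with $A = n^{-1} W_n$ and $B = W_0$. This immediately gives $\Sigma_n^{-1} = n^{-1} W_0(n^{-1} W_n + W_0)^{-1} W_n$ and recasts $b_n$ as the claimed weighted combination of $n^{-1}\sum_i S(y_i)$ and $\mu_0$. Reassembling, $\pi^Q_n(\eta) \propto N(g(\eta); b_n, \Sigma_n^{-1})\, |\nabla_\eta^2 A(\eta)|$, and since the right-hand side integrates to one in $\eta$ by the change-of-variables formula, the proportionality is an equality. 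I do not expect any serious obstacle here: the result is essentially a one-line Gaussian conjugacy calculation, and the only care required is the bookkeeping with the Jacobian $|\nabla_\eta^2 A(\eta)|$ and ensuring that every $\eta$-independent factor (including $M_n$ and the constant produced by completing the square) is correctly absorbed into the normalising constant.
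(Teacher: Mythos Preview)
Your proposal is correct and follows essentially the same route as the paper: both arguments recognise that $Q_n$ is a Gaussian kernel in $\mu=g(\eta)$ with $W_n$ constant, complete the square against the Gaussian prior on $\mu$, and then invoke the change-of-variables Jacobian $|\nabla_\eta^2 A(\eta)|$. The only cosmetic difference is that the paper first derives $\pi_n^Q(\mu)=N\{\mu;b_n,\Sigma_n^{-1}\}$ and then transforms to $\eta$, whereas you transform the prior to $\eta$ first and carry the Jacobian through the calculation; the algebra and the use of the Woodbury-type identity for $\Sigma_n^{-1}$ are identical.
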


Lemma \ref{lem:conjugate} demonstrates that for exponential families in normal form, the Q-posterior for the natural parameters is a transformed Gaussian density.\footnote{So long as the (transformed) prior under the mean parameterization, $\pi(\mu)$, can be written as a Gaussian kernel.} Interestingly, calculation of $|\nabla_\eta^2 A(\eta)|$ can be entirely avoided by first sampling $\widetilde\mu\sim N\{\mu;b_n,\Sigma_n^{-1}\}$, and then (numerically) inverting the equation $\widetilde\mu=g(\eta)$ to obtain the draw $\widetilde\eta$. The latter can be carried out in any case where $g(\eta)=\nabla_\eta A(\eta)$ can be reliably calculated. 


\section{Estimated Likelihoods}\label{sec:ext}
The Q-posterior $\pi_n^Q(\theta)$ in \eqref{eq:genpost} requires that $m_n(\theta)=\nabla_\theta\ell_n(\theta)$, and $W_n(\theta)$ are known up to $\theta$. In many interesting cases, however, the observed data likelihood is only expressible as
$$
p_\theta^{(n)}(\y)=\int_{\mathcal{A}}p_\theta^{(n)}(\y,\a)\dt\a,
$$where $p_\theta^{(n)}(\y,\a)$ is called the complete-data likelihood, $\a\sim p(\a\mid\theta)$, and $\alpha_i\in\mathcal{A}\subseteq\mathbb{R}^{d_\alpha}$ ($i=1,\dots,n$) are unobservable. In such cases, $p_\theta^{(n)}(\y)$ is often intractable, and we cannot obtain analytic, or computationally tractable, forms for $m_n(\theta)$ and $W_n(\theta)$. Further, while it may be feasible to obtain an estimator of the likelihood, say $\widehat{p}_\theta(\y\mid z)$, differentiating $\widehat{p}_\theta(\y\mid z)$ to obtain an estimator for $m_n(\theta)$ may be undesirable; in many cases, $\widehat{p}_\theta(\y\mid z)$ may not be differentiable in $\theta$, as is the case with particle filtering methods. Consequently, we must somehow approximate $m_n(\theta)$ to extend the Q-posterior to cases where $p_\theta^{(n)}(\y)$ must be estimated.

Fisher's identity (\citealp{cappe2005springer}) is a relationship between the gradient of the observed log-likelihood, $\log p_\theta^{(n)}(\y)$, and the complete data log-likelihood $p_\theta(\y,\a)$, which allows us to  express $m_n(\theta)=\nabla_\theta \log p_\theta^{(n)} (y_{1:n})$ as
\begin{equation*}
	m_n(\theta)=-\nabla_\theta \ell_n(\theta)=\int \left\{\nabla_\theta\log 	p_{\theta}\left(\y, \a\right)\right\} p_\theta(\a\mid y_{1:n})\dt \a,
\end{equation*}where $p_\theta(\a\mid\y)$ is the posterior for $\a$ conditional on $\theta$. Given draws $\a^{(j)}\stackrel{iid}{\sim}p_\t(\a\mid y_{1:n})$ ($j=1,\dots,N$), we can estimate $m_n(\theta)$ using the simple Monte Carlo estimator
$$
\widehat{m}_n(\theta;z):=\frac{1}{N}\sum_{j=1}^{N}\nabla_\theta \log p_\theta(\y,\a^{(j)}),\quad \a^{(j)}\sim p_\theta(\a\mid\y),
$$ where, again, $z$ denotes all simulated variables necessary to construct $\widehat{m}_n(\theta;z)$.\footnote{For notational simplicity, we disregard the estimators dependence on $N$, as we will later take $N$ as a function of $n$.} All we require to obtain $\widehat{m}_n(\theta;z)$, and by extension an estimator of $W_n(\t)$, is that we can generate samples from $p_\t(\a\mid\y)$.\footnote{While more efficient estimators of $m_n(\theta)$ can be obtained, such as those based on importance sampling, the simple Monte Carlo estimator is both effective and theoretically convenient to analyze. We conjecture that more efficient estimators can be used, but leave a formal study of such estimators for future research.}

The ability to generate draws from $p_\t(\a\mid\y)$ is not particularly restrictive, as we do not require that $p_\t(\a\mid\y)$ be available in closed-form. As the following examples demonstrate, this is feasible in many cases, such as, e.g., state-space models, and generalized random effects models.

\begin{example}\label{rem:fisher}\normalfont
In   state space models  the complete-data likelihood $p_\t(\a,\y)$ can often be written as
	\begin{equation*} 
		p_{\t}\left(\a, \y\right)=\mu_{}\left(\alpha_{1}\right) \prod_{t=2}^{n} f_\t\left(\alpha_{t} \mid \alpha_{t-1}\right) \prod_{t=2}^{n} g_{\t}\left(y_{t} \mid \alpha_{t}\right),
	\end{equation*}where $f_\theta(\alpha\mid\alpha')$ is the transition density of some unobservable (state) process $\{\alpha_t:t\le n\}$, $g_{\theta}\left(y_{t} \mid \alpha_{t}\right)$ is the conditional density of $y_t\mid \alpha_t$, and $\mu(\alpha)$ is the invariant measure of the states, and where the process $\{y_t:t\le n\}$ is conditionally independent given $\{\alpha_t: t\le n\}$.

	The Markovian nature of the states implies that filtering and smoothing methods can be used to produce draws from $p_\t(\a\mid\y)$. In addition, there are many situations, such as linear-Gaussian models and some stochastic volatility models, where draws from  $p_\t(\a\mid\y)$ can be obtained using Gibbs or MH sampling algorithms. Regardless of how draws from $p_\t(\a\mid\y)$ are obtained, $m_n(\theta)$ can be estimated using
	\begin{flalign}\label{eq:grad_est}
	\widehat{m}_n(\theta;z)&=N^{-1}\sum_{j=1}^{N}\sum_{t=2}^{n}\nabla_\theta \log g_\t(y_t\mid \alpha^{(j)}_t)+N^{-1}\sum_{j=1}^{N}\sum_{t=2}^{n}\nabla_\theta \log f_\t(\alpha_t^{(j)}\mid \alpha^{(j)}_{t-1})\nonumber,
\end{flalign} for $\a^{(j)}\sim p_\t(\a\mid\y)$. Further, given values of $\alpha$ and $\alpha'$, $\nabla\log f_\theta(\alpha\mid\alpha')$ and $\nabla_\theta\log g_\theta(y\mid \alpha)$ can often be calculated in closed-form.
\end{example}
\begin{example}\label{rem:IS} \normalfont
Random effects models are ubiquitous in the analysis of clustered and longitudinal data. In such models, the outcome variable, $y_{i,j}$, $i=1,\dots,n$ and $j=1,\dots,J$, is influenced by a vector of covariates $x_{i,j}$ and an unobservable random effect $\alpha_i$. In most cases, conditional on a scale parameter $\tau$, the random effects are assumed to be iid, so that their joint distribution is $f_\tau(\a)=\prod_{i=1}^{n}f_\tau(\alpha_i)$, with $f_\tau(\cdot)$ a mean-zero distribution, known up to $\tau$. Conditional on $\alpha_{1:n}$ and $x_{1:n,j}$, the distribution of the outcomes is $g_\theta(y_{1:n,j}\mid \a, x_{1:n,j})=\prod_{i=1}^{n}g_\theta(y_{i,j}\mid \alpha_i, x_{i,j})$, where $g_\t(y_{i,j}\mid \alpha_i, x_{i,j})$ is a member of the exponential family that depends on the linear index $\eta_{i,j}=\alpha_i+x_{i,j}^\top\beta$, and where $\beta$ is a vector of regression parameters, so that $\theta=(\beta^\top,\tau)^\top$. The complete-data likelihood again has a product form, and given posterior draws from $\a\mid y_{1:n,1;J},x_{1:n,1:J},\theta$, the estimator $\widehat{m}_n(\theta;z)$ can be constructed as before.

\end{example}
Fisher's identity and simple Monte Carlo allow us to approximate $m_n(\theta)$ by averaging $N$ draws from $p_\theta(\a\mid\y)$, where, for each $j=1,\dots,N$, the $j$-th draw depends on random variables $z_j$. Concatenating all such random variables as $z$, we denote the simulated estimator of the score equations, $m_n(\theta)$, as $\widehat{m}_n(\theta;z)$, and let $\widehat{W}_n(\theta;z)$ denote an estimator of the variance of $\widehat{m}_n(\t;z)$. These functions dependence on $z$ is similar to the pseudo-marginal literature, where the likelihood estimator $\widehat{p}_\theta(\y\mid z)$ is also viewed as dependent on simulated variables $z$.

Given $\widehat{m}_n(\theta;z)$, and $\widehat{W}_n(\theta;z)$, posterior inferences for $\theta$ can proceed using, e.g., a RWMH-MCMC scheme that replaces the intractable $Q_n(\theta)$ with the estimated version
$$
\widehat{Q}_n(\theta;z)=\frac{1}{2}\widehat{m}_n(\theta;z)^\top \widehat{W}_n(\theta;z)^{-1} \widehat{m}_n(\theta;z).
$$ See Algorithm \ref{alg1}. 
\begin{algorithm}[h]
	\caption{Pseudo-marginal MH-MCMC with Q function (MH-MCMC-Q)}
	\label{alg1}
	\begin{algorithmic}
		\STATE Initialize $\theta^{(0)}$ and ${z}^{(0)}$.
		\FOR{$i=1,\dots,M$}	
		\STATE Draw $\theta^\ast\sim q(\theta|\theta^{i-1})$
		\STATE Draw $z^\ast$ from the proposal $h(z\mid\theta^\ast)$
		\STATE Compute  $V^\ast=\widehat{g}_n(\theta^\ast;z^\ast):=\widehat{M}_n(\theta;z)^{-1/2}\exp\{-\widehat{Q}_n(\theta^\ast;z^\ast)\}$ and $V^{i-1}=\widehat{g}_n(\theta^{i-1};z^{i-1})$
		\STATE Compute the Metropolis-Hastings ratio:
		$
		r = \alpha(\theta^\ast,\theta^{(i-1)})=\frac{V^\ast\pi(\theta^\ast)q(\theta^{i-1}|\theta^\ast)}{V^{i-1}\pi(\theta^{i-1})q(\theta^\ast|\theta^{i-1})}
		$
		\IF{$\mathcal{U}(0,1)<r$}
		\STATE Set $\theta^i=\theta^\ast$, $z^i= z^\ast$	
		\ELSE
		\STATE Set $\theta^i=\theta^{i-1}$, $z^i=z^{i-1}$	
		\ENDIF
		\ENDFOR 
	\end{algorithmic}	
\end{algorithm}

\subsection{Posterior Target, and Behavior in Large Samples}\label{sec:large}
 The need to estimate $Q_n(\theta)$, via $\widehat{Q}_n(\theta;z)$, and the replacement of the intractable ``likelihood term'' $\exp\{-Q_n(\theta)\}$ within the MCMC algorithm by $\exp\{-\widehat{Q}_n(\theta;z)\}$ results in an algorithm that does not deliver draws from $\pi^Q_n(\theta)$, the posterior we would like to target if $m_n(\theta)$ were tractable. Following the results of \cite{andrieu2009pseudo}, Algorithm \ref{alg1} can be viewed as an exact MH algorithm on $\Theta\times\mathcal{Z}$ with stationary distribution 
$$
\overline{\pi}^Q_{n}(\theta,z)\propto {\widehat{g}_{n}(\theta;z)\pi(\theta)h(z\mid\theta)},\quad \widehat{g}_{n}(\theta;z):=\widehat{M}_n(\theta;z)^{-1/2}\exp\{-\widehat{Q}_n(\theta;z)\},
$$where $\widehat{M}_n(\theta;z)=|\widehat{W}_n(\t;z)|$. Therefore, the marginal posterior for $\theta$ from Algorithm \ref{alg1} is
$$
\overline{\pi}^Q_{n}(\theta)\propto \int_{\mathcal{Z}}\widehat{g}_{n}(\theta;z)\pi(\theta)h(z\mid\theta)\dt z. $$
Unlike standard pseudo-marginal methods, $\widehat{g}_n(\theta;z)$ is a biased estimator of the posterior kernel $M_n(\theta)^{-1/2}\exp\{-Q_n(\theta)\}$. Consequently, the use of  $\widehat{g}_n(\theta;z)$ within an MCMC algorithm results in a posterior that does not agree with $\pi^Q_n(\theta)$ in \eqref{eq:genpost}. 

The posterior $\overline{\pi}^Q_{n}(\theta)$ is only expressible as an intractable integral, so that deriving anything concrete about its behavior for fixed $n$ and $N$ seems infeasible. However, so long as $N\rightarrow+\infty$ we can ensure that $\overline{\pi}^Q_{n}(\theta)$ is uniformly close to $\pi^Q_n(\theta)$, as $N\rightarrow+\infty$ for any $n\ge1$.  

\subsubsection{Main Result}
To control the behavior of $\overline\pi^Q_n(\theta)$ we must control the behavior of $\widehat{Q}_n(\theta;z)$. 
We achieve this control  under certain moment assumptions on $\widehat{m}_n(\theta;z)$ and $\widehat{W}_n(\theta;z)$.

\begin{assumption}\label{ass:tails}There exists a function $\sigma_n(\theta):\Theta\rightarrow\mathbb{R}_{+}$ such that the following are satisfied for all $n\ge1$ and for $j=2,4$. 
	\begin{enumerate}
		\item 	$\E_{z}[\|\{\widehat{m}_n(\theta;z)-m_n(\theta)\}/\sqrt{n}\|^{j}]\le \sigma^j_n(\theta)/N$. 
		\item 
$\E_{z}[\|\widehat{W}_n(\theta;z)^{}-W_n(\theta)^{}\|^{j}]\le \sigma^j_n(\theta)/N$.
		\item  For $\|\cdot\|$ an appropriate matrix norm, some $C>0$, $N$ large enough, and all $\theta\in\Theta$, $\|W_n(\theta)^{-1}\{\widehat{W}_n(\theta;z)-W_n(\theta)\}\|\le C/N$, and   $\sup_{\theta\in\Theta}\|\sigma_n^2(\theta)W_n(\theta)^{-1}\|<C$. 
	\end{enumerate}

\end{assumption}

Assumption \ref{ass:tails} restricts the behavior of the estimators used for $m_n(\theta)$ and $W_n(\theta)$. Assumption \ref{ass:tails}(1-2) require that the simulation-based estimators have second moments in $z$ that are bounded by some function $\sigma^{2}_n(\theta)$. Assumption \ref{ass:tails}(3) gives a relationship between the infeasible $W_n(\theta)$, i.e., covariance of the scores, and the function that bounds their moments, $\sigma^2_n(\t)$. Such an assumption is similar to those required when carrying out perturbation analyses for the solution of linear systems (Chapter 5.8 in \citealp{horn2012matrix}). Together, these  restrictions allow us to upper bound the mean and variance of $\widehat{Q}_n(\theta;z)$. As we are unaware of a general result that bounds the mean and variance of a quadratic form with a random weighting matrix, we provide such a general result as Lemma \ref{lem:LIE} in Appendix \ref{app:Lemmas}, which may itself be of independent interest. 

\begin{theorem}\label{thm:main}For all $n\ge1$ assume that $\pi_n^Q(\theta)$ exists, and for some known function $\varphi:\Theta\rightarrow\mathbb{R}^{d_\varphi}$, assume that $\int_\Theta\|\varphi(\theta)\|\pi_n^Q(\theta)\dt\t<\infty$, and 	$\int_\Theta \|\varphi(\theta)\|\sigma^4_n(\theta)\pi(\theta)\dt\theta<\infty$. If Assumption \ref{ass:tails} is satisfied, then as $N\rightarrow+\infty$
	$$
	\int_\Theta |\pi^Q_n(\theta)-\overline\pi^Q_n(\theta)|\dt\t=O(1/N),\text{ and }	\left\|\int_\Theta \varphi(\theta)\pi^Q_n(\theta)\dt\t-\int_\Theta\varphi(\theta)\overline\pi^Q_n(\theta)\dt\t\right\|=O(1/N).
	$$  
\end{theorem}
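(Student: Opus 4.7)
The plan is to reduce the posterior total-variation bound to a pointwise bias bound on the estimated kernel, control that bias via a second-order Taylor expansion, and then invoke Lemma \ref{lem:LIE} together with the moment controls in Assumption \ref{ass:tails} to obtain the rate $O(1/N)$. Since $\overline{\pi}^Q_n(\theta) \propto \overline{g}_n(\theta)\pi(\theta)$, where $\overline{g}_n(\theta) := \E_z[\widehat{g}_n(\theta;z)]$, and $\pi^Q_n(\theta) \propto g_n(\theta)\pi(\theta)$ with $g_n(\theta) := M_n(\theta)^{-1/2}\exp\{-Q_n(\theta)\}$, the standard normalization inequality gives
\[
\int |\pi^Q_n - \overline{\pi}^Q_n|\,\dt\theta \;\leq\; \frac{2}{Z^Q} \int |g_n(\theta) - \overline{g}_n(\theta)|\, \pi(\theta)\,\dt\theta,
\]
where $Z^Q = \int g_n\pi\,\dt\theta$. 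A parallel inequality with $\|\varphi(\theta)\|$ inserted, plus a numerator/normalizer splitting, will handle the second claim.

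The heart of the argument is the pointwise bound $|\overline{g}_n(\theta) - g_n(\theta)| \leq C\, g_n(\theta)\,\sigma_n^4(\theta)/N$. First I would absorb the determinant ratio using Assumption \ref{ass:tails}(3): the deterministic $O(1/N)$ bound on $W_n(\theta)^{-1}\{\widehat{W}_n(\theta;z)-W_n(\theta)\}$ gives $\widehat{M}_n(\theta;z)^{-1/2} = M_n(\theta)^{-1/2}(1 + O(1/N))$ uniformly in $z$, reducing matters to controlling $\E_z[\exp\{-\widehat{Q}_n(\theta;z)\}] - \exp\{-Q_n(\theta)\}$. Setting $\Delta := \widehat{Q}_n(\theta;z) - Q_n(\theta)$ and expanding $e^{-\Delta} = 1 - \Delta + \tfrac12 \Delta^2 + R(\Delta)$, I would apply Lemma \ref{lem:LIE} to the random quadratic form, combined with the $j=2,4$ moment bounds in Assumption \ref{ass:tails}(1--2), to obtain $|\E_z[\Delta]| \leq C\sigma_n^2(\theta)/N$, $\E_z[\Delta^2] \leq C\sigma_n^4(\theta)/N$, and a comparable $O(\sigma_n^4(\theta)/N)$ bound on $\E_z|R(\Delta)|$.

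Integrating this pointwise bound against $\pi$ and using $g_n\pi = Z^Q \pi^Q_n$ yields $\int|g_n - \overline{g}_n|\pi\,\dt\theta \leq (C Z^Q/N) \int \sigma_n^4(\theta)\pi^Q_n(\theta)\,\dt\theta$, finite under the stated moment hypothesis, which delivers the first claim. For the second, the same kernel bound gives
\[
\Bigl\|\int \varphi\,\pi^Q_n\,\dt\theta - \int \varphi\,\overline{\pi}^Q_n\,\dt\theta\Bigr\| \leq \frac{C}{N}\int \|\varphi(\theta)\|\sigma_n^4(\theta)\pi^Q_n(\theta)\,\dt\theta + \frac{|\bar{Z}_n - Z^Q|}{\bar{Z}_n}\Bigl\|\int \varphi\,\pi^Q_n\,\dt\theta\Bigr\|,
\]
and both terms are $O(1/N)$ under the hypotheses $\int \|\varphi\|\sigma_n^4\pi\,\dt\theta<\infty$ and $\int \|\varphi\|\pi^Q_n\,\dt\theta<\infty$.

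The main obstacle will be controlling the Taylor remainder $\E_z[R(\Delta)]$ without a uniform bound on $\Delta$. The saving grace is that $\exp\{-\widehat{Q}_n\}\in[0,1]$, so only one tail of $\Delta$ is dangerous; the other is controlled by converting the fourth-moment bounds on $\widehat{m}_n$ and $\widehat{W}_n$ into a fourth-moment bound on $\Delta$, which is precisely what Lemma \ref{lem:LIE} is designed to do. The nonlinear dependence of $\widehat{g}_n$ on its inputs, combined with the random inverse weighting matrix inside $\widehat{Q}_n$, is the reason the clean bias bound requires this detour through a dedicated quadratic-form moment lemma.
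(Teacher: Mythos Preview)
Your overall strategy matches the paper's: reduce to a pointwise bound on $|\overline{g}_n(\theta)-g_n(\theta)|$, control the first two moments of $\widehat{Q}_n(\theta;z)$ via Lemma~\ref{lem:LIE}, and then run a normalization argument. Your handling of the determinant factor via Assumption~\ref{ass:tails}(3) is also correct (and in fact more careful than the paper's own proof, which silently drops $\widehat{M}_n$ when defining $\overline{g}_n$).

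The gap is in the Taylor step. You expand $e^{-\Delta}$ with $\Delta=\widehat{Q}_n-Q_n$, but $\Delta$ can be arbitrarily negative, and on that tail the Lagrange remainder carries a factor $e^{-\zeta}$ with $\zeta<0$, i.e.\ an exponential in $|\Delta|$. Fourth-moment control on $\widehat{m}_n$ and $\widehat{W}_n$ does not yield exponential-moment control on $\Delta$, and your proposed rescue via $e^{-\widehat{Q}_n}\in[0,1]$ only gives $e^{-\Delta}\le e^{Q_n}$, which is not uniform in $\theta$. The paper sidesteps this entirely by expanding $e^{-\widehat{Q}_n}$ around its own mean $\mu=\E_z[\widehat{Q}_n]$ (this is Lemma~\ref{lem:tse}): since $\widehat{Q}_n\ge0$ is a nonnegative quadratic form, the intermediate point $\zeta$ in the Lagrange remainder satisfies $\zeta\ge0$, so $e^{-\zeta}\le1$ and a second-moment bound on $\widehat{Q}_n$ suffices.

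This change of expansion point also changes the \emph{shape} of the pointwise bound: one obtains the additive estimate $|\overline{g}_n-g_n|\lesssim g_n\cdot O(1/N)+O(\sigma_n^4/N)$, not your claimed multiplicative $C\,g_n\,\sigma_n^4/N$. That matters for closing the argument: the additive $\sigma_n^4/N$ piece is exactly what the hypothesis $\int\|\varphi\|\sigma_n^4\,\pi\,\dt\theta<\infty$ (integration against the \emph{prior}) is designed to absorb, whereas your multiplicative form would require $\int\sigma_n^4\,\pi_n^Q\,\dt\theta<\infty$, which is not among the stated assumptions.
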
	
Theorem \ref{thm:main} demonstrates that, for $N\rightarrow+\infty$, the difference between the intractable posterior $\pi^Q_n(\theta)$ and its `estimated version' $\overline\pi^Q_n(\theta)$, converges to zero at rate $1/N$. This result is not an `in probability' result, and is true for any fixed $n$, as $N\rightarrow+\infty$. Furthermore, this equivalence holds for any function of $\theta$ such that $\|\int_\Theta \varphi(\theta)\pi^Q_n(\theta)\dt\theta\|<\infty$: posterior moments calculated from $\overline\pi^Q_n(\theta)$ are equivalent to those based on the intractable  $\pi^Q_n(\theta)$, at order $O(1/N)$. Thus, for $N$ large, Theorem \ref{thm:main} implies that  $\overline\pi^Q_n(\theta)$, and its moments,  $\int_\Theta\varphi(\theta)\overline\pi^Q_n(\theta)\dt\t$, behave precisely as if $m_n(\theta)$ and $W_n(\theta)$ were known. 

Intuitively, Theorem \ref{thm:main} implies that for $N$ large, inferences based on the feasible posterior $\overline{\pi}^Q_{n}(\theta)$, are uniformly close to those of the infeasible posterior ${\pi}^Q_{n}(\theta)$ that we would hope to target if the likelihood was analytically tractable. An immediate implication of Theorem \ref{thm:main} is that if $\pi^Q_n(\theta)$ has credible sets that are well-calibrated, then credible sets based on $\overline{\pi}^Q_{n}(\theta)$ will also be well-calibrated, even though the likelihood must be estimated to feasibly conduct inference. 

\subsubsection{Uncertainty Quantification}
If we are willing to make additional assumptions regarding the behavior of $m_n(\theta)$ and $W_n(\theta)$, we can formally demonstrate that $\overline{\pi}^Q_n(\theta)$ and $\pi^Q_n(\theta)$ have credible sets that are asymptotically well-calibrated, regardless of whether the model is, or is not, correctly specified. Importantly, unlike frequentist methods for models with latent variables, credible sets obtained from $\overline\pi^Q_n(\theta)$ \textit{do not require the calculation of second-derivative information.} 

Before presenting our regularity conditions and results, we collect here notations that are used to simplify the statement of the result. For a positive sequence $a_n\rightarrow+\infty$ as $n\rightarrow+\infty$, we say that $X_n=o_p(a_n^{-1})$ if the sequence $a_n X_n$ converges to zero in probability, while we use the notation $X_n=O_p(a_n^{-1})$ to denote that $a_nX_n$ is bounded in probability. For a set $A\subseteq\mathbb{R}^d$, let $\mathrm{Int}(A)$ denote the interior of $A$. The notation $\Rightarrow$ denotes weak convergence under $P^{(n)}_0$.


\begin{assumption}\label{ass:infeasible}There exists a function $m:\Theta\rightarrow\mathbb{R}^{d_\theta}$ such that:
	\begin{enumerate}
		\item $\sup_{\theta\in\Theta}\|n^{-1}m_n(\theta)-m(\theta)\|=o_p(1)$. 
		\item There exist a unique $\theta_\star\in\mathrm{Int}(\Theta)$, such that, $m(\theta)=0$ iff $\theta=\theta_\star$. 
		\item For some $\delta>0$, $m(\theta)$ is continuously differentiable over $\|\theta-\theta_\star\|\le\delta$, and $\mathcal{H}(\theta_\star):=-\nabla_{\theta}m(\theta_\star)$ is invertible.
		\item There exists a positive  semi-definite matrix $W_\star$, such that $m_n(\theta_\star)/\sqrt{n}\Rightarrow N(0,W_\star)$.
		\item For any $\epsilon>0$, there is a $\delta>0$ such that $$P^{(n)}_0\left(\sup _{\left\|\theta-\theta'\right\|<\delta}\frac{\left\|n^{-1}\{m_{n}(\theta)-m_{n}(\theta')\}-\{m(\theta)-m(\theta')\}\right\|}{1+\sqrt{n}\|\theta-\theta'\|}>\epsilon\right)=o(1).$$
	\end{enumerate}
	
\end{assumption}


We require that the weighting matrix satisfy the following assumption. 
\begin{assumption}\label{ass:weight}The following conditions
	are satisfied for some $\delta>0$: (i) for $n$ large enough, the matrix ${W}_n(\theta)$ is positive semi-definite and symmetric uniformly over $\Theta$, and positive-definite for all $\|\theta-\theta_\star\|\le\delta$; (ii) there exists some matrix $W(\theta)$, positive semi-definite, and symmetric, uniformly over $\Theta$, and such that $\sup_{\theta\in\Theta}\|{W}^{}_n(\theta)-W(\theta)\|=o_{p}(1)$, and, for all $\|\theta-\theta_\star\|\le\delta$, $W(\theta)$ is continuous and positive-definite;  (iv) $W(\theta_\star)=W_\star$; (v) for any $\epsilon>0$, $\sup_{\|\theta-\theta_\star\|\ge\epsilon}-m(\theta)^\top W(\theta)^{-1}m(\theta)<0$.
\end{assumption}
\begin{remark}{\normalfont 
		Assumptions \ref{ass:infeasible} and \ref{ass:weight} jointly enforce smoothness and identification conditions for the infeasible criterion $Q_n(\theta).$ These conditions permit the existences of a quadratic expansion of $Q_n(\theta)-Q_n(\theta_\star)$ that is smooth in $\theta$ near $\theta_\star$, and with a remainder term that can be suitably controlled. In particular, our assumptions do not require that the loss function is differentiable everywhere, but only that it is equicontinuous. The latter is an important distinction as many loss functions are only differentiable almost surely; e.g., loss functions based on absolute value functions.}
\end{remark}

The following assumption requires the  existence of certain prior moments. 
\begin{assumption}\label{ass:prior}
	For $\theta_\star$ as defined in Assumption \ref{ass:infeasible}, $\pi(\theta_\star)>0$ and $\pi(\theta)$ is continuous on $\Theta$. For some $p\ge1$, $\int_\Theta\|\theta\|^p\pi(\theta)\dt\theta$, and for all $n$ large enough $\int_\Theta |W_n(\theta)|^{-1/2}\|\theta-\theta_\star\|^{p}\pi\left(\theta\right)\dt\theta<+\infty$.
\end{assumption}

The following result demonstrates that  $\overline\pi_n^Q(\t)$ delivers valid (frequentist) uncertainty quantification. Before stating this result, recall that $\Sigma_\star:=\mathcal{H}(\theta_\star)W_\star^{-1}\mathcal{H}(\theta_\star)$, and define $$Z_n:=\Sigma_\star^{-1}\mathcal{H}(\theta_\star)^\top W_\star^{-1}m_n(\theta_\star),\quad\vartheta:=\sqrt{n}(\theta-\theta_\star)-Z_n/\sqrt{n}$$ and $\mathcal{T}_n:=\{\vartheta=\sqrt{n}(\theta-\theta_\star)-Z_n/\sqrt{n}:\theta\in\Theta\}$.

\begin{lemma}\label{lem:two}Under Assumptions \ref{ass:tails}-\ref{ass:prior}, as $n,N\rightarrow+\infty$, 
	$
	\int_{\mathcal{T}_n}\|\vartheta\||\overline\pi^Q_{n}(\vartheta)-N\{\vartheta;0,\Sigma_\star^{-1}\}|\dt \vartheta=o_p(1).
	$
\end{lemma}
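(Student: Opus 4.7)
The plan is to split the statement into an infeasible Bernstein--von Mises (BvM) result for $\pi_n^Q$ and a transfer step to $\overline\pi_n^Q$ via Theorem \ref{thm:main}. The triangle inequality gives
\begin{align*}
	&\int_{\mathcal{T}_n}\|\vartheta\|\bigl|\overline\pi_n^Q(\vartheta) - N\{\vartheta;0,\Sigma_\star^{-1}\}\bigr|d\vartheta \\
	&\quad \le \int_{\mathcal{T}_n}\|\vartheta\|\bigl|\overline\pi_n^Q(\vartheta) - \pi_n^Q(\vartheta)\bigr|d\vartheta \;+\; \int_{\mathcal{T}_n}\|\vartheta\|\bigl|\pi_n^Q(\vartheta) - N\{\vartheta;0,\Sigma_\star^{-1}\}\bigr|d\vartheta,
\end{align*}
so it suffices to show both summands are $o_p(1)$.

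For the second (infeasible BvM) summand, I would Taylor-expand $m_n(\theta)$ and $W_n(\theta)$ about $\theta_\star$. Assumption \ref{ass:infeasible}(3,5) yields $m_n(\theta) = m_n(\theta_\star) - n\mathcal{H}(\theta_\star)(\theta - \theta_\star) + R_n(\theta)$ with $\|R_n(\theta)\|/(1 + \sqrt{n}\|\theta-\theta_\star\|) = o_p(1)$ uniformly in a shrinking neighborhood of $\theta_\star$, while Assumption \ref{ass:weight}(ii,iv) gives $W_n(\theta)^{-1} = W_\star^{-1} + o_p(1)$ there. Substituting into $Q_n(\theta)$ and completing the square yields
\begin{equation*}
	Q_n(\theta) - Q_n(\theta_\star + Z_n/n) \;=\; \tfrac{1}{2}\vartheta^\top \Sigma_\star \vartheta + o_p(1)
\end{equation*}
uniformly on $\|\vartheta\| \le C$. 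Combined with continuity of $\pi(\theta)$ (Assumption \ref{ass:prior}) and $|W_n(\theta)|^{-1/2}$ (Assumption \ref{ass:weight}(ii)) at $\theta_\star$, the prior and Jacobian factors cancel between numerator and normalizer, producing the Gaussian limit locally. For the $\|\vartheta\|$-weighted integral over $\vartheta$ outside a fixed ball, Assumption \ref{ass:weight}(v) forces $Q_n(\theta)-Q_n(\theta_\star)$ to grow linearly in $n$, so $\pi_n^Q$ places exponentially small mass there, and the prior-moment bound in Assumption \ref{ass:prior} controls the weighted tail.

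For the first (transfer) summand, I would change variables back to $\theta$, so that the integrand becomes $\|\sqrt{n}(\theta-\theta_\star) - Z_n/\sqrt{n}\|\,|\overline\pi_n^Q(\theta) - \pi_n^Q(\theta)|$. The triangle inequality in the weight splits this into two pieces: the $\|Z_n\|/\sqrt{n} = O_p(n^{-1/2})$ piece controlled by the total-variation bound of Theorem \ref{thm:main} at rate $O(1/N)$, and the $\sqrt{n}\|\theta-\theta_\star\|$ piece controlled by the moment conclusion of Theorem \ref{thm:main} applied with $\varphi(\theta) = \|\theta-\theta_\star\|$, contributing order $\sqrt{n}/N$. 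Both are $o_p(1)$ once $N$ grows faster than $\sqrt{n}$.

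The main obstacle is this transfer step: coupling a first-moment weight to the $O(1/N)$ bound of Theorem \ref{thm:main} introduces an implicit rate requirement on $N$ relative to $n$ and requires careful bookkeeping of how the $z$-moment bound $\sigma_n^2(\theta)$ interacts with the change of variables, especially at the boundary of the Taylor neighborhood where the remainder $R_n$ is controlled only via equicontinuity. A secondary difficulty is the random, $\theta$-dependent weighting matrix $W_n(\theta)$, whose deviation from $W_\star$ enters both the quadratic form and $|W_n(\theta)|^{-1/2}$; handling these uniformly leans on the stochastic equicontinuity in Assumption \ref{ass:infeasible}(5) together with the uniform convergence of $W_n$ in Assumption \ref{ass:weight}(ii).
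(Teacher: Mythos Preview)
Your approach is exactly the paper's: the proof in the paper is two lines, citing the triangle inequality, then Theorem~\ref{thm:main} for the transfer term and Lemma~\ref{lem:twog} (the infeasible BvM you sketch in your second paragraph) for the remaining term. Your three-region outline for the infeasible piece matches Lemma~\ref{lem:twog} closely.

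Two small remarks. First, a slip: $Z_n/\sqrt{n}\Rightarrow N(0,\Sigma_\star^{-1})$ under Assumption~\ref{ass:infeasible}(4), so $\|Z_n\|/\sqrt{n}=O_p(1)$, not $O_p(n^{-1/2})$; your conclusion $O_p(1)\cdot O(1/N)=o_p(1)$ is unaffected. Second, your worry that the transfer step needs $\sqrt{n}/N\to 0$ is you being more careful than the paper, not less. The paper's proof of Lemma~\ref{lem:two} simply asserts ``$O(1/N)$ by Theorem~\ref{thm:main}'' without tracking how the implicit constant depends on $n$; the rate condition $\sqrt{n}/N\to0$ only surfaces explicitly in Lemma~\ref{lem:three}. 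If you want to argue the transfer term is $o_p(1)$ without a rate condition, note that one of the two $n$-dependent constants hidden in Theorem~\ref{thm:main}, namely $\int\|\vartheta\|\pi_n^Q(\theta)\dt\theta$, is $O_p(1)$ by the infeasible BvM itself; the $\sigma_n^4$ constant is the genuinely delicate one, and the paper does not spell out its control either.
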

The following result is a consequence of Lemma \ref{lem:two}. 
\begin{lemma}\label{lem:three}Under Assumptions \ref{ass:tails}-\ref{ass:prior}, if $\sqrt{n}/N\rightarrow0$, for $\overline\theta=\int_{\theta}\theta\overline\pi^Q_n(\theta)\dt\theta$, 
	$
	\sqrt{n}\{\overline\theta-\theta_\star\}\Rightarrow N\{0,\Sigma_\star^{-1}\}.
	$
\end{lemma}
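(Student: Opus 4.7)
The plan is to derive Lemma \ref{lem:three} as a direct consequence of the weighted $L^1$ Bernstein-von Mises statement in Lemma \ref{lem:two}. Because Lemma \ref{lem:two} controls the integral of $\|\vartheta\|$ against the difference of densities, it already delivers convergence of the first moment with no further integrability argument required. So the proof reduces to (i) reading off convergence of the posterior mean in the localized parameter $\vartheta$, (ii) unwinding the change of variables $\vartheta=\sqrt{n}(\theta-\theta_\star)-Z_n/\sqrt{n}$, and (iii) pushing forward the CLT for $m_n(\theta_\star)/\sqrt{n}$ given by Assumption \ref{ass:infeasible}(4).

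First I would introduce the localized posterior $\overline{\pi}^Q_n(\vartheta)$ obtained from $\overline{\pi}^Q_n(\theta)$ by the affine change of variables $\theta=\theta_\star+\vartheta/\sqrt{n}+Z_n/n$, so that $\overline\vartheta:=\int_{\mathcal{T}_n}\vartheta\,\overline{\pi}^Q_n(\vartheta)\dt\vartheta=\sqrt{n}(\overline\theta-\theta_\star)-Z_n/\sqrt{n}$. Applying the triangle inequality and Lemma \ref{lem:two},
\begin{equation*}
\left\|\overline\vartheta-\int_{\mathcal{T}_n}\vartheta\,N\{\vartheta;0,\Sigma_\star^{-1}\}\dt\vartheta\right\|\le\int_{\mathcal{T}_n}\|\vartheta\|\,|\overline{\pi}^Q_n(\vartheta)-N\{\vartheta;0,\Sigma_\star^{-1}\}|\dt\vartheta=o_p(1).
\end{equation*}
Since $Z_n/\sqrt{n}=O_p(1)$ by Assumption \ref{ass:infeasible}(4), the set $\mathcal{T}_n$ eventually covers any fixed compact set, and tail bounds on $N\{\cdot;0,\Sigma_\star^{-1}\}$ ensure $\int_{\mathcal{T}_n}\vartheta\,N\{\vartheta;0,\Sigma_\star^{-1}\}\dt\vartheta=o_p(1)$; here the assumption $\sqrt{n}/N\to 0$ is used so that the $O(1/N)$ error from Theorem \ref{thm:main} contributes $o_p(1/\sqrt{n})$ at the scale of $\sqrt{n}(\overline\theta-\theta_\star)$ and so does not affect the first moment. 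Combining gives $\overline\vartheta=o_p(1)$, equivalently
\begin{equation*}
\sqrt{n}(\overline\theta-\theta_\star)=Z_n/\sqrt{n}+o_p(1).
\end{equation*}

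Next I would use the definition $Z_n=\Sigma_\star^{-1}\mathcal{H}(\theta_\star)^\top W_\star^{-1}m_n(\theta_\star)$ together with Assumption \ref{ass:infeasible}(4), $m_n(\theta_\star)/\sqrt{n}\Rightarrow N(0,W_\star)$, and the continuous mapping theorem to conclude
\begin{equation*}
Z_n/\sqrt{n}\Rightarrow N\bigl(0,\;\Sigma_\star^{-1}\mathcal{H}(\theta_\star)W_\star^{-1}W_\star W_\star^{-1}\mathcal{H}(\theta_\star)\Sigma_\star^{-1}\bigr)=N(0,\Sigma_\star^{-1}),
\end{equation*}
where the last equality uses the symmetry of $\mathcal{H}(\theta_\star)$ and the identity $\Sigma_\star=\mathcal{H}(\theta_\star)W_\star^{-1}\mathcal{H}(\theta_\star)$. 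Slutsky's theorem then gives $\sqrt{n}(\overline\theta-\theta_\star)\Rightarrow N(0,\Sigma_\star^{-1})$, as claimed.

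I expect the only non-routine step to be the book-keeping around the estimation error induced by using $\overline{\pi}^Q_n$ rather than $\pi^Q_n$: one must confirm that the $O(1/N)$ gap from Theorem \ref{thm:main} is compatible with the $\|\vartheta\|$-weighted statement of Lemma \ref{lem:two} at rate $\sqrt{n}$, which is precisely what the rate condition $\sqrt{n}/N\to 0$ secures. Everything else is standard once the weighted Bernstein-von Mises statement is in hand, since the $\|\vartheta\|$-weight makes uniform integrability for the mean functional free, sidestepping the usual obstacle in translating BvM convergence into convergence of posterior means.
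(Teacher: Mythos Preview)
Your proposal is correct and follows essentially the same route as the paper. The only organizational difference is that the paper first splits $\overline\theta=\int\theta\{\overline\pi^Q_n-\pi^Q_n\}\dt\theta+\int\theta\,\pi^Q_n\dt\theta$, invokes Theorem~\ref{thm:main} on the first piece to produce the explicit $O(\sqrt{n}/N)$ term (this is precisely where $\sqrt{n}/N\to0$ enters), and then applies the Bernstein--von Mises statement for the \emph{infeasible} $\pi^Q_n$ to the second piece; you instead apply Lemma~\ref{lem:two} directly to $\overline\pi^Q_n$, which already packages Theorem~\ref{thm:main} and the infeasible BvM together. Your parenthetical about where $\sqrt{n}/N\to0$ is used is therefore slightly misplaced: taken at face value, Lemma~\ref{lem:two} delivers $o_p(1)$ without that rate condition, so in your organization the condition is either absorbed into Lemma~\ref{lem:two} or superfluous---the paper's decomposition makes its role transparent.
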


Lemmas \ref{lem:two}-\ref{lem:three} demonstrate that even when the likelihood must be estimated, so long as it is feasible to simulate draws from $p_\t(\a\mid\y)$, we can accurately quantify uncertainty using $\overline\pi^Q_n(\theta)$ (for $N\rightarrow+\infty$). However, if we wish for our posterior means to not be asymptotically biased, due to the need to simulate data, we require that $\sqrt{n}/N=o(1)$. The latter condition is the same condition on the number of simulated draws required for asymptotically Gaussian point estimation in the context of simulated maximum likelihood estimation (see \citealp{shephard1997likelihood} for early work). Therefore, so long as the number of draws grows appropriately, in any situation to which filtering methods can be applied, the Q-posterior can be used to produce Bayesian inferences that are well-calibrated. 


\subsubsection{Variance Estimation}\label{sec:varest}
In order for Q-posteriors to be well-calibrated, a consistent estimator of $W_\star=\lim_{n}\text{Cov}\left\{m_n(\theta_\star)/\sqrt{n}\right\}$ is required. In cases where $m_n(\theta)$  takes the form $m_n(\theta)=\sum_{i=1}^{n}m_i(\theta)$, for some known functions $m_i(\theta)$ depending only on the $i$-th sample unit, the matrix $W_n(\theta)$ can  be taken as
\begin{equation}
{W}_n(\theta)=n^{-1}\sum_{i=1}^{n}\left\{m_i(\theta)-n^{-1}{m}_n(\theta)\right\}\left\{m_i(\theta)-n^{-1}{m}_n(\theta)\right\}^\top.
\label{eq:sampvar}	
\end{equation}
If one is worried that $W_n(\theta)$ in \eqref{eq:sampvar} may not be consistent for $W_\star$, we suggest the user employ bootstrapping techniques to estimate the covariance matrix; this can often be achieved using some variation of the estimating equations bootstrap (\citealp{hu2000estimating} and \citealp{chatterjee2005generalized}). Such an approach can either be directly embedded within the MCMC algorithm, or a preliminary estimator $W_n(\bar\theta_n)$, based on some $\bar\theta_n$ in the high-probability density region of the posterior,  can be used throughout the MCMC algorithm.\footnote{The latter follows since Assumption \ref{ass:weight} is actually a stronger condition than is required for the validity of Lemma \ref{lem:two}. The result can be shown to be satisfied if the matrix  $W_n(\theta)$ is replaced by $W_n(\hat\theta_n)$, where $\hat\theta_n$ is any preliminary consistent estimator of $\theta_\star$.}

For the Q-posterior $\overline\pi^Q_n(\theta)$, choosing $\widehat{W}_n(\theta;z)$ as the sample variance in \eqref{eq:sampvar} may not deliver a Q-posterior that is well-calibrated in general. This issue is readily seen even where $p_\theta(\y,\a)=\prod_{i=1}^{n}g_\t(y_i|\alpha_i)f_\t(\alpha_i)$, so that $\widehat{m}_n(\theta;z)$ takes the form $$\widehat{m}_n(\theta;z)=\sum_{i=1}^{n}N^{-1}\sum_{j=1}^{N}\nabla_\theta\log g_\t(y_i|\alpha_i^{(j)})f_\t(\alpha_i^{(j)}),\quad\a^{(j)}\sim p_\t(\a\mid\y),$$ which we note has variability in both $y_i$ and $\alpha_i^{(j)}$.  To see that both pieces of variability matter, write $m_{ij}(\theta)=\nabla_\theta \log g_\t(y_i\mid \alpha_i^{(j)})f_\t(\alpha_i^{(j)})$,
$\overline{m}_{iN}(\theta)=\frac{1}{N}\sum_{j=1}^{N}m_{ij}(\theta)$,  $z_i=(\alpha_i^{(1)},\dots,\alpha_i^{(N)})$, and consider
\begin{flalign*}
\Var\{\overline{m}_{iN}(\theta)\}&=\E_{Y}\left[\Var_{Z|Y}\{\overline{m}_{iN}(\theta)|Y\}\right]+\Var_{Y}\left[\E_{Z|Y}\{\overline{m}_{iN}(\theta)|Y\}\right].
\end{flalign*}
Since $\a\stackrel{iid}{\sim}p_\t(\a\mid\y)$, $\E_{Z|Y}\{\overline{m}_{iN}(\theta)|Y\}=\E_{Z|Y}\{m_{ij}(\theta)|Y\}$, and this expectation can be estimated by $\overline{m}_{iN}(\theta)=\frac{1}{N}\sum_{j=1}^{N}m_{ij}(\theta)$. However,  applying the usual sample covariance estimator to $\overline{m}_{iN}(\theta)$ yields 
$$
\widehat{W}_{1n}(\theta;z):=n^{-1}\sum_{i=1}^{n}\left\{\overline{m}_{iN}(\theta)-n^{-1}\widehat{m}_n(\theta;z)\right\}\left\{\overline{m}_{iN}(\theta;z_i)-n^{-1}\widehat{m}_n(\theta;z)\right\}^\top.
$$Intuitively, $\widehat{W}_{1n}(\theta;z)$ is only an estimator of $\Var_{Y}\left[\E_{Z|Y}\{m_{i,j}(\theta)|Y\}\right]$ as the influence of the simulated $\a^{(j)}\sim p_\t(\a\mid\y)$ has already been `integrated out' via the averaging in $\overline{m}_{iN}(\theta)$. Consequently, unless $\Var_{Z|Y}\{\overline{m}_{iN}(\theta)|Y\}=0$ for all $i=1,\dots,N$, the usual sample covariance estimator for $\widehat{m}_n(\theta;z)$ under-estimates the variance, and delivers credible sets that are too narrow.  

However, given that $\widehat{W}_{1n}(\theta;z)$ is an estimator of $\Var_{Y}\left[\E_{Z|Y}\{\overline{m}_{iN}(\theta)|Y\}\right]$, all that remains is to obtain an estimator of $\E_{Y}\left[\Var_{Z|Y}\{\overline{m}_{iN}(\theta)|Y\}\right]$. Since $\a^{(j)}\stackrel{iid}{\sim} p_\t(\a\mid\y)$, an obvious choice is 
$$
\widehat{W}_{2n}(\theta;z)=\frac{1}{n}\sum_{i=1}^{n}\left[\frac{1}{N}\sum_{j=1}^{N}\{m_{ij}(\theta)-\overline{m}_{iN}(\theta)\}\{m_{ij}(\theta)-\overline{m}_{iN}(\theta)\}^\top\right],
$$where we note that the term in brackets is an estimator of $\Var_{Z|Y}\{m_{ij}(\theta)|Y\}$. Critically, since $\a^{(j)}\stackrel{iid}{\sim} p_\t(\a|\y)$ ($j=1,\dots,N$),    $\widehat{W}_{2n}(\theta;z)$ is consistent by the usual law of large numbers. This latter feature remains true even if the series $\{\overline{m}_{iN}(\theta):i\le n\}$ displays dependence. 

However, we do note that if $\{\overline{m}_{iN}(\theta):i\le n\}$ is dependent, an alternative estimator for  $\widehat{W}_{1n}(\theta;z)$ is required. Such an estimator can be achieved by replacing $\widehat{W}_{1n}(\t;z)$ with some heteroskedastic and auto-correlation (HAC) consistent covariance estimator (\citealp{newey1986simple}), or by using bootstrapping methods for the series  $\{\overline{m}_{iN}(\theta):i\le n\}$. For instance, in the case of dependent data, a version of the wild bootstrap (\citealp{shao2010dependent}) or the stationary bootstrap of \cite{politis1994stationary}, applied to the `data' $\{\overline{m}_{iN}(\theta):i\le n\}$, can be used to obtain consistent estimators of the variance.\footnote{Alternatively, the matrix $W(\theta)$ can be estimated using the waste-free SMC sampler \cite{dau2020waste} which allows us to obtain, from a single run of the algorithm, a consistent estimator of the asymptotic variance of our particle estimate $\widehat{m}_n(\theta;z)$.}

\subsection{Example: Random Effects Models}
\subsubsection{Linear Random Effects Models}\label{ex:regRE}

	To demonstrate the benefits of our generalized approach to Bayesian inference with latent variables we revisit the simple linear regression model in Example \ref{ex:reg}, and demonstrate that the performance of our approach is similar in cases with latent variables. To this  end, we consider the standard linear regression model with random effects
	$$
	y_{i}=\alpha_i+x_i^\top\beta+\sigma\epsilon_i,\;(i=1,\dots,n),
	$$where we assume that the error distribution $\epsilon_i\stackrel{iid}{\sim} N(0,1)$, and where $x_i$ and $\beta$ are $d_\beta$-dimensional vectors, and $\alpha_i$ is an unobservable random effect parameter with assumed distribution $\mathcal{N}(0,\sigma^2_\alpha)$. Similarly to the linear regression example given earlier, our main inferential goal is inference on the fixed effect parameters $\beta$.
	
Priors for $\beta$ and $\sigma$ are the same given before, and now we augment this with the additional prior for $\alpha_i|\sigma^2_\alpha\stackrel{iid}{\sim}N(0,\sigma^2_\alpha)$, where $\pi(\sigma^2_\alpha)\propto (\sigma^2_\alpha)^{-1}$. Under this prior choice, and conditional on $\theta=(\beta^\top,\sigma^2)^\top$,  and $\sigma^2_\alpha$, the conditional posterior for the random effects is known in closed-form: letting $\x$ denote the $n\times \mathrm{dim}(\beta)$-dimensional matrix with columns $x_{1,1:n}=(x_{1,i},\dots,x_{1,n})^\top$, and $I_{n}$ an $n\times n$-dimensional identity matrix, 
	$$
	p(\alpha_{1:n}\mid\y,\x,\beta,\sigma,\sigma^2_\alpha) 
		=N\left[\a;\left\{I_n\left(1+\frac{\sigma_{}^2}{\sigma^2_\alpha}\right) \right\}^{-1} (\y-\x {\beta}), \sigma_{}^2\left\{I_n\left(1+\frac{\sigma_{}^2}{\sigma^2_\alpha}\right) \right\}^{-1}\right].
	$$More generally, the conditional posteriors for $\beta$ and $\sigma^2$ are also known in closed-form. 
	
	Similarly to the linear regression example, suppose we again wish to ensure that our inferences for $\beta$ are robust to possible deviations from the modelling assumptions regarding the error term $\epsilon_i$; and in particular the possible form and presence of heteroskedasticity. Given that the conditional posterior $p(\alpha_{1:n}\mid\y,\x,\beta,\sigma,\sigma^2_\alpha)$ is known in closed-form  in this example, this can be achieved by estimating $m_n(\theta)$ and $W_n(\theta)$ using $m$ independent sets of draws from the conditional posterior $p(\alpha_{1:n}\mid\y,\x,\beta,\sigma,\sigma^2_\alpha)$. 
	
	Consequently, inference for $\theta=(\beta^\top,\sigma^2)^\top$ can proceed using a (pseudo-marginal) Metropolis-within-Gibbs scheme, whereby for each Metropolis step for $\theta$ we evaluate the loss by first sampling from $p(\alpha_{1:n}\mid\y,\x,\beta,\sigma,\sigma^2_\alpha)$, then forming $\widehat{Q}_n(\theta;z)$ and then deciding whether to accept or reject according to the usual criterion. We now implement such a sampling approach using $m=5$ simulations from the conditional posterior to estimate the gradient and its variance, at each step of the algorithm.

	Similarly to Example \ref{ex:reg}, we generate observed data from the linear regression model under the following specification for the disturbance term:
	$
	\epsilon_i\sim N(0,\sqrt{1+|x_{1,i}|^\gamma}^2),
	$ where $\gamma\ge0$, and $x_{1,i}$ denotes the first element of the vector $x_i$. Clearly, when $\gamma=0$ the DGP is homoskedastic, whereas if $\gamma\ne0$, the exact posterior will likely produce credible sets that are too narrow to yield reliable coverage. Alternatively, from Lemmas \ref{lem:two}-\ref{lem:three} suggest that the generalized posterior we propose correctly quantifies uncertainty regardless of the value of $\gamma$. 
	
	To this end, we generate $n=100$ observations from the assumed DGP under the designs of $\gamma=0$, and $\gamma=2$, where the $x_i$ are generated as tri-variate independent standard Gaussian random vectors, and we set the true values  as $\beta =(0.5,1.5,1,1)^\top$, and $\sigma =1$. We replicate this design 1000 times, and for each dataset sample the exact posterior using Gibbs sampling, and the generalized posterior using the RW sample in Algorithm \ref{alg1}, both posteriors are approximated using 10000 samples with the first 5000 discarded for burn-in. Across each data set and method, we then compare the posterior bias, variance and coverage for the regression coefficients. Table \ref{tab:regRE} summarises the repeated sampling results.
	\begin{table}[H]
		\centering
		{\footnotesize
			\begin{tabular}{lrrrrrr}
				\hline\hline
				\textbf{$\gamma=0$} & \multicolumn{3}{c}{Q-posterior} & \multicolumn{3}{c}{Exact} \\
				\textbf{} & Bias & Var &Cov& Bias & Var &Cov \\ \hline
$\beta_1$ & -0.0519 &   0.0310 &   0.9350 &  -0.0804 &   0.0555  &  0.9820\\
$\beta_2$ &-0.0017 &   0.0220 &   0.9320  & -0.0015 &   0.0154  &  0.8830\\
$\beta_3$ &-0.0132 &   0.0299 &   0.9450 &  -0.0136 &   0.0208  &  0.8880\\
$\beta_4$ &0.0079  &  0.0295  &  0.9200  &  0.0076  &  0.0191  &  0.8490\\\hline\hline
				\textbf{$\gamma=2$}& \multicolumn{3}{c}{Q-posterior} & \multicolumn{3}{c}{Exact}   \\ 
				\textbf{} & Bias & Var & Cov& Bias & Var &Cov \\ \hline
$\beta_1$ &   -0.0572 &   0.0308 &   0.9300 &  -0.0844 &   0.0536 &   0.9800\\
$\beta_2$ &-0.0047 &   0.0474 &   0.9150 &  -0.0060 &   0.0206 &   0.7620\\
$\beta_3$ &-0.0034 &   0.0312 &   0.9290 &  -0.0034 &   0.0194 &   0.8740\\
$\beta_4$ &0.0000  &  0.0296 &   0.9290  &  0.0001 &   0.0191  &  0.8480\\\hline\hline
		\end{tabular}}
		\caption{Posterior accuracy results in the normal linear regression model for the exact and Q-posterior.  Bias is the bias of the posterior mean across the replications. Var is the average posterior variance deviation across the replications. Cov is the posterior coverage. The nominal level is set to be 95\% for the experiments. }
		\label{tab:regRE}
	\end{table}

	Similarly to the simple linear regression example, the results in Table \ref{tab:regRE} demonstrate that our generalized Bayes approach based on estimated likelihood gradients produces results that are similar to exact Bayes when the model is correctly specified. However, while the coverage rates for exact Bayes are relatively far from the nominal levels under both correct specification $(\gamma=0)$ and misspecification $(\gamma=1)$, our approach produces coverage rates that are very close to the nominal 95\% level in both settings.

\subsubsection{Generalized Linear Random Effects Models}\label{sec:randomE}
We observe binary outcomes on individual $i$ from group $j$, $y_{ij}$, and let $x_{ij}$ denote a vector of strictly exogenous regressors. In particular, we believe that the outcomes are generated according to a binary random effects probit model of the form
\begin{equation}
	\label{probitmodel}
	y_{ij}\sim \mathsf{Bernoulli}(p_{ij}),\quad p_{ij}=\Phi\left(x_{ij}^\top\beta+\alpha_i\right),
\end{equation}
for individual $i=1,\dots,n$, with $j=1,\dots,J$ observations per individual, where $\alpha_i$ denotes the unobservable random effect for individual $i$, and $\Phi(x)$ denotes the standard Gaussian CDF. Similarly to the linear random effects case, we again take the prior on $\beta$ to be diffuse, $\pi(\beta)\propto1$. The standard assumption on the random effects parameter is that $\alpha_i\stackrel{iid}{\sim}N(0,\sigma^2_\alpha)$. However, the very nature of the random effect term makes its precise parametric representation unclear. 

This section  demonstrates how to apply the Q-posterior in this model, and compares our approach against the exact posterior in the case where the random effects are correctly and incorrectly specified. For simplicity, we take $J=1$, and denote quantities using only the $i$ sub-script, however, the case $J>1$ proceeds similarly.

Conditional on $\theta$, the complete-data likelihood is  
$$
 p\left(y_{1:n},\alpha_{1:n}\mid\theta\right)=\prod_{i=1}^{n}\left\{\Phi\left(x_{i}^\top\beta+\alpha_{i}\right)
 \right\}^{y_{i}}\left\{1-\Phi\left(x_{i}^\top\beta+\alpha_{i}\right)\right\}^{1-y_i}N(\alpha_i;0,\sigma^2_\alpha);
$$note that the only occurrence of $\sigma^2_\alpha$ comes from the contribution in the density components $N(\alpha_i;0,\sigma^2_\alpha)$. Using Fisher's identity, the log gradient with respect to $\beta$  is  
\begin{equation*}
	\nabla_\beta \log p(\y,\alpha_{1:n}\mid\theta)=\sum_{i=1}^{n} x_{i}^\top \int_{\mathbb{R}}u_i(\theta,\alpha_i)p_\t(\alpha_i\mid y_{1:n})\dt\alpha_i, 
\end{equation*}where 
$$
 u_i(\theta,\alpha_i)=\frac{\left[y_{i}-\Phi\left(x_{i}^\top \beta+\alpha_i\right)\right]\varphi\left(x_{i}^\top \beta+\alpha_{i}\right)}{\Phi\left(x_{i}^\top \beta+\alpha_i\right)\left[1-\Phi\left(x_{i}^\top \beta+\alpha_i\right)\right]},
$$
while the gradient with respect to $\sigma^2_\alpha$ is
$$
\nabla_{\sigma^2_\alpha} \log p_\theta(\y,\alpha_{1:n})=-\frac{n}{2\sigma^2_\alpha}+\frac{1}{2}\left(\frac{1}{\sigma^2_\alpha}\right)^2\sum_{i=1}^{n} \int_{\mathbb{R}}\alpha_i^2p_\t(\alpha_i\mid y_{1:n})\dt\alpha_i.
$$ Unlike the linear case, $p_\t(\alpha_{1:n}\mid\y)$ is analytically intractable in this generalized linear model. However, following the approaches of \cite{zeger1991generalized} and \cite{mcculloch1997maximum}, it is possible to reliably generate samples from $p_\t(\alpha_{1:n}\mid\y)$ using a Metropolis-Hastings algorithm based on the proposal $N(0,\sigma^2_\alpha)$. 

It is then feasible to use a Metropolis-within-Gibbs approach to evaluate the Q-posterior, whereby at the $k$-th step we sample $N$ paths $\alpha^{}_{1:n,j}(k)\sim p_\t(\alpha_{1:n}\mid\y)$, $j=1,\dots,N$, and then use these draws to estimate the gradients.
Once the gradients have been estimated, their variance can be estimated using the methods discussed in Section \ref{sec:varest}. 

We now compare the behavior of the Q-posterior and the exact posterior in the random effects probit model. We generate $n=100$ sample observations from the random effects probit model under two different distributions for the random effects distribution: the first DGP correctly assumes that the random effect distribution is $N(0,\sigma^2_\alpha)$; while in the second  DGP the actual distribution of the random effects is student-t with four degrees of freedom, so that the assumed model is misspecified under the second design. The regressor $x_i$ is again generated as tri-variate independent standard Gaussian random vectors, and we set the true values  as $\beta =(1,1,1,1)^\top$. In both cases the scale of the random effects parameter is set to unity.

We replicate these two design 1000 times, and for each dataset sample the exact posteriors using a Gibbs-within-Metropolis sampling scheme, similar to \cite{zeger1991generalized}. The Q-posterior is also sampled using a Gibbs-within-Metropolis discussed above, and where we use $N=5$ draws from the conditional posterior in all experiments.\footnote{Since this is a Gibbs-within-Metropolis scheme, we run the sampler to obtain $2*N$ draws of $\alpha_i$, and then discard the first $N$ for burn-in.} Both posteriors are approximated using 10000 samples with the first 5000 discarded for burn-in. Across each data set and method, we then compare the posterior bias, variance and coverage for the regression coefficients. Table \ref{tab:GregRE} summarises the repeated sampling results.

Similarly to the other examples, the Q posterior delivers posterior locations that are similar to those of exact Bayes, but again its uncertainty quantification remains close to the nominal level regardless of correct or incorrect model specification. In contrast, since the variances of exact Bayes are (on average) smaller than those given by the Q-posterior, the coverage of the exact Bayes posteriors is generally further away from the nominal level than for the Q-posterior. This difference is particularly stark for the corresponding posteriors themselves, which we plot in Figure \ref{fig:two} for one particular replication under the misspecified DGP. In this case, it is clear that while the posteriors have similar locations, their scales are quite different. It is this increase in uncertainty that produces reliable uncertainty quantification even though the model is misspecified.

	\begin{table}[H]
	\centering
	{\footnotesize
		\begin{tabular}{lrrrrrr}
			\hline\hline
			 & \multicolumn{3}{c}{Q-posterior} & \multicolumn{3}{c}{Exact} \\
			\textbf{$N(0,1)$} & Bias & Var &Cov& Bias & Var &Cov \\ \hline
			$\beta_1$ & -0.0438  &  0.0358  &  0.9760  & -0.1016  &  0.0266 &   0.9060\\
			$\beta_2$ &-0.0414  &  0.0375 &   0.9740  & -0.0841 &   0.0306 &   0.9420\\
			$\beta_3$ &-0.0360  &  0.0339 &   0.9680 &  -0.0955  &  0.0344  &  0.9160\\
			$\beta_4$ &-0.0352  &  0.0396 &   0.9780 &  -0.0863 &   0.0298 &   0.9280\\\hline\hline
			& \multicolumn{3}{c}{Q-posterior} & \multicolumn{3}{c}{Exact}   \\ 
			\textbf{$t_4$} & Bias & Var & Cov& Bias & Var &Cov \\ \hline
$\beta_1$& -0.1255  &  0.0365 &   0.9440 &  -0.1638 &   0.0249  &  0.8350\\
$\beta_2$&-0.1053  &  0.0401  &  0.9480 &  -0.1509  &  0.0273  &  0.8450\\
$\beta_3$&-0.1208  &  0.0391  &  0.9440 &  -0.1607  &  0.0277  &  0.8450\\
$\beta_4$ &-0.1171  &  0.0405  &  0.9490 &  -0.1595  &  0.0282 &   0.8520\\\hline\hline
	\end{tabular}}
	\caption{Posterior accuracy results in the probit random effects model for the exact and Q-posteriors.  Bias is the bias of the posterior mean across the replications. Var is the average posterior variance deviation across the replications. Cov is the posterior coverage (95\% nominal coverage).  The first design ($N(0,1)$) represents the case where the distribution of random effects is correctly specified, while the design $t_4$ represent the case of incorrect specification for the random effects distribution.}
	\label{tab:GregRE}
\end{table}

\begin{figure}[htb]
	\centering
	\includegraphics[width=14cm, height=8cm]{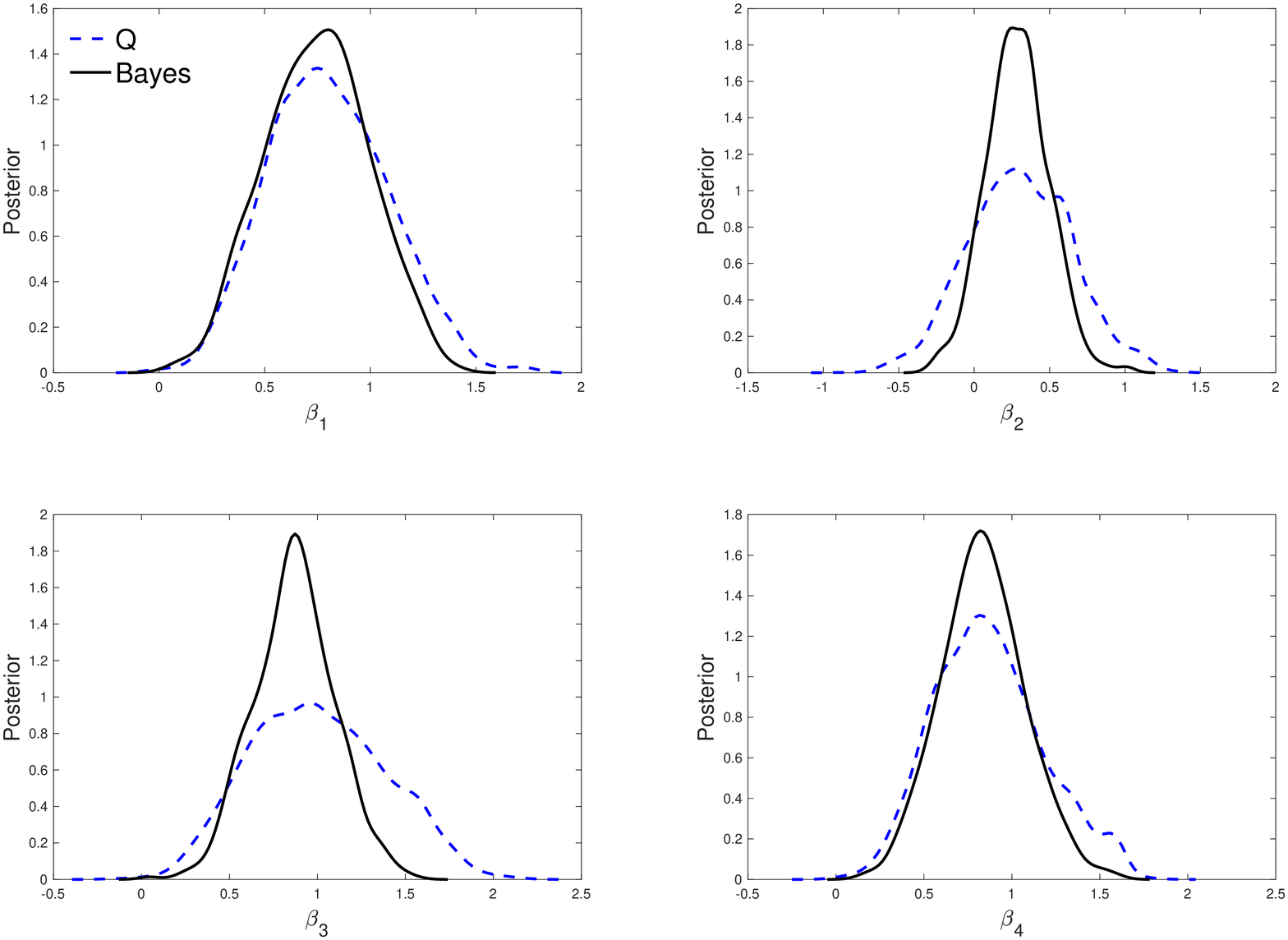}  
	\caption{Exact and Q-posterior comparison in a representative dataset in the random effects probit model when the distribution of random effects is misspecified. }
	\label{fig:two}
\end{figure}

\section{General Loss Function Posteriors}\label{sec:general}

When the model is misspecified there is no reason to assume that the KL minimizer $\theta_\star$ delivers inferences that are fit for purpose. A solution to this issue is to instead conduct inference using generalized posteriors, which produce Bayesian inferences using a loss function that is specific to the problem at hand. To ensure Bayesian inferences remain fit for purpose in misspecified models, \cite{bissiri2016general} suggest replacing the likelihood posterior update with that based on a loss function $q:\mathcal{Y}\times\Theta\rightarrow\mathbb{R}_+$ that is important to the specific inferential task at hand. If one then considers $q_n(\theta)=\sum_{i=1}^{n}q(y_i,\theta)$ as a cumulative loss function over the observed sample, a posterior update can be based on the  generalized posterior
$$
\pi^q_n(\theta)=\frac{\exp\left\{-\omega q_n(\theta)\right\}\pi(\theta)}{\int_\Theta\exp\left\{-\omega q_n(\theta)\right\}\pi(\theta)},
$$for some learning rate $\omega\ge0$.

While useful, generalized posteriors are not well-calibrated, and do not correctly quantify uncertainty in general; see Section 4 of \cite{miller2021asymptotic} for a discussion, and the references therein, for a discussion of existing methods to circumvent this issue.  In this section, we demonstrate that the Q-posterior can be equally applied to generalized Bayesian posteriors to deliver valid uncertainty quantification. Given $q_n(\theta)$, we can represent the information in this loss by the set of $d_\theta$-dimensional estimating equations $\psi_n(\theta)=\nabla_\theta q_n(\theta)$, where we assume that these derivatives exists for all $\theta\in\Theta$, and each $i\ge1$.\footnote{In general, all we will require is that this derivative exists with probability one, which permits the case where the derivatives are not defined at a  countable collection of points; e.g., such as the case where our loss function is median loss.} Letting $V_n(\theta)$ be an estimator of $\text{Cov}\{\psi_n(\theta)/\sqrt{n}\}$, we can update our prior beliefs $\pi(\theta)$ to posterior beliefs using the following Q-posterior
\begin{equation}\label{eq:gengenpost}
	\pi^\Psi_n(\theta)=\frac{\exp\{-\Psi_n(\theta)\}\pi(\theta)}{\int_\Theta\exp\{-\Psi_n(\theta)\}\pi(\theta)\dt\theta},\quad\Psi_n(\theta):=\frac{1}{2}\frac{{\psi_n(\theta)^\top}}{\sqrt{n}} V_n(\theta)^{-1}\frac{\psi_n(\theta)}{\sqrt{n}}.	
\end{equation}

Given the results in Section \ref{sec:large}, the Q-posterior $\pi^\Psi_n(\theta)$ should deliver a generalized Bayesian posterior that correctly quantifies uncertainty without the need to choose any tuning constants; see Theorem \ref{lem:qtwo} for details. Before presenting this result, we first demonstrate the applicability of the Q-posterior $\pi^\Psi_n(\theta)$ by showing in a simple example  that $\pi^\Psi_n(\theta)$ correctly quantifies uncertainty in the context of Bayesian inference for an unknown parameter based on a general loss function.
\subsection{Robust Quantile Inference}\label{ex:med}
	Consider observing an iid sequence $y_1,\dots,y_n$ from the model 
	$$
	y_i=\theta+\epsilon_i,\quad i=1,\dots,u,\quad \epsilon_i\stackrel{iid}{\sim}F(\cdot)
	$$where the unknown parameter $\theta$ has prior density $\pi(\theta)$ and is independent of $\epsilon$. Following \cite{doksum1990consistent}, for reasons  of robustness, we do not wish to estimate the posterior of $\theta$ from $\{y_{i}:i\ge1\}$ but from the sample median $T_n=\text{med}(y_1,\dots,y_n)$; for a related approach see \cite{lewis2021bayesian}. When $F$ has density $f$, this then gives rise to the exact posterior distribution 
	$$
	\pi(\theta|T_n)  = \pi(\theta)\exp\left\{\frac{1}{2}(n-1)\log F(T_n-\theta)\{1-F(T_n-\theta)\}+\log f_\theta(T_n)(1-F))\right\}
	$$
	when $n$ is odd.
	
	Given the above form of the posterior, \cite{miller2021asymptotic} suggests instead generalized Bayesian inference for $\theta$ using the simpler loss function 
	$
	-\frac{1}{2}\log F(T_n-\theta)\{1-F(T_n-\theta)\}
	$. While \cite{miller2021asymptotic} demonstrates that such a posterior is well-behaved in large samples, the resulting posterior does not have correct coverage even when the model is correctly specified. In contrast, we suggest conducting Bayesian inference using the Q-posterior, which would be based on the gradient of the above loss, with respect to $\theta$,
	$$
	m_n(\theta)=\frac{1}{2}\frac{f(T_n-\theta)}{F(T_n-\theta)}-\frac{1}{2}\frac{f(T_n-\theta)}{1-F(T_n-\theta)}.
	$$The variance of the $m_n(\theta)$, for any $\theta\in\Theta$, can be consistently estimated using the iid bootstrap. Letting $W_n(\theta)$ denote such an estimate; we can then conduct inference on $\theta$ using the Q-posterior 
	$$
	\pi_n(\theta)\propto \pi(\theta) \exp\left[-\frac{1}{2}\left\{\frac{1}{2}\frac{f(T_n-\theta)}{F(T_n-\theta)}-\frac{1}{2}\frac{f(T_n-\theta)}{1-F(T_n-\theta)}\right\}^2 W_n(\theta)^{-1}\right].
	$$In this example, we use the estimating equations bootstrap (see, \citealp{hu2000estimating} and \citealp{chatterjee2005generalized} for discussion) to compute the matrix $W_n(\theta)$ at every value of $\theta$.

	We now compare the uncertainty quantification produced using the generalized and Q-posteriors in correctly and misspecified models. In both cases, we assume $F(\cdot)$ is a standard Gaussian distribution. In the first experiment, referred to as DGP1, observed data is generated from a Gaussian distribution with mean $\theta=1$, and with variance 4. In the misspecified regime, referred to as DGP2, we generate observed data from a mixed Gaussian distribution with parameterization
	$
	0.9N(\theta=1,4)+0.1N(0,1).
	$ 
	
	In the first case the true median is unity, while in the second case the true median of the data is actually approximately $0.84$.\footnote{This value must be found numerically by inverting the cdf of the mixture distribution.} We simulate 1000 observed datasets from both DGPs and compare the results of generalized Bayes and that based on the Q-posterior. Table \ref{tab:med} compares the posterior means, variances and coverage of the two procedures. The results demonstrate that the generalized posterior suggested by \cite{miller2021asymptotic} does not produce reliable coverage for the true median, while the coverage of the Q-posterior is again close to the nominal level. 
	
	\begin{table}[H]
		\centering
		{\footnotesize
			\begin{tabular}{lrrrrrr}
				\hline\hline
				& \multicolumn{3}{c}{Q-posterior} & \multicolumn{3}{c}{Exact} \\
				\textbf{DGP1}  & Bias & Var &Cov& Bias & Var &Cov \\ \hline
				$\theta$ &   -0.0031  &  0.0654 &   0.9420  & -0.0031  &  0.0152   & 0.6700\\
				\textbf{DGP2}  & Bias & Var & Cov& Bias & Var &Cov \\ \hline
				$\theta$ &     -0.0115  &  0.0607  &  0.9360  & -0.0118  &  0.0157 &   0.7160
				\\
				\hline\hline
		\end{tabular}}
		\caption{Posterior accuracy results for the median using the generalized posterior and Q-posterior.  Bias is the bias of the posterior mean across the replications. Var is the average posterior variance deviation across the replications. Cov is the posterior coverage (95\% nominal coverage). }
		\label{tab:med}
	\end{table}

\subsection{Uncertainty Quantification}
If the regularity conditions maintained for $m_n(\theta)$ and $W_n(\theta)$ in the likelihood-based Bayesian context are satisfied for $\psi_n(\theta)$ and $V_n(\theta)$ in the generalized Bayesian context, then the posterior $\pi^\Psi_n(\theta)$ correctly quantifies uncertainty.

\begin{theorem}\label{lem:qtwo}Consider that Assumption \ref{ass:prior} is satisfied and that there exist  non-random functions $\psi:\Theta\rightarrow\mathbb{R}^{d_\t}$ and $V:\Theta\rightarrow\mathbb{R}^{d_\t\times d_\t}$ such that Assumptions \ref{ass:infeasible} and \ref{ass:weight} are satisfied with $m_n(\theta)=\psi_n(\theta)$, $m(\theta)=\psi(\t)$, and $W_n(\theta)=V_n(\theta)$. Then, for $\vartheta:=\sqrt{n}(\theta-\theta_\star)-Z_n/\sqrt{n}$, $Z_n:=\Omega_\star^{-1}\nabla_{\theta}\psi(\theta_\star)^\top V(\theta_\star)^{-1}\psi_n(\theta)$, $\Omega_\star=\left[\nabla_{\theta}\psi(\theta_\star )^\top V(\theta_\star)^{-1}\nabla_{\theta}\psi(\theta_\star )\right]$, and $\mathcal{T}_n:=\{\vartheta=\sqrt{n}(\theta-\theta_\star)-Z_n/\sqrt{n}:\theta\in\Theta\}$,  as $n\rightarrow+\infty$, 
	$
	\int_{\mathcal{T}_n}\|\vartheta\||\pi^{\Psi}_{n}(\vartheta)-N\{\vartheta;0,\Omega_\star^{-1}\}|\dt 
	\vartheta=o_p(1),
	$
 and, for $\overline\theta=\int_{\theta}\theta\pi^\Psi_n(\theta)\dt\theta$, 
	$
	\sqrt{n}(\overline\theta-\theta_\star)\Rightarrow N(0,\Omega_\star^{-1}).
	$
\end{theorem}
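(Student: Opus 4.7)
The plan is to adapt the Bernstein--von Mises argument behind Lemmas \ref{lem:two}--\ref{lem:three}, exploiting the hypothesis that $(\psi_n, V_n, \psi, V)$ satisfies precisely the regularity structure imposed on $(m_n, W_n, m, W)$ in the likelihood-based setting. The simulation-error layer invoked there via Theorem \ref{thm:main} is unnecessary here because $\psi_n$ and $V_n$ are analytically known, so one works directly with $\pi^\Psi_n$ rather than with an estimated surrogate.

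Setting $t = \sqrt{n}(\theta - \theta_\star)$, I would first derive a local quadratic expansion of $\Psi_n$ on the $\sqrt{n}$-neighborhood $\|t\|\le M$. The stochastic equicontinuity in Assumption \ref{ass:infeasible}(5) combined with the smoothness in Assumption \ref{ass:infeasible}(3) gives
$$
\frac{\psi_n(\theta)}{\sqrt{n}} = \frac{\psi_n(\theta_\star)}{\sqrt{n}} + \nabla_\theta\psi(\theta_\star)\,t + o_p(1),
$$
uniformly in $\|t\|\le M$, while Assumption \ref{ass:weight}(ii),(iv) yields $V_n(\theta) = V(\theta_\star) + o_p(1)$ uniformly on the same neighborhood. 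Substituting into $\Psi_n(\theta) = \tfrac{1}{2}(\psi_n(\theta)/\sqrt{n})^\top V_n(\theta)^{-1}(\psi_n(\theta)/\sqrt{n})$ and completing the square in $t$ produces
$$
\Psi_n(\theta) = \tfrac{1}{2}\vartheta^\top \Omega_\star \vartheta + c_n + o_p(1),
$$
where $c_n$ is $\theta$-free and $\vartheta = t - Z_n/\sqrt{n}$ is the centred local parameter from the statement. Outside the local neighborhood, the separation condition \ref{ass:weight}(v) combined with the uniform convergence of $\psi_n/n$ and $V_n$ forces $\Psi_n$ to grow at order $n$, so $\pi^\Psi_n$ decays exponentially there and concentrates at rate $\sqrt{n}$.

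With local expansion and tail control in hand, the rest is routine. Continuity of the prior at $\theta_\star$ (Assumption \ref{ass:prior}) gives pointwise convergence of the rescaled posterior density in $\vartheta$ to $N(\vartheta;0,\Omega_\star^{-1})$, Scheff\'e's lemma upgrades this to convergence in total variation, and the prior moment condition of Assumption \ref{ass:prior} delivers the uniform integrability of $\|\vartheta\|$ needed for the $\|\vartheta\|$-weighted statement of the theorem. For the posterior mean,
$$
\sqrt{n}(\overline\theta - \theta_\star) = \int_{\mathcal{T}_n}\vartheta\,\pi^\Psi_n(\vartheta)\,d\vartheta + \frac{Z_n}{\sqrt{n}};
$$
the first term is $o_p(1)$ by the weighted BvM, and Assumption \ref{ass:infeasible}(4) with the continuous mapping theorem gives $Z_n/\sqrt{n}\Rightarrow N(0,\Omega_\star^{-1})$. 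The chief technical difficulty is the interplay between $\psi_n$ and the random weight matrix $V_n(\theta)^{-1}$ in the quadratic form defining $\Psi_n$: expansions and tail bounds must be carried out while simultaneously controlling $V_n^{-1}$ away from singularity uniformly in $\theta$, which is exactly what Assumption \ref{ass:weight}(i),(v) is designed to deliver. Once this coupling is handled, the conclusions follow from the same calculations as in the infeasible ($N\to\infty$) limit of Lemmas \ref{lem:two}--\ref{lem:three}.
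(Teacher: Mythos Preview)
Your proposal is correct and matches the paper's approach: the paper's proof simply states that the argument is identical to Lemma \ref{lem:twog} (the Bernstein--von Mises result for the infeasible Q-posterior) after the substitutions $m_n\mapsto\psi_n$, $W_n\mapsto V_n$, and your sketch reconstructs exactly that argument---local quadratic expansion on a $\sqrt{n}$-neighborhood plus exponential tail decay via the separation condition, followed by the posterior-mean calculation. The only cosmetic difference is that the paper's Lemma \ref{lem:twog} proceeds via the explicit three-region decomposition of \cite{chernozhukov2003mcmc} rather than phrasing the last step through Scheff\'e, but the substantive ingredients you identify (expansion, tail control, prior moments) are the same.
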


The generalized Bayesian posterior of \cite{bissiri2016general} has credible sets whose width is determined by $[\nabla_{\theta}\psi(\theta_\star)]^{-1}$, whereas, those of the Q-posterior $\pi^\Psi_n(\theta)$ are determined by $\Omega(\theta_\star)^{-1}=[\nabla_{\theta}\psi(\theta_\star)]^{-1} V(\theta_\star)^{}[\nabla_{\theta}\psi(\theta_\star)]^{-1}$, which ensures that $\pi^\Psi_n(\t)$ correctly quantifies uncertainty. Existing approaches to coverage correction in generalized Bayesian inference rely on ex-post processing of the posterior and calculating second-derivative information, e.g., \cite{holmes2017assigning}, \cite{syring2019calibrating}, \cite{giummole2019objective}, or the application of expensive bootstrapping procedures, e.g., \cite{matsubara2021robust}. In contrast,   our approach requires no post-processing and delivers an intrinsically (generalized) Bayesian solution to the problem of well-calibrated generalized posterior inferences.

\begin{remark}\normalfont 
	A common class of generalized posteriors occurs when $q(y_i,\theta)$ is itself obtained from a pseudo-likelihood of some sort, including partial likelihoods (\citealp{cox1975partial}) and composite likelihoods (\citealp{lindsay1988composite}, and \citealp{varin2011overview}). For example, Sections 5 and 7 of \cite{miller2021asymptotic} give several useful examples of such approaches and their link with generalized posteriors. However, as acknowledged by \cite{miller2021asymptotic}, generalized posteriors built from these loss functions ``do not exhibit correct coverage, even asymptotically.'' However, in each of the cases covered in \cite{miller2021asymptotic}, the loss function is sufficiently smooth to permit its representation in terms of the Q-posterior $\pi^\Psi_n(\t)$ in \eqref{eq:gengenpost}. Hence, for instance, so long as the composite likelihood is sufficiently smooth in $\theta$, we can apply the Q-posterior to produce generalized Bayesian inferences based on this loss function that have correct coverage. 
\end{remark}


\section{Discussion and Conclusions}\label{sec:discuss}
\subsection{ Alternative Approaches}\label{sec:discuss1}
In a likelihood context, when the model is correctly specified we have that $\Sigma_\star^{-1}=W_\star^{-1}=\mathcal{I}(\theta_\star)^{-1}$, and the Q-posterior and exact posterior asymptotically agree. However, if the model is misspecified, the Q-posterior still yields credible sets that are well-calibrated; such a result will only hold if $W_n(\theta_\star)$ is a consistent estimator of $W_\star$. Otherwise, the Q-posterior will not correctly quantify uncertainty. As discussed in Section \ref{sec:varest}, however, in most cases reliable estimators of $W(\theta_\star)$ are available using existing formulas, or bootstrapping methods.

The Q-posterior approach represents a significant departure from existing approaches to Bayesian inference in possibly misspecified models. Two approaches that have so far received meaningful attention are the `sandwich' correction suggested in \cite{muller2013risk}, and the BayesBag approach, see \cite{huggins2019robust}. 

\citeauthor{muller2013risk}'s approach amounts to correcting the draws from the standard posterior using the explicit Gaussian approximation $\theta\sim N\{\bar\theta,\mathcal{H}_n(\bar\theta)^{-1}W_n(\bar\theta)\mathcal{H}_n(\bar\theta)^{-1}\}$, where $\bar\theta$ is the posterior mean. Such a correction can be implemented either by drawing directly from  a multivariate normal, or by taking each posterior draw $\theta$ and modifying it according to the linear equation $$\widetilde\theta=\bar\theta+\mathcal{H}_n(\bar\theta)^{-1}W_n^{}(\bar\theta)\mathcal{H}_n(\bar\theta)^{1/2}(\theta-\bar\theta);$$see, also, \cite{giummole2019objective} for a related approach in the case of generalized posteriors built using scoring rules. We argue that this \textit{ex-post correction} is sub-optimal for several reasons: firstly, philosophically, it amounts to the application of frequentist principles to the output of a Bayesian learning algorithm, and thus is not intrinsically Bayesian; second, it requires the explicit calculation of second-derivatives, which can be difficult and may be ill-behaved; thirdly, this Gaussian approximation is poor when posteriors are not roughly Gaussian, such as when the parameters have restricted support, or when we have small sample sizes; finally, this correction can easily produce a value of $\tilde\theta$ lying outside the support of $\pi(\theta)$, for instance, when $\Theta$ is a bounded subset of $\mathbb{R}^{d_\theta}$.\footnote{While it may be feasible to transform the draws so that they are restricted to the appropriate space, such transformations may not be invariant, and the choice of which transformation to utilize generally has no theoretical basis.} 

An alternative approach is the use of posterior bagging, as suggested in the BayesBag approach; see \cite{huggins2019robust} for an in-depth discussion of BayesBag. This procedure  attempts to correct posterior coverage through  bagging. Letting $b=1,\dots,B$ denote bootstrap indices, and $\y^{(b)}=(y_1^{(b)},\dots,y_n^{(b)})$ the $b$-th bootstrap sample, where $y_i^{(b)}$ is sampled with replacement from the original dataset. The BayesBag posterior is given by 
$$
\pi^\star(\theta \mid \y)\approx B^{-1}\sum_{b=1}^{B}\pi^\star(\theta\mid \y^{(b)}).
$$BayesBag is easy to use, and requires no additional algorithmic tools. However, it does require re-running the MCMC sampling algorithm to obtain posterior draws of $\theta$ for each $\{\y^{(b)}:b=1,\dots,B\}$. The latter can be computationally intensive if the model is high-dimensional or contains latent variables. Even then, \cite{huggins2019robust} demonstrate that the BayesBag posterior has credible sets that are not well-calibrated. Asymptotically, the BayesBag posterior variance is given by 
$
c^{-1}\mathcal{H}_\star^{-1}+c^{-1}\Sigma^{-1}_\star,
$ where $c=\lim_n B/n$. Hence, depending on the choice of $B$, the BayesBag posterior displays over-or under-coverage. Only when the parameter $\t$ is scalar valued, is it possible to choose $c$ such that the posterior has valid frequentist coverage; i.e., such that  
$
\Sigma_\star^{-1}=c(\mathcal{H}^{-1}_\star+\Sigma_\star^{-1}).
$ Lastly, the applicability  of BayesBag approach itself to models with weakly dependent data, or when the likelihood is intractable,   does not seem straightforward.

In contrast with the ex-post correction approach and BayesBag, the Q-posterior does not require any ex-post sampling correction, or bootstrapping approaches to obtain well-calibrated posteriors. Moreover, it is feasible to use the Q-posterior in cases where pseudo-marginal methods are required to conduct Bayesian inference.

\subsection{Conclusion}\label{sec:conclusions}

We have proposed a new approach to Bayesian inference, referred to generally as Q-posteriors, that deliver reliable uncertainty quantification. In likelihood-based settings the Q-posterior can be thought of as a type of Bayesian synthetic likelihood (\citealp{wood2010statistical}) posterior where we replace the likelihood for the observed sample with an approximation for the likelihood of the score equations. The critical feature of the Q-posterior is that it is guaranteed to deliver credible sets that are asymptotically well-calibrated regardless of model specification, while if the model is correctly specified the Q-posterior agrees with the exact posterior (in large samples). Critically, even when the likelihood must be estimated due to the presence of latent variables, the Q-posterior still delivers well-calibrated inferences under fairly weak regularity conditions. 

When applied to generalized Bayesian posteriors (\citealp{bissiri2016general}), we have shown that the Q-posterior remains well-calibrated. All existing approaches of which we are aware attempt to correct the coverage of generalized posteriors use either ex-post correction of the posterior draws, which are ultimately based on some (implicit) normality assumption on the resulting posterior draws, or  may require complicated bootstrapping approaches. In contrast, the Q-posterior delivers correct uncertainty quantification without the need of any additional tuning or ex-post correction of the draws.

When the likelihood is intractable and must be estimated, a version of the Q-posterior can be constructed using Fisher's identity and simple Monte Carlo estimators. In principle, it would be feasible to apply other Monte Carlo approaches, such as importance sampling, to construct the Q-posterior. This paper restricts itself to simple Monte Carlo estimators as they are the easiest to apply, and theoretically analyze. In ongoing work by the authors, we explore the use of sequential importance sampling estimators, and the resulting performance of the Q-posterior is similar. However, in such cases, analysing the behavior of the posterior becomes more difficult due to the sequential nature of the Monte Carlo estimators, and obtaining theoretical results similar to those in the Theorem \ref{thm:main} becomes more onerous. We leave this important question for future research.

\singlespacing
\bibliographystyle{apalike}
\bibliography{library}

\appendix

\onehalfspacing

\section{Main Lemmas}\label{app:Lemmas}
This section contains  lemmas that are used to prove the results in the main text, and are of independent interest. We first establish the following additional notation used throughout the remainder of this appendix.  For two (possibly random) sequences $a_n,b_n$ we say that  $a_n\lesssim b_n$ if for some $n'$ large enough, and all $n\ge n'$, there exists a $C>0$ such that $a_n\le C b_n$ (almost surely); while we write  $a_n\asymp b_n$ if $a_n\lesssim b_n$ and $b_n\lesssim a_n$ (almost surely).  For a known function $f:\Theta\rightarrow\mathbb{R}^d$,  we will denote the expectation of $f$ wrt a probability measure $H$ as $\E_{H}[f(\theta)]=\int_\Theta f(x)\dt H(x)$.  Throughout, we abuse notation and let $\|\cdot\|$ denote the Euclidean norm in the case of vectors or a convenient matrix norm. The latter abuse of notation is immaterial since we will only treat vectors and matrices of fixed length.

\begin{lemma}\label{lem:four_sim}
	Let $Z\in\mathbb{R}^d$ with $\mu=\E[Z]$, and let $Z_l$ (resp., $\mu_l$), $l=1,\dots, d$, denote the coordinates of $Z$ (resp., $\mu$). Let $M$ be a positive semi-definite matrix with $M=A^\top A$ for some matrix $A$. Assume that,  for some $\sigma>0$, and each $l=1,\dots,d$, $|\mu_j|\le \sigma$, and for $j\in\{2,4\}$, $\E[|Z_l-\mu_l|^j]\le \sigma^j$, then 
	$
	\text{Var}(\|AZ\|^2)\le 2\sigma^4\|M\|^2(d+2)^2.
	$
\end{lemma}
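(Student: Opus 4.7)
The plan is to center $Z$ at its mean, use Minkowski's inequality to split the variance into a purely quadratic and a purely linear fluctuation, and then control each piece using only the coordinate-wise moment bounds together with the positive semi-definite structure of $M$.

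To start, set $Y := Z-\mu$, so that $\E Y=0$ and each coordinate satisfies $\E|Y_l|^j\le\sigma^j$ for $j\in\{2,4\}$. Expanding $\|AZ\|^2=(A\mu+AY)^\top(A\mu+AY)$ gives the identity
$$
\|AZ\|^2 \;=\; Y^\top M Y \;+\; 2\mu^\top M Y \;+\; \mu^\top M\mu,
$$
whose last term is deterministic. By subadditivity of the $L^2$ norm (equivalently, Cauchy-Schwarz applied to the covariance of the two random pieces),
$$
\sqrt{\Var(\|AZ\|^2)}\;\le\;\sqrt{\Var(Y^\top M Y)}\;+\;2\sqrt{\Var(\mu^\top M Y)}.
$$
It therefore suffices to bound each of the two variances.

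For the quadratic piece, positive semi-definiteness of $M$ yields $0\le Y^\top MY\le\|M\|\|Y\|^2$, so $\Var(Y^\top MY)\le\E[(Y^\top MY)^2]\le\|M\|^2\,\E\|Y\|^4$. Expanding $\E\|Y\|^4=\sum_{l,m}\E[Y_l^2 Y_m^2]$ and applying Cauchy-Schwarz termwise with the fourth-moment bound gives $\E[Y_l^2 Y_m^2]\le(\E Y_l^4\E Y_m^4)^{1/2}\le\sigma^4$, hence $\E\|Y\|^4\le d^2\sigma^4$, and therefore $\sqrt{\Var(Y^\top MY)}\lesssim d\,\sigma^2\|M\|$. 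For the linear piece, writing $v=M\mu$, $\Var(\mu^\top MY)=v^\top\Cov(Y)v\le\|\Cov(Y)\|\|v\|^2\le\|\Cov(Y)\|\|M\|^2\|\mu\|^2$; the entrywise Cauchy-Schwarz bound $|\E Y_l Y_m|\le\sigma^2$ together with $\|\Cov(Y)\|\le\tr\Cov(Y)\le d\sigma^2$ and $\|\mu\|^2\le d\sigma^2$ (from $|\mu_l|\le\sigma$) gives a second contribution of the same order.

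Substituting these two bounds into the Minkowski estimate and carefully collecting terms yields the target inequality, with the $(d+2)^2$ factor arising from splitting the combined standard-deviation bound into a $d$-term (from the quadratic piece) plus two constant-type terms (from the linear piece and the cross-coupling). The main obstacle I anticipate is matching the stated constant: a one-shot Cauchy-Schwarz on the covariance of the quadratic and linear pieces gives an estimate of the right order $d^2\sigma^4\|M\|^2$ but with a larger multiplicative constant than $2(d+2)^2$, so the bound on the cross-term $\Cov(Y^\top MY,\mu^\top MY)$ must be handled in a way that recognises its sign structure (it vanishes identically when the distribution of $Y$ is symmetric, and in general can be absorbed into the quadratic piece via an expansion that tracks the contributions of $\E Y_lY_m$ and $\E Y_l Y_m Y_k$ explicitly). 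This careful accounting is the only non-mechanical step; the remainder is a routine combination of Cauchy-Schwarz and the given moment bounds.
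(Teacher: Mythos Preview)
Your decomposition and overall strategy are exactly the paper's: center $Z$, write $\|AZ\|^2=Y^\top MY+2\mu^\top MY+\mu^\top M\mu$, bound the variances of the quadratic and linear pieces separately, and control the cross-covariance by Cauchy--Schwarz. Your Minkowski step is that last move in disguise, since squaring $\sqrt{\Var(X+Y)}\le\sqrt{\Var X}+\sqrt{\Var Y}$ gives $\Var(X+Y)\le\Var X+\Var Y+2\sqrt{\Var X\,\Var Y}$, which is precisely the paper's Cauchy--Schwarz bound on $\Cov(X,Y)$. So the cross-term is already fully handled, and your suspicion that it needs sign-structure arguments or third-moment bookkeeping is a misdiagnosis.

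The real gap is in your bound on the linear variance. From $\Var(\mu^\top MY)\le\|\Cov(Y)\|\,\|M\|^2\|\mu\|^2$ together with $\|\Cov(Y)\|\le d\sigma^2$ and $\|\mu\|^2\le d\sigma^2$ you obtain $\Var(\mu^\top MY)\le d^2\sigma^4\|M\|^2$, hence $2\sqrt{\Var(\mu^\top MY)}\le 2d\,\sigma^2\|M\|$. Combined with your quadratic bound $\sqrt{\Var(Y^\top MY)}\le d\,\sigma^2\|M\|$, Minkowski delivers only $\Var(\|AZ\|^2)\le 9d^2\sigma^4\|M\|^2$, which exceeds $2(d+2)^2\sigma^4\|M\|^2$ for every $d\ge2$. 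Your closing assertion that the linear piece contributes ``constant-type terms'' to the standard deviation contradicts the $d$-dependent estimate you just derived; with your stated bounds you cannot reach the factor $(d+2)^2$, and no refinement of the cross-term changes this.

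The paper gets the constant by claiming the sharper estimate $\mu^\top M\Sigma M\mu\le\sigma^4\|M\|^2$ for the linear variance, via a trace inequality rather than the operator-norm chain $\|\Cov(Y)\|\le\tr\Cov(Y)$ that you use. With that bound in place, the Minkowski combination gives $(\sqrt{2}\,d+2)^2\sigma^4\|M\|^2\le 2(d+2)^2\sigma^4\|M\|^2$. So the place to tighten is the linear piece, not the cross-term.
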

\begin{proof}
	Write 
	\begin{flalign*}
		\|AZ\|^2=Z^{\top}MZ&=(Z-\mu)^{\top}M(Z-\mu)-\mu^{\top}M\mu+2\mu^{\top}MZ\\&=\|A(Z-\mu)\|^2-\|A\mu\|^2+2\mu^{\top}MZ.	
	\end{flalign*}
	Then,
	\begin{flalign*}
		\text{Var}(\|AZ\|^2)&=\text{Var}\{\|A(Z-\mu)\|^2\}+4\text{Var}\{\mu^{\top}MZ\}+2\text{Cov}\{2\mu^{\top}MZ,\|A(Z-\mu)\|^2\}.
	\end{flalign*}From Cauchy-Schwartz,
	$$
	|\text{Cov}\{\mu^{\top}MZ,\|A(Z-\mu)\|^2\}|\le \sqrt{\text{Var}\{\mu^{\top}MZ\}}\sqrt{\text{Var}\{\|A(Z-\mu)\|^2\}}
	$$so that
	\begin{flalign*}
		\text{Var}(\|AZ\|^2)&\le\text{Var}\{\|A(Z-\mu)\|^2\}+4\{\mu^{\top}M\Sigma M\mu\}+2\sqrt{4\{\mu^{\top}M\Sigma M\mu\}}\sqrt{\text{Var}\{\|A(Z-\mu)\|^2\}}.
	\end{flalign*}
	
	Now, we bound $\text{Var}[\|A(Z-\mu)\|^2]$ using the moment hypothesis in the statement of the result. In particular, 
	$$
	\text{Var}[\|A(Z-\mu)\|^2]\le \|M\|^2\text{Var}[\|Z-\mu\|^2],
	$$ and
	\begin{flalign*}
		\text{Var}[\|(Z-\mu)\|^2]&=\sum_{i=1}^{d}\text{Var}[(Z_i-\mu_i)^2]+2\sum_{i=1}^{d}\sum_{j, i\ne j}^d\text{Cov}\left[(Z_i-\mu_i)^2,(Z_i-\mu_j)^2\right]\\&\le d\sigma^4+2\sum_{i=1}^{d}\sum_{j, i\ne j}^d\text{Cov}\left[(Z_i-\mu_i)^2,(Z_i-\mu_j)^2\right]\\&\le d\sigma^4+2\sum_{i=1}^{d}\sum_{j, i\ne j}^d\left\{\text{Var}[(Z_i-\mu_i)^2]\text{Var}[(Z_j-\mu_j)^2]\right\}^{1/2}\\&=d\sigma^4+2d(d-1)\sigma^4\\&\le2\sigma^4d^2.
	\end{flalign*}

For	$A,B$ positive semi-definite matrices of conformable dimension,  we have the following inequality
$$
\tr AB \le \|A\| \tr (B).
$$ 
Using properties of $\tr(\cdot)$, and the above inequality, we can obtain
	\begin{flalign*}
		\mu^{\top}M\Sigma M\mu=\text{tr}(\mu^{\top}M\Sigma M\mu)\le& \sigma^2\tr(M\mu\mu^\top M)\\\le & \sigma^2\|M\|^2\|\mu\|^2\\\le&
		\sigma^4\|M\|^2.
	\end{flalign*}
	Applying the two displayed equations then yields
	$$
	\text{Var}(Z^{\top}MZ) \le 2\sigma^4\|M\|^2[d^2+2+2d]=\sigma^2\|M\|^2(d+2)^2.
	$$
\end{proof}
The following lemma is used to extend the result in Lemma \ref{lem:four_sim} to the case where the matrix in the quadratic form is random.  
\begin{lemma}\label{lem:matlem}Let $A$ be positive-definite, and $B=A+H$ with $\rho:=\|A^{-1}H\|<1$, then 
	$$
	\|B^{-1}-A^{-1}\|\le \|A^{-1}\|^{2}\|H\|/(1-\rho).
	$$	
\end{lemma}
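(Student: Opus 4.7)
The plan is to reduce the bound to a Neumann-series estimate for $B^{-1}$. First, I would use the standard resolvent-style identity
\[
B^{-1}-A^{-1} = B^{-1}(A-B)A^{-1} = -B^{-1}H\,A^{-1},
\]
which is valid whenever both $A$ and $B$ are invertible. Taking the chosen matrix norm (submultiplicative, as implicit throughout the paper) gives
\[
\|B^{-1}-A^{-1}\| \le \|B^{-1}\|\,\|H\|\,\|A^{-1}\|,
\]
so that the whole task reduces to controlling $\|B^{-1}\|$ by $\|A^{-1}\|/(1-\rho)$.

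To obtain that bound, I would factor $B = A(I + A^{-1}H)$, so that $B^{-1} = (I+A^{-1}H)^{-1}A^{-1}$. Since $\rho = \|A^{-1}H\|<1$, the Neumann series
\[
(I+A^{-1}H)^{-1} = \sum_{k\ge 0}(-A^{-1}H)^k
\]
converges in norm, and its norm is bounded by $\sum_{k\ge 0}\rho^k = 1/(1-\rho)$. In particular this also confirms that $B$ is invertible, so the resolvent identity above is justified. Combining this with the previous display yields
\[
\|B^{-1}\| \;\le\; \frac{\|A^{-1}\|}{1-\rho}, \qquad \|B^{-1}-A^{-1}\| \;\le\; \frac{\|A^{-1}\|^{2}\,\|H\|}{1-\rho},
\]
which is the claim.

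There is no real obstacle here: the only thing to verify carefully is that the chosen norm is submultiplicative and that Neumann's series converges in the operator-norm sense under $\rho<1$, both of which are standard. The positive-definiteness of $A$ is not actually used beyond ensuring invertibility, so the proof applies verbatim.
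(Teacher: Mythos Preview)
Your proof is correct and follows essentially the same route as the paper: both use the resolvent identity $B^{-1}-A^{-1}=-B^{-1}HA^{-1}$ (the paper writes it as $A^{-1}-B^{-1}=A^{-1}HB^{-1}$) and the factorization $B=A(I+A^{-1}H)$ to reduce the problem to bounding $\|B^{-1}\|$ by $\|A^{-1}\|/(1-\rho)$. The only cosmetic difference is that you obtain this last bound via the Neumann series for $(I+A^{-1}H)^{-1}$, whereas the paper obtains it by rewriting $B^{-1}=A^{-1}-A^{-1}HB^{-1}$, applying the triangle inequality to get $\|B^{-1}\|\le\|A^{-1}\|+\rho\|B^{-1}\|$, and solving for $\|B^{-1}\|$.
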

\begin{proof}Since $B=A(I+A^{-1}H)$, and $\rho<1$, $B$ is non-singular. Write
	$$
	A^{-1}-B^{-1}=A^{-1}(B-A)B^{-1}=A^{-1}HB^{-1},
	$$ so that 
	\begin{flalign}\label{eq:matbound} 
	\|A^{-1}-B^{-1}\|=\|A^{-1}HB^{-1}\|\le \|A^{-1}H\|\|B^{-1}\|\le \|A^{-1}\|\|H\|\|B^{-1}\|.		
	\end{flalign}
	It remains to bound $\|B^{-1}\|$. By the Woodbury identify,
	$$B^{-1}=(A+H)^{-1}=A^{-1}-A^{-1}H(I+A^{-1}H)^{-1}A^{-1}=A^{-1}-A^{-1}H(A+H)^{-1}=A^{-1}-A^{-1}HB^{-1},$$ where the second to last inequality follows since $(I+A^{-1}B)^{-1}=(A+B)^{-1}A^{}$.  
	
	By the triangle inequality, 
	$$
	\|B^{-1}\|\le \|A^{-1}\|+\|A^{-1}H B^{-1}\|\le \|A^{-1}\|+\|A^{-1}H\|\|B^{-1}\|=\|A^{-1}\|+\rho\|B^{-1}\|,
	$$implying that 
	$$
	\|B^{-1}\|\le {\|A\|^{-1}}/{(1-\rho)}.
	$$Placing the above into equation \eqref{eq:matbound} then yields the result. 
\end{proof}

Lemma \ref{lem:LIE} uses Lemmas \ref{lem:four_sim}-\ref{lem:matlem} to bound the variance of a quadratic form with a random weighting matrix. 
\begin{lemma}\label{lem:LIE}
	Let $Z\in\mathbb{R}^d$ with $\mu=\E[Z]$, and variance $\Sigma$, and let $Z_1,\dots,Z_N$ be iid observations with the same distribution as $Z$. Also, let $Z_{i,l}$ (resp., $\mu_{i,l}$), $l=1,\dots, d$, denote the coordinates of $Z_i$ (resp., $\mu$), and denote by $\overline{Z}$ their sample mean. Let $\widehat\Sigma$ be a random covariance matrix such that $\widehat\Sigma=A^\top A$ for some (random) square-root matrix $A$. For some $\sigma>0$, and each $j=1,\dots,d$, the following moment assumptions are satisfied: for each $l=1,\dots,d$, and $j\in\{2,4\}$, $\E[|Z_{i,l}-\mu_{i,l}|^j]\le \sigma^j$, $|\mu_l|\le \sigma$, $\|\Sigma\|\le \sigma^2$, and $$\E\|\widehat{\Sigma}-\Sigma\|\le C\sigma^2/N,\quad\E\|\widehat{\Sigma}-\Sigma\|^2\le C\sigma^4/N.$$ If $\|\Sigma^{}(\widehat\Sigma^{-1}-\Sigma^{-1})\|<1$ as $N\rightarrow+\infty$, then
	$$
	\E[\overline{Z}^\top \widehat\Sigma^{-1}\overline{Z}]=\mu^\top \Sigma^{-1}\mu+O(\|\sigma^2\Sigma^{-1}\|^2/N)
	$$
	and 
	$$
	\Var(\overline{Z}^\top \widehat\Sigma^{-1}\overline{Z})=O( \|\sigma^2\Sigma^{-1}\|^4/N).
	$$
\end{lemma}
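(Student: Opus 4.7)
The plan is to decompose $\widehat\Sigma^{-1} = \Sigma^{-1} + \Delta$, with $\Delta := \widehat\Sigma^{-1} - \Sigma^{-1}$, so that every quadratic form in $\widehat\Sigma^{-1}$ splits into a piece driven by the non-random matrix $\Sigma^{-1}$ plus a correction governed by $\Delta$. Two preliminary inputs organize the argument: (i) since $\|\Sigma\Delta\|<1$ for $N$ large, Lemma \ref{lem:matlem} delivers both the deterministic bound $\|\Delta\| \lesssim \|\Sigma^{-1}\|^2\|\widehat\Sigma-\Sigma\|$ and the almost-sure bound $\|\Delta\|\le \|\Sigma^{-1}\|$; combined with the hypotheses on $\|\widehat\Sigma-\Sigma\|$ this yields $\E\|\Delta\|^j \lesssim \|\Sigma^{-1}\|^{2j}\sigma^{2j}/N$ for $j\in\{1,2\}$; (ii) the Marcinkiewicz--Zygmund inequality upgrades the hypotheses on $Z_i$ to $\E\|\overline{Z}-\mu\|^2 \lesssim \sigma^2/N$ and $\E\|\overline{Z}-\mu\|^4 \lesssim \sigma^4/N^2$. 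Finally, $\|\Sigma\|\le \sigma^2$ forces $\|\sigma^2\Sigma^{-1}\|\ge 1$, so any lower power of $\|\sigma^2\Sigma^{-1}\|$ is absorbed by a higher one.

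For the mean, I would expand
\begin{equation*}
\overline{Z}^\top \widehat\Sigma^{-1}\overline{Z} = \mu^\top \Sigma^{-1}\mu + \mu^\top \Delta\mu + 2\mu^\top \widehat\Sigma^{-1}(\overline{Z}-\mu) + (\overline{Z}-\mu)^\top \widehat\Sigma^{-1}(\overline{Z}-\mu).
\end{equation*}
The first term is the target. The second is $O(\|\sigma^2\Sigma^{-1}\|^2/N)$ by the first-moment bound on $\Delta$ and $\|\mu\|^2\lesssim \sigma^2$. In the third, $\E[\mu^\top \Sigma^{-1}(\overline{Z}-\mu)]=0$ and the $\Delta$-correction is bounded by Cauchy--Schwarz as $\|\mu\|(\E\|\Delta\|^2)^{1/2}(\E\|\overline{Z}-\mu\|^2)^{1/2} \lesssim \|\sigma^2\Sigma^{-1}\|^2/N$. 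The fourth equals $\tr(\Sigma^{-1}\Sigma)/N = d/N$ (absorbed into $\|\sigma^2\Sigma^{-1}\|^2/N$) plus a $\Delta$-remainder of order $N^{-3/2}$ by the same Cauchy--Schwarz reasoning.

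For the variance, use $\Var(\overline{Z}^\top \widehat\Sigma^{-1}\overline{Z}) \le 2\Var(\overline{Z}^\top \Sigma^{-1}\overline{Z}) + 2\E[(\overline{Z}^\top \Delta\overline{Z})^2]$. The first summand is handled by applying Lemma \ref{lem:four_sim} with $Z=\overline{Z}$ and $M=\Sigma^{-1}$: tracking through that lemma's proof with the smaller centered-moment scales $\sigma^2/N$ and $\sigma^4/N^2$, the dominant $\Var(\mu^\top M\overline{Z})$ piece yields $O(\sigma^4\|\Sigma^{-1}\|^2/N) = O(\|\sigma^2\Sigma^{-1}\|^4/N)$. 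For the second summand, the operator inequality $(\overline{Z}^\top \Delta\overline{Z})^2 \le \|\Delta\|^2\|\overline{Z}\|^4$ combined with $\|\overline{Z}\|^4\lesssim \|\mu\|^4 + \|\overline{Z}-\mu\|^4$ splits it in two: the $\|\mu\|^4$ piece contributes $\sigma^4\E\|\Delta\|^2 \lesssim \|\sigma^2\Sigma^{-1}\|^4/N$, and the $\|\overline{Z}-\mu\|^4$ piece is controlled by the almost-sure bound $\|\Delta\|^2 \le \|\Sigma^{-1}\|^2$, which gives $\|\Sigma^{-1}\|^2\E\|\overline{Z}-\mu\|^4 \lesssim \|\sigma^2\Sigma^{-1}\|^2/N^2$, strictly smaller than the target.

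The chief obstacle is that $\widehat\Sigma$ is not assumed independent of the $Z_i$'s, so every factorisation of an expectation must go through Cauchy--Schwarz; yet only second moments of $\|\widehat\Sigma-\Sigma\|$ are available. The resolution in both the mean and variance bounds is the same: whenever a higher-than-second moment of $\|\Delta\|$ would otherwise be required, trade a power of $\|\Delta\|$ for $\|\Sigma^{-1}\|$ using the deterministic bound $\|\Delta\|\le \|\Sigma^{-1}\|$. Careful bookkeeping of powers of $\sigma$, $\|\Sigma^{-1}\|$, and $N$, together with $\|\sigma^2\Sigma^{-1}\|\ge 1$, then closes both statements.
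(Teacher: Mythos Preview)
Your proof is correct and takes a genuinely different route from the paper. The paper argues via the law of iterated expectation and variance, conditioning on $\widehat\Sigma$: it writes $\E[\overline{Z}^\top\widehat\Sigma^{-1}\overline{Z}\mid\widehat\Sigma]=\mu^\top\widehat\Sigma^{-1}\mu+\tr\{\text{Cov}(\overline{Z})\widehat\Sigma^{-1}\}$ and similarly decomposes the variance as $\E\{\Var(\cdot\mid\widehat\Sigma)\}+\Var\{\E(\cdot\mid\widehat\Sigma)\}$, invoking Lemma~\ref{lem:four_sim} conditionally. This implicitly assumes that the conditional first two moments of $\overline{Z}$ given $\widehat\Sigma$ coincide with the unconditional ones, i.e.\ an independence-type condition that is not stated in the lemma (and which, in the application inside Theorem~\ref{thm:main}, does \emph{not} hold since $\widehat m_n$ and $\widehat W_n$ are built from the same simulations). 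Your approach sidesteps this entirely: by writing $\widehat\Sigma^{-1}=\Sigma^{-1}+\Delta$ and bounding every cross-term with Cauchy--Schwarz, you never factor an expectation over $(\overline{Z},\widehat\Sigma)$. The almost-sure bound $\|\Delta\|\le\|\Sigma^{-1}\|$ (immediate from $\|\Sigma\Delta\|<1$) is the device that lets you trade away a power of $\|\Delta\|$ whenever only second moments are available, and your observation $\|\sigma^2\Sigma^{-1}\|\ge1$ (from $\|\Sigma\|\le\sigma^2$ and $\|\Sigma\|\|\Sigma^{-1}\|\ge1$) cleanly absorbs the lower-power remainders into the stated orders. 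In short, the paper's route is shorter but leans on an unstated conditional-moment assumption; yours is longer but proves the lemma exactly as written.
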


\begin{remark}\normalfont The moment assumptions in Lemma \ref{lem:LIE} for $Z_{i}$ are precisely the same as those required in Lemma \ref{lem:four_sim}. The condition  $\|\Sigma^{}(\widehat\Sigma^{-1}-\Sigma^{-1})\|<1$, as $N\rightarrow+\infty$ is required in order to apply the perturbation result in Lemma \ref{lem:matlem}. Such a condition is useful as it allows us to bound the moments of  $\overline{Z}^\top \widehat\Sigma^{-1}\overline{Z}$ using  moment of $\widehat{\Sigma}-\Sigma$, rather than moments of $\widehat{\Sigma}^{-1}-\Sigma^{-1}$, which is  less interpretable. Along with the other conditions, the moment assumptions on $\widehat{\Sigma}-\Sigma$ is enough to control the first two moments of  $\overline{Z}^\top \widehat\Sigma^{-1}\overline{Z}$, which depend on the variance $\overline{Z}$, as well as the variance of $\widehat\Sigma$. 
\end{remark}

\begin{proof}
	We first prove the mean result. From the law of iterated expectations, and properties of quadratic forms
	\begin{flalign*}
		\E[\overline{Z}^\top \widehat\Sigma^{-1}\overline{Z}]&=\E\{\E[\overline{Z}^\top \widehat\Sigma^{-1}\overline{Z}\mid\widehat\Sigma^{-1}]\}\\&=\E\left[\mu^\top \widehat{\Sigma}^{-1}\mu+\tr\{ \text{Cov}(\overline{Z})\widehat\Sigma^{-1}\}\right]\\&=\mu^\top \Sigma^{-1}\mu+N^{-1}\tr (\Sigma\Sigma^{-1})+\E\left[\mu^\top (\widehat\Sigma^{-1}-\Sigma^{-1})\mu+N^{-1}\tr [\Sigma (\widehat\Sigma^{-1}-\Sigma^{-1})]\right]. 
	\end{flalign*}

For $A,B$ real, square matrices of similar dimension, and $\|A\|_F$ the Frobenius norm of $A$, recall the inequality 
$$
|\tr (AB)|\le \|A\|_F\|B\|_F
$$Over matrices of fixed dimension, all matrix norms are equivalent, and we have that $$|\tr(AB)|\le c\|A\|\|B\|$$ for some finite $c>0$ and our chosen matrix norm $\|\cdot\|$. Using the triangle inequality and the above inequality then yields
	\begin{flalign*}
		|\mu^\top [\widehat\Sigma^{-1}-\Sigma^{-1}]\mu+N^{-1}\tr\{ \Sigma (\widehat\Sigma^{-1}-\Sigma^{-1})\}|\le& |\mu^\top (\widehat\Sigma^{-1}-\Sigma^{-1})\mu|+N^{-1}|\tr\{\Sigma (\widehat\Sigma^{-1}-\Sigma^{-1})\}|\\\lesssim& \|\mu\mu^\top\|\|\widehat\Sigma^{-1}-\Sigma^{-1}\|+N^{-1}\|\Sigma\|\|\widehat\Sigma^{-1}-\Sigma^{-1}\|.
	\end{flalign*}
	By Lemma \ref{lem:matlem}, for $N$ large enough such that $\|\Sigma(\widehat\Sigma^{-1}-\Sigma^{-1})\|<1$, 
	\begin{equation}\label{eq:perturb}
		\|[\widehat\Sigma^{-1}-\Sigma^{-1}]\|\lesssim \|\Sigma^{-1}\|^2\|\widehat\Sigma-\Sigma\|. 
	\end{equation}
	Applying equation \eqref{eq:perturb} then yields 
	\begin{flalign*}
		|\mu^\top [\widehat\Sigma^{-1}-\Sigma^{-1}]\mu+N^{-1}\tr \Sigma [\widehat\Sigma^{-1}-\Sigma^{-1}]|\lesssim&\{\|\mu\mu^\top\|+N^{-1}\|\Sigma\|\}\|\widehat\Sigma^{-1}-\Sigma^{-1}\|\\\lesssim&\{\|\mu\mu^\top\|+N^{-1}\|\Sigma\|\}\|\Sigma^{-1}\|^2\|\widehat\Sigma-\Sigma\|.
	\end{flalign*}Taking expectations,  
	\begin{flalign*}
		\E|\mu^\top [\widehat\Sigma^{-1}-\Sigma^{-1}]\mu+N^{-1}\tr\{ \Sigma (\widehat\Sigma^{-1}-\Sigma^{-1})\}|\lesssim&\|\mu\mu^\top\|\|\Sigma^{-1}\|^2\sigma^2/N+\|\Sigma\|\|\Sigma^{-1}\|^2\sigma^2/N^2\\ \lesssim &
		\sigma^4\|\Sigma^{-1}\|^2/N+\sigma^4\|\Sigma^{-1}\|^2/N^2, 
	\end{flalign*}where we have used the fact that $\|\mu\mu^\top\|\lesssim\|\mu^2\|^2\lesssim \sigma^2$, and $\|\Sigma\|\lesssim \sigma^2$.

	To upper bound the variance, we use the law of iterated variance:
	\begin{flalign}\label{eq:varterm}
		\text{Var}(\overline{Z}^\top\widehat\Sigma^{-1}\overline{Z})=\E\{\text{Var}(\overline{Z}^\top\widehat\Sigma^{-1}\overline{Z}\mid\widehat\Sigma)\}+\text{Var}[\E(\overline{Z}^\top\widehat\Sigma^{-1}\overline{Z}\mid\widehat\Sigma)],
	\end{flalign}and control each term separately. For the first term in \eqref{eq:varterm}, we can directly apply Lemma \ref{lem:four_sim}, conditional on $\widehat\Sigma$ and \eqref{eq:perturb}, to obtain the bound 
	\begin{flalign*}
		\text{Var}(\overline{Z}^\top\widehat\Sigma^{-1}\overline{Z}\mid\widehat\Sigma)\lesssim \|\sigma^2\widehat\Sigma^{-1}\|^2/N&\leq \frac{\sigma^4}{N}\left\{\|(\widehat\Sigma^{-1}-\Sigma^{-1})\|^2+\|\Sigma^{-1}\|^2\right\}\\&\lesssim \frac{\sigma^4}{N}\|\Sigma^{-1}\|^2\left(1+\|\Sigma^{-1}\|^2\|\widehat{\Sigma}-\Sigma\|^2\right);
	\end{flalign*}and applying the moment bound on $\widehat\Sigma-\Sigma$ we have  
	\begin{equation*}
		\E\{\text{Var}(\overline{Z}^\top\widehat\Sigma^{-1}\overline{Z}\mid\widehat\Sigma)\}\lesssim\frac{\sigma^4}{N} \|\Sigma^{-1}\|^2(1+\|\Sigma^{-1}\|^2\sigma^4/N)=O(N^{-1}\|\sigma^2\Sigma^{-1}\|\vee N^{-2} \|\sigma^2\Sigma^{-1}\|^4).
	\end{equation*}Since $\|\sigma^2\Sigma^{-1}\|<\infty$, the first term in \eqref{eq:varterm} is upper bounded by $O(N^{-1}\|\sigma^2\Sigma^{-1}\|)$ for all $N$ large enough.
	
	To bound the second term in \eqref{eq:varterm}, use the proof of the first expectation, conditional on $\widehat\Sigma$, to deduce 
	\begin{flalign*}
		\E(\overline{Z}^\top \widehat\Sigma^{-1}\overline{Z}\mid\widehat\Sigma^{-1})&=\mu^\top \widehat\Sigma^{-1}\mu+\tr( \Sigma\widehat\Sigma^{-1})/N\\&=\mu^\top\Sigma^{-1}\mu+\mu^\top (\widehat\Sigma^{-1}-\Sigma^{-1})\mu+N^{-1}\tr [\Sigma(\widehat\Sigma^{-1}-\Sigma^{-1})]+d/N.
	\end{flalign*}Take the variance of the above and use Cauchy-Schwartz to obtain 
	\begin{flalign}
		\Var [\E(\overline{Z}^\top \widehat\Sigma^{-1}\overline{Z}\mid\widehat\Sigma^{-1})]&\le \Var[\tr (\widehat\Sigma^{-1}-\Sigma^{-1})\mu\mu^\top]+N^{-2}\Var[\tr (\widehat\Sigma^{-1}-\Sigma^{-1})\Sigma]\nonumber\\&+2\sqrt{\Var[\tr (\widehat\Sigma^{-1}-\Sigma^{-1})\mu\mu^\top]}\sqrt{N^{-2}\Var[\tr (\widehat\Sigma^{-1}-\Sigma^{-1})\Sigma]}\label{eq:varbound}
	\end{flalign}
	
	Consider the first term in equation \eqref{eq:varbound}; from the trace inequality  
	\begin{flalign*}
		|\tr (\widehat\Sigma^{-1}-\Sigma^{-1})\mu\mu^\top|\le \|\widehat\Sigma^{-1}-\Sigma^{-1}\|\tr(\mu\mu^\top)\lesssim \sigma^2 \|\widehat\Sigma^{-1}-\Sigma^{-1}\|.
	\end{flalign*}Applying equation \eqref{eq:perturb}, we have  
	$$
|\tr [(\widehat\Sigma^{-1}-\Sigma^{-1})\mu\mu^\top ]|\lesssim \sigma^2 \|\Sigma^{-1}\|^2\|\widehat\Sigma-\Sigma\|.
	$$ Hence, using the moment bound assumption,
	\begin{flalign}
		\Var\{\tr [(\widehat\Sigma^{-1}-\Sigma^{-1})\mu\mu^\top]\}\lesssim \sigma^4\|\Sigma^{-1}\|^4\E\|\widehat\Sigma-\Sigma\|^2\lesssim \|\sigma^2\Sigma^{-1}\|^4/N. \label{eq:varbound1}
	\end{flalign}	
	For the second term in \eqref{eq:varbound}, again by the trace inequality,  and \eqref{eq:perturb}
	\begin{flalign*}
		|\tr\{ \Sigma (\widehat\Sigma^{-1}-\Sigma^{-1})\}|\lesssim \|\Sigma\|\|\Sigma^{-1}-\Sigma^{-1}\|\lesssim \sigma^2\|\Sigma^{-1}-\Sigma^{-1}\|\lesssim \sigma^2\|\Sigma^{-1}\|\|\widehat\Sigma-\Sigma\|,
	\end{flalign*}since $\|\Sigma\|\leq \sigma^2$. Thus, similar to the first term in \eqref{eq:varbound}
	\begin{flalign}
		\Var[\tr \{\Sigma(\widehat\Sigma^{-1}-\Sigma^{-1})\}]\lesssim \sigma^4\|\Sigma^{-1}\|^4\E\|\widehat\Sigma-\Sigma\|^2\lesssim \|\sigma^2\Sigma^{-1}\|^4/N. \label{eq:varbound2}
	\end{flalign}

	Applying the variance bounds in \eqref{eq:varbound1}-\eqref{eq:varbound2} in \eqref{eq:varbound}, and re-arranging terms, we obtain 
	\begin{flalign*}
		\Var \{\E(\overline{Z}^\top \widehat\Sigma^{-1}\overline{Z}\mid\widehat\Sigma^{-1})\}&\lesssim  \|\sigma^2\Sigma^{-1}\|^4/N+\|\sigma^2\Sigma^{-1}\|^4/N^3+\|\sigma^2\Sigma^{-1}\|^4/N^2\\&=O(\|\sigma^2\Sigma^{-1}\|^4/N).
	\end{flalign*}
	
\end{proof}

\begin{lemma}\label{lem:tse}
	Let $Z$ be a positive, scalar-valued random variable with mean $\mu$ that satisfies $\E\left[(Z-\mu)^2\right]\le B b^2$ for some $B,b>0$. Then, for any $0\le \lambda \le 1/b$, 
	$$
	\E\left[\exp(-\lambda Z)\right]\le \exp(-\lambda\mu)+B(\lambda b)^2.
	$$
\end{lemma}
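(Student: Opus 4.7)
The plan is to apply second-order Taylor expansion of $\phi(x):=e^{-\lambda x}$ around $x=\mu$ and use positivity of $Z$ to dispose of the remainder.

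First, $\phi$ is $C^2$ on $[0,\infty)$ with $\phi'(x)=-\lambda e^{-\lambda x}$ and $\phi''(x)=\lambda^2 e^{-\lambda x}$. Since $Z\ge 0$ almost surely, $\mu=\E[Z]\ge 0$ as well. Taylor's theorem with the Lagrange form of the remainder then gives, for some (random) $\xi$ lying between $Z$ and $\mu$,
$$
e^{-\lambda Z}=e^{-\lambda\mu}-\lambda e^{-\lambda\mu}(Z-\mu)+\tfrac{1}{2}\lambda^{2} e^{-\lambda\xi}(Z-\mu)^{2}.
$$
The crucial observation is that $\xi\ge 0$, because both endpoints $Z$ and $\mu$ are nonnegative; consequently $e^{-\lambda\xi}\le 1$ (using $\lambda\ge 0$). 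Substituting this bound yields the pointwise inequality
$$
e^{-\lambda Z}\le e^{-\lambda\mu}-\lambda e^{-\lambda\mu}(Z-\mu)+\tfrac{1}{2}\lambda^{2}(Z-\mu)^{2}.
$$

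Next I take expectations. The linear term vanishes since $\E[Z-\mu]=0$, and the moment hypothesis gives $\E[(Z-\mu)^2]\le Bb^2$, so
$$
\E[e^{-\lambda Z}]\le e^{-\lambda\mu}+\tfrac{1}{2}\lambda^{2}\E[(Z-\mu)^{2}]\le e^{-\lambda\mu}+\tfrac{1}{2}B(\lambda b)^{2}\le e^{-\lambda\mu}+B(\lambda b)^{2},
$$
which is the desired bound.

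There is essentially no ``hard part'' here; the only step that requires care is the remainder bound, which depends entirely on the fact that both $Z$ and $\mu$ are nonnegative so that the intermediate point $\xi$ satisfies $\xi\ge 0$ and hence $e^{-\lambda\xi}\le 1$. Interestingly, the stated constraint $0\le\lambda\le 1/b$ is not used in this argument, and in fact the proof delivers a slightly stronger bound with $\tfrac{1}{2}B(\lambda b)^{2}$ in place of $B(\lambda b)^{2}$. The hypothesis $\lambda\le 1/b$ presumably plays a role later, where the lemma is invoked to control Laplace-type integrals.
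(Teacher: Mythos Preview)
Your proof is correct and follows essentially the same approach as the paper: a second-order Taylor expansion of $e^{-\lambda Z}$ around $\mu$ with Lagrange remainder, the observation that the intermediate point is nonnegative so $e^{-\lambda\xi}\le 1$, and then taking expectations. Your remark that the constraint $\lambda\le 1/b$ is not actually needed and that a sharper constant $\tfrac12 B(\lambda b)^2$ is available is also correct; the paper's proof likewise does not use this constraint and simply discards the $1/2$.
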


\begin{proof}
	For any $Z\ge0$, a Taylor expansion of $\exp(-\lambda Z)$  around $Z=\mu$ yields 
	\begin{flalign*} 
		\exp (-\lambda Z) & = \exp(-\lambda\mu)\sum_{n=0}^{\infty}\frac{(-1)^n\lambda^n}{n!}(Z-\mu)^n=\mathcal{P}(Z)+\mathcal{E}(Z),
	\end{flalign*}where $\mathcal{P}(Z)$ is the truncation of the series
	$$
	\mathcal{P}(Z):=\exp(-\lambda\mu)\left\{1-\lambda(Z-\mu)\right\}
	$$
	and $\mathcal{E}(Z)$ is the Lagrange remainder term, 
	$$
	\mathcal{E}(Z):=\exp(-\lambda\zeta)\frac{\lambda^2}{2!}(Z-\mu)^2
	$$with $\zeta$ a constant between $Z$ and $\mu$. 
	
	Boundedness of $\exp(-x)$ over $x\ge0$, implies that for any $\lambda\ge0$ and $Z\ge0$
	$$
	\mathcal{E}(Z)\leq\lambda^2(Z-\mu)^2.
	$$
	Consequently,
	\begin{flalign*}
		\exp (-\lambda Z) = \mathcal{P}(Z)+\mathcal{E}(Z)\le \exp(-\lambda\mu)\left\{1-\lambda(Z-\mu)\right\}+{\lambda^2}(Z-\mu)^2.
	\end{flalign*}By hypothesis, $\E\left[(Z-\mu)^2\right]\le B b^2<\infty$ for some $B,b>0$. Take expectations of both sides to obtain  
	\begin{flalign*}
		\E\left[\exp (-\lambda Z)\right] &\le \exp(-\lambda\mu)\left\{1-\lambda\E(Z-\mu)\right\}+\lambda^2\E(Z-\mu)^2\\&\le \exp(-\lambda\mu)+B(\lambda b)^2.
	\end{flalign*}
\end{proof}

The following result demonstrates that if the regularity conditions in Assumptions \ref{ass:infeasible}-\ref{ass:prior} are satisfied for $m_n(\theta)$ and $W_n(\theta)$, then the (possibly infeasible) Q-posterior $\pi^Q_n(\theta)$ is asymptotically Gaussian. Of course, if the regularity conditions are satisfied with $m_n(\theta)=\psi_n(\theta)$ and $W_n(\theta)=V_n(\theta)$, then the result also applies to the Q-posterior $\pi^\Psi_n(\theta)$ based on the loss function $q_n(\theta)$ with $\Sigma_{\star}=\Omega_\star$  (see Section \ref{sec:general} for details).

\begin{lemma}\label{lem:twog}Under Assumptions \ref{ass:tails}-\ref{ass:prior}, as $n\rightarrow+\infty$, 
	$
	\int_{\mathcal{T}_n}\|t\||\pi^Q_{n}(\vartheta)-N\{\vartheta;0,\Sigma_\star^{-1}\}|\dt \vartheta=o_p(1).
	$
\end{lemma}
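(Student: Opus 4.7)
The plan is to establish a Bernstein-von Mises type result for the Q-posterior via the standard two-step localization strategy: first show posterior concentration at $\theta_\star$ at the $n^{-1/2}$ rate, then carry out a quadratic expansion of $Q_n(\theta)$ on a shrinking neighborhood and identify the limiting Gaussian shape.

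First I would establish concentration. Let $N_\delta = \{\theta : \|\theta-\theta_\star\|\le\delta\}$ and let $\delta_n \downarrow 0$ at a rate slower than $n^{-1/2}$ (say $\delta_n = n^{-1/2}\log n$). Outside $N_{\delta_n}$, Assumption \ref{ass:weight}(v) together with the uniform consistency of $n^{-1}m_n(\theta)$ to $m(\theta)$ (Assumption \ref{ass:infeasible}(1)) and of $W_n(\theta)$ to $W(\theta)$ (Assumption \ref{ass:weight}(ii)) yields that there exists $c > 0$ such that with $P^{(n)}_0$-probability tending to one, $Q_n(\theta) \ge c n$ uniformly on $\Theta \setminus N_{\delta_n}$, whereas near $\theta_\star$ we have $Q_n(\theta_\star) = O_p(1)$. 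Combined with Assumption \ref{ass:prior}'s integrability of $|W_n(\theta)|^{-1/2}\|\theta-\theta_\star\|^p\pi(\theta)$, this shows the unnormalized weight outside $N_{\delta_n}$ is exponentially negligible compared to the mass inside, and so $\int_{\Theta\setminus N_{\delta_n}} \|\sqrt{n}(\theta-\theta_\star)\|\pi_n^Q(\theta)\dt\theta = o_p(1)$, which in the $\vartheta$ parameterization handles the tail contribution.

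Next, on the shrinking neighborhood, I would carry out the quadratic expansion. Write $s_n := m_n(\theta_\star)/\sqrt{n}$ and $H := \mathcal{H}(\theta_\star)$. By Assumption \ref{ass:infeasible}(4), $s_n \Rightarrow N(0,W_\star)$, in particular $s_n = O_p(1)$. For $\theta \in N_{\delta_n}$, a first-order expansion together with the stochastic equicontinuity condition in Assumption \ref{ass:infeasible}(5) gives
\begin{equation*}
n^{-1/2} m_n(\theta) = s_n - H \sqrt{n}(\theta-\theta_\star) + R_n(\theta),\qquad \sup_{\theta\in N_{\delta_n}} \|R_n(\theta)\|/(1+\sqrt{n}\|\theta-\theta_\star\|) = o_p(1).
\end{equation*}
Using Assumption \ref{ass:weight}(ii),(iv) to replace $W_n(\theta)^{-1}$ by $W_\star^{-1}$ uniformly on $N_{\delta_n}$ with error $o_p(1)$, setting $u = \sqrt{n}(\theta-\theta_\star)$, $\Sigma_\star = H^\top W_\star^{-1} H$, and completing the square in $u$ with the centering $Z_n/\sqrt{n} = \Sigma_\star^{-1}H^\top W_\star^{-1} s_n$, I obtain
\begin{equation*}
Q_n(\theta) - Q_n(\theta_\star) = \tfrac{1}{2}\vartheta^\top \Sigma_\star \vartheta - \tfrac{1}{2}(Z_n/\sqrt{n})^\top \Sigma_\star (Z_n/\sqrt{n}) + r_n(\vartheta),
\end{equation*}
where $\vartheta = u - Z_n/\sqrt{n}$ and the remainder satisfies $\sup_{\|\vartheta\|\le \delta_n\sqrt{n}} |r_n(\vartheta)|/(1+\|\vartheta\|^2) = o_p(1)$.

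After this expansion, the factor $|W_n(\theta)|^{-1/2}$ and the prior $\pi(\theta)$ are continuous at $\theta_\star$ by Assumptions \ref{ass:weight}(ii) and \ref{ass:prior}, so both converge uniformly on $N_{\delta_n}$ to their values at $\theta_\star$ up to a multiplicative $1+o_p(1)$ term. Changing variables to $\vartheta$, the unnormalized density is proportional to $\exp\{-\tfrac{1}{2}\vartheta^\top\Sigma_\star \vartheta\}(1+o_p(1))$ uniformly on the localized set, and the $\vartheta$-independent factors cancel in the normalization. A standard argument, following for instance the treatment in Kleijn and van der Vaart (2012), then yields $L^1$ convergence of $\pi_n^Q(\vartheta)$ to $N\{\vartheta;0,\Sigma_\star^{-1}\}$ on $\mathcal{T}_n$. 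The upgrade from ordinary TV to weighted TV with the factor $\|\vartheta\|$ uses the concentration step above together with the prior moment condition in Assumption \ref{ass:prior} (so that the truncated first absolute moment of $\pi_n^Q$ converges to that of the Gaussian), and the dominated convergence argument closes out the claim.

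The main obstacle is the uniformity of the quadratic expansion for $Q_n$, since $Q_n$ depends on $\theta$ through both $m_n(\theta)$ and the weighting $W_n(\theta)^{-1}$. Handling the interaction is delicate because $W_n(\theta)$ need not be smooth; this is exactly where the uniform consistency in Assumption \ref{ass:weight}(ii) together with the stochastic equicontinuity of $m_n$ in Assumption \ref{ass:infeasible}(5) do the work, letting one pin the error term at $o_p(1+\|\vartheta\|^2)$ over the localized region. The second genuine difficulty is the $\|\vartheta\|$-weighted upgrade, which requires the explicit tail bound from the identification condition in Assumption \ref{ass:weight}(v) combined with the prior moment bound to rule out posterior mass escaping to infinity along the $\vartheta$ axis.
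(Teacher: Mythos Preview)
Your overall strategy---posterior concentration plus a local quadratic expansion---is the same as the paper's, which follows the Chernozhukov--Hong (2003) template with a three-region decomposition of $\mathcal{T}_n$. However, there is a concrete gap in your concentration step. You claim that Assumption~\ref{ass:weight}(v) together with uniform consistency gives $Q_n(\theta)\ge cn$ uniformly on $\Theta\setminus N_{\delta_n}$ with $\delta_n=n^{-1/2}\log n$. This does not follow: the identification condition in Assumption~\ref{ass:weight}(v) is stated for \emph{fixed} $\epsilon>0$, and for $\theta$ with $\|\theta-\theta_\star\|$ of order $n^{-1/2}\log n$ the criterion $Q_n(\theta)$ is only of order $(\log n)^2$, not $n$. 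The exponential separation you invoke therefore fails on the intermediate annulus $\delta_n\le\|\theta-\theta_\star\|\le\delta$ for any fixed $\delta>0$, and this is precisely the region that carries the bulk of the $\|\vartheta\|$-weighted mass you need to control.

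The paper handles this by splitting into three regions: a compact core $\|t\|\le h$, an intermediate shell $h\le\|t\|\le\delta\sqrt{n}$, and the far tail $\|t\|\ge\delta\sqrt{n}$. On the intermediate shell the quadratic expansion with the remainder bound $|R_n(\theta)|/(1+n\|\theta-\theta_\star\|^2)\le\tfrac14\lambda_{\min}(\Sigma_\star)$ (your $r_n(\vartheta)=o_p(1+\|\vartheta\|^2)$, via Lemma~\ref{lem:remain}) is used to show $\omega(t)\le -\tfrac14 t^\top\Sigma_\star t$, giving Gaussian-tail decay rather than the $e^{-cn}$ bound. Only on the far tail does identification yield the $e^{-\epsilon n}$ bound, which is then combined with the prior moment condition in Assumption~\ref{ass:prior}. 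You already have the right remainder control in hand---you even flag it in your obstacles paragraph---but your two-region split misplaces where it is deployed. The fix is either to run your local expansion out to a \emph{fixed} $\delta$ (not a shrinking $\delta_n$) and use the $o_p(1+\|\vartheta\|^2)$ remainder to dominate the quadratic on the whole of $\|\vartheta\|\le\delta\sqrt{n}$, or to adopt the paper's three-region split explicitly.
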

\begin{proof}[Proof of Lemma \ref{lem:twog}]
	The approach used to prove this result is similar to that given in \cite{lehmann2006theory}, Theorem 8.2, Ch 6, as well as Theorem 1 in \cite{chernozhukov2003mcmc}. Our proof differs, however, since the matrix $W_n(\theta)$ is allowed to be singular away from $\theta_\star$, and since $\Theta$ is not assumed compact. {This makes the proof most similar to Lemma 1 in \cite{frazier2022bayesian}.} 
	
Throughout the remainder of this proof, let us abuse notation and define
$$
Q_n(\theta)=-\frac{1}{2}\frac{m_n(\theta)^\top}{\sqrt{n}}W_n(\theta)^{-1}\frac{m_n(\theta)}{\sqrt{n}}.
$$
For an appropriately defined remainder term $R_n(\theta)$, we have the identity
\begin{flalign}
	Q_n(\theta)-Q_n(\theta_{\star})&= -\{m_n(\theta)/\sqrt{n}\}^\top W(\theta_\star)^{-1}\mathcal{H}(\theta_\star)^\top\Sigma_\star^{-1}\Sigma_\star\sqrt{n}(\theta-\theta_\star)-\frac{n}{2}(\theta-\theta_\star)\Sigma_\star(\theta-\theta_\star)+R_n(\theta)\nonumber
	\\&=-\frac{1}{2}t^{\top}\Sigma_{\star}t+\frac{1}{2}\frac{1}{n}Z_n^{\top}\Sigma_{\star}Z_n+R_n(\theta),\label{eq:new3}
\end{flalign}where $Z_n:=\Sigma_\star^{-1}\mathcal{H}(\theta_\star)^\top W_\star^{-1}m_n(\theta_\star)$, and $t:=\sqrt{n}(\theta-\theta_\star)-Z_n/\sqrt{n}$. Define $T_n:=\theta_\star+Z_n/n$,
\begin{flalign*}
	\omega(t)&:= Q_n\left(T_n+t/\sqrt{n}\right)-Q_n(\theta_\star)-\frac{1}{2n}Z_n^{\top}\Sigma_\star Z_n,
\end{flalign*}	
and apply \eqref{eq:new3} to see that 
$$
\omega(t)=-\frac{1}{2}t^{\top}\Sigma_{\star}t+R_n(T_n+t/\sqrt{n}).$$

Recalling $\mathcal{T}_n=\{\sqrt{n}(\theta-\theta_\star)-Z_n/\sqrt{n}:\theta\in\Theta\}$, and for $M_n(\theta)=|W_n(\theta)|^{-1/2}$, 
\begin{flalign*}
	\pi^Q_n(t)&=\frac{M_n\left(T_n+t/\sqrt{n}\right)\exp\left\{Q_n\left(T_n+t/\sqrt{n}\right)\right\}\pi\left(T_n+t/\sqrt{n}\right)}{\int_{\mathcal{T}_n} M_n\left(T_n+t/\sqrt{n}\right)\exp\left\{Q_n\left(T_n+t/\sqrt{n}\right)\right\}\pi\left(T_n+t/\sqrt{n}\right)\dt t}\\&=\frac{M_n\left(T_n+t/\sqrt{n}\right)\exp\left\{Q_n\left(T_n+t/\sqrt{n}\right)-Q_n(\theta_\star)-\frac{1}{2n}Z_n^{\top}\Sigma_\star^{-1}Z_n\right\}\pi\left(T_n+t/\sqrt{n}\right)}{\int_{\mathcal{T}_n} M_n\left(T_n+t/\sqrt{n}\right)\exp\left\{Q_n\left(T_n+t/\sqrt{n}\right)-Q_n(\theta_\star)-\frac{1}{2n}Z_n^{\top}\Sigma_\star^{-1}Z_n\right\}\pi\left(T_n+t/\sqrt{n}\right)\dt t}\\&={M_n\left(T_n+t/\sqrt{n}\right)\exp\left\{ \omega(t) \right\}\pi\left(T_n+t/\sqrt{n}\right)}/{C_n},
\end{flalign*}where
$$
C_n=\int_{\mathcal{T}_n} M_n\left(T_n+t/\sqrt{n}\right)\exp\left\{ \omega(t) \right\}\pi\left(T_n+t/\sqrt{n}\right)\dt t.
$$

The stated result follows if
\begin{flalign*}
	\int_{\mathcal{T}_n}  \|t\|^{\gamma}\left|\pi^Q_n(t)-N\{t;0,\Sigma_{\star}^{-1}\}\right|\dt t&=C_n^{-1}J_n=o_p(1), 
\end{flalign*}
where
\begin{flalign*}
	J_{n}&=\int_{\mathcal{T}_n} \|t\|^{\gamma}\bigg{|}M_n\left(T_n+\frac{t}{\sqrt{n}}\right)\exp\left\{\omega(t)\right\} \pi_{}\left(T_n+\frac{t}{\sqrt{n}}\right)-C_nN\{t;0,\Sigma_\star^{-1}\}\bigg{|} \dt t\\&=\int_{\mathcal{T}_n} \|t\|^{\gamma}\bigg{|}M_n\left(T_n+\frac{t}{\sqrt{n}}\right)\exp\left\{\omega(t)\right\} \pi_{}\left(T_n+\frac{t}{\sqrt{n}}\right)-C_n\frac{\left|\Sigma_\star\right|^{1/2 }}{(2\pi)^{{d_\theta/2}}} \exp \left\{-\frac{1}{2} t^{\top}\Sigma_\star t\right\}\bigg{|} \dt t.
\end{flalign*}
However, $$J_{n}\leq J_{1n}+J_{2n},$$ where
\begin{flalign*}
	J_{1n}&:= \int_{\mathcal{T}_n} \|t\|^{\gamma}\left|M_n\left(T_n+\frac{t}{\sqrt{n}}\right)\exp \left\{{\omega_{}(t)}{}\right\} \pi_{}\left(T_n+\frac{t}{\sqrt{n}}\right)-\pi(\theta_\star)|M(\theta_\star)|^{\frac{1}{2}}\exp \left\{-\frac{1}{2} t^{\top} \Sigma_\star^{}t\right\} \right| \dt t\\
	J_{2n}&:=\left|C_{n}\frac{\left|\Sigma_\star\right|^{1/2 }}{(2\pi)^{{d_\theta/2}}}-\pi(\theta_\star)M(\theta_\star)\right|\int_{\mathcal{T}_n}  \|t\|^{\gamma}\exp \left\{-\frac{1}{2} t^{\top} \Sigma_\star^{} t\right\} \dt t ,
\end{flalign*}for $M(\theta_\star):=|W(\theta_\star)|^{-1/2}$.

Note that if $J_{1n}=o_{p}(1)$, then 
\begin{flalign*}
	C_n&=\pi(\theta_\star)M(\theta_\star)\int_{\mathbb{R}^{d_\theta}}\exp\left\{-\frac{1}{2}t^\top \Sigma_\star t\right\}\dt t+o_p(1)=\pi(\theta_\star)M(\theta_\star)\frac{(2\pi)^{d_\theta/2}}{|\Sigma_\star|^{1/2}}+o_p(1).
\end{flalign*}Therefore, if $J_{1n}=o_p(1)$, 
\begin{flalign*}
	J_{2n}=\left|C_{n}\frac{\left|\Sigma_\star\right|^{1/2 }}{(2\pi)^{{d_\theta/2}}}-\pi(\theta_\star)M(\theta_\star)\right|\int_{\mathcal{T}_n}  \|t\|^{\gamma}\exp \left\{-\frac{1}{2} t^{\top} \Sigma_\star^{} t\right\} \dt t =o_p(1),
\end{flalign*}since $\int_{\mathbb{R}^{d_\theta}} \|t\|^{\gamma}\exp \left\{-\frac{1}{2} t^{\top} \Sigma_\star^{} t\right\}\dt t<\infty$ for any $0\le\gamma<\infty$.

Consequently, the result follows if we can prove that $J_{1n}=o_p(1)$. To demonstrate this, we split $\mathcal{T}_n$ into three regions and analyze $J_{1n}$ over each region. For some $0\le h<\infty$ and $\delta>0$, with $\delta=o(1)$, the regions are defined as follows. Region 1: $ \|t\|\leq h$; Region 2: $  h<\|t\|\leq \delta \sqrt{n}$;  Region 3: $  \|t\|\geq \delta \sqrt{n}$.

\medskip

\noindent\textbf{\textbf{Region 1}:} Over this region the result follows if
$$
\|t\|^\gamma \left|M_n\left(T_n+\frac{t}{\sqrt{n}}\right)\exp\left\{\omega(t)\right\} \pi_{}\left(T_n+\frac{t}{\sqrt{n}}\right)-\pi(\theta_\star) M(\theta_\star)\exp \left\{-\frac{1}{2} t^{\top} \Sigma_\star^{} t\right\}\right|=o_p(1).
$$
Note that, from Assumptions \ref{ass:weight} and \ref{ass:prior},
\begin{flalign*}
	\quad \sup_{\|t\|\leq h} \left\|M_n\left(T_n+{t}/{\sqrt{n}}\right)-M(\theta_\star)\right\|=o_{p}(1),\text{ and }\sup_{\|t\|\leq h}&\left|\pi\left(T_n+{t}/{\sqrt{n}}\right)-\pi(\theta_\star)\right|=o_{p}(1),
\end{flalign*}
 and $$T_n=\theta_\star+Z_n/n=\theta_\star+o_{p}(1),$$ since $Z_n/\sqrt{n}=O_p(1)$ by Assumption \ref{ass:infeasible}. Similarly, by Assumption \ref{ass:infeasible},
$$
\sup_{\|t\|\le h}\left\|T_n+t/\sqrt{n}-\theta_\star\right\|=O_p(1/\sqrt{n})
$$
so that by Lemma \ref{lem:remain}
$$
\sup_{\|t\|\le h}|R_n(T_n+t/\sqrt{n})|=o_p(1).
$$
Hence, $J_{1n}=o_{p}(1)$ from these equivalences and the dominated convergence theorem.

\bigskip

\noindent\textbf{\textbf{Region 2}:}
If $\delta=o(1)$, then $\sup_{h\le\|t\|\le \delta \sqrt{n}}\|M_n\left(T_n+t/\sqrt{n}\right)-M(\theta_\star)\|^{}=o_{p}(1)$ by Assumption \ref{ass:infeasible} and the continuous mapping theorem.
Then,  ${J}_{1n}\leq C_{1n}+C_{2n}+C_{3n}$ for $h$ large enough and $\delta=o(1)$, where
\begin{flalign*}
	C_{1n}:=&C\int_{h\leq \|t\| \leq \delta \sqrt{n}}\|t\|^{\gamma}\exp(-t^{\top}\Sigma_\star^{}t/2)\sup _{h\le\|t\| \leq \delta \sqrt{n}}\left|\exp\left\{|R_n(T_n+t/\sqrt{n})|\right\}\left\{ \pi_{}\left(T_n+{t}/{\sqrt{n}}\right)-\pi_{}\left(\theta_\star\right)\right\}\right|\dt t \\
	C_{2n}:=&C\int_{h\leq \|t\| \leq \delta \sqrt{n}}\|t\|^{\gamma} \exp(-t^{\top}\Sigma_\star^{}t/2)\exp\left\{|R_n(T_n+t/\sqrt{n})|\right\} \pi_{}\left(T_n+{t}/{\sqrt{n}}\right) \dt t \\C_{3n}:=&C\pi_{}\left(\theta_\star\right) \int_{h\leq \|t\| \leq \delta \sqrt{n}}\|t\|^{\gamma}\exp(-t^{\top}\Sigma_\star^{}t/2)\dt t .
\end{flalign*}

The first term  $C_{1n}=o_{p}(1)$ for any fixed $h$, so that $C_{1n}=o_{p}(1)$ for $h\rightarrow+\infty$, by the dominated convergence theorem. For  $C_{3n}$, we have that for any $0\le\gamma\le2$ there exists some $h'$ large enough such that for all $h>h'$, and $\|t\|\ge h$ $$\|t\|^{\gamma}\exp\left(-t^{\top}\Sigma_{\star}t/2\right)=O(1/h).$$ Hence, $C_{3n}$ can be made arbitrarily small by taking $h$ large enough and $\delta$ small enough.

The result follows if $C_{2n}=o_p(1)$. To this end, we show that, for some $C>0$, and all $h\le\|t\|\le\delta \sqrt{n}$,  with probability converging to one (wpc1),
\begin{equation}\label{eq:bound1}
	\exp(-t^{\top}\Sigma_\star t/2)\exp\left\{|R_n(T_n+t/\sqrt{n})|\right\}\pi(T_n+t/\sqrt{n})\le C\exp\left\{-t^{\top}\Sigma_\star t/4\right\}.
\end{equation}
If equation \eqref{eq:bound1} is satisfied, then $C_{2n}$ is bounded above by
\begin{flalign*}
	C_{2n}\le &C\int_{h\le\|t\|\le\delta \sqrt{n}}\|t\|^{\gamma}\exp\left(-t^{\top}\Sigma_\star t/4\right)\dt t,
\end{flalign*}which can be made arbitrarily small for some $h$ large and $\delta$ small. To demonstrate equation \eqref{eq:bound1}, first note that by continuity of $\pi(\theta)$, Assumption \ref{ass:prior}, $\pi(T_n+t/\sqrt{n})$ is bounded over $\{t:h\le \|t\|\le\delta \sqrt{n}\}$ so that it may be dropped from the analysis. 

Now, since $\|T_n-\theta_\star\|=o_p(1)$, for any $\delta>0$, $\|T_n+t/\sqrt{n}-\theta_\star\|<2\delta$ for all $\|t\|\le\delta \sqrt{n}$ and $n$ large enough. 
Therefore, by Lemma \ref{lem:remain} there exists some $\delta'>0$ and $h$ large enough so that (wpc1)
$$
\sup_{h\le\|t\|\le\delta' \sqrt{n}}\frac{|R_n(T_n+t/\sqrt{n})|}{1+\|t+\Sigma_\star^{-1}Z_n/\sqrt{n}\|^2}\le \frac{1}{4}\lambda_{\text{min}}(\Sigma_\star),
$$where $\lambda_{\text{min}}(\Sigma_\star)$ denotes the minimum eigenvalue of $\Sigma_\star$. Since $Z_n/\sqrt{n}=O_p(1)$, we have $n^{-1}Z_n^{\top}\Sigma_\star^{-1}Z_n=O_p(1)$. Thus, on the set $h\le\|t\|\le\delta' \sqrt{n}$ 
$$
|R_n(T_n+t/\sqrt{n})|\le \frac{1}{4}\lambda_{\text{min}}(\Sigma_\star^{})+\frac{1}{4}\|t+\Sigma_\star^{-1}Z_n/\sqrt{n}\|^2\lambda_{\text{min}}(\Sigma_\star^{})\le O_p(1)+\frac{1}{4}\|t\|^2\lambda_{\text{min}}(\Sigma_\star^{}),
$$ and for some $C>0$
\begin{flalign*}
	\exp\{\omega(t)\}&\leq \exp\left\{-\frac{1}{2}t^{\top}\Sigma_{\star}^{}t+|R_n(T_n+t/\sqrt{n})|\right\}\\&\le C\exp\left\{-\frac{1}{2}t^\top \Sigma_\star^{}t+\frac{1}{4}\lambda_{\text{min}}(\Sigma_\star^{})\|t\|^2\right\}\\&\leq C\exp\left(-\frac{1}{4}t^{\top}\Sigma_\star^{}t\right).
\end{flalign*}
The result follows.

\medskip

\noindent\textbf{\textbf{Region 3}:} For $\delta \sqrt{n}$ large,
$$
\int_{\|t\|\ge \delta \sqrt{n}}\|t\|^{\gamma}N\{t;0,\Sigma_{\star}^{-1}\}\dt t$$ can be made arbitrarily small and is therefore dropped from the analysis. Using the definition of $\omega(t)$, and the identity $\theta=T_n+t/\sqrt{n}$, consider
\begin{align*}
	{J}_{1n}& :=\int_{\|t\|\ge \delta \sqrt{n}}\|t\|^{\gamma}M_{n}\left(T_n+t / \sqrt{n}\right)\exp\{\omega(t) \} \pi\left(T_n+t / \sqrt{n}\right)\dt t , \\
	& =n^{(d_\theta+\gamma)/2}\int_{\|\theta-T_n\|\ge \delta }\|\theta-T_n\|^{\gamma}M_n(\theta)\exp\left\{Q_n(\theta)-Q_n(\theta_\star)-\frac{1}{2n}Z_n^\top \Sigma_\star^{-1}Z_n\right\} \pi\left(\theta\right)\dt \theta.
\end{align*}Now,
\begin{flalign*}
	{J}_{1n}& =\exp\left\{-\frac{1}{2n}Z_n^\top \Sigma_\star^{-1}Z_n\right\}n^{(d_\theta+\gamma)/2}\int_{\|\theta-T_n\|\ge \delta }\|\theta-T_n\|^{\gamma}M_n(\theta)\exp\left\{Q_n(\theta)-Q_n(\theta_\star)\right\} \pi\left(\theta\right)\dt \theta,\\&=O_p(1)n^{(d_\theta+\gamma)/2}\int_{\|\theta-T_n\|\ge \delta }\|\theta-T_n\|^{\gamma}M_n(\theta)\exp\left\{Q_n(\theta)-Q_n(\theta_\star)\right\} \pi\left(\theta\right)\dt \theta
\end{flalign*}
since $n^{-1}Z_n^\top \Sigma_\star^{-1}Z_n=O_{p}(1)$ by Assumption \ref{ass:infeasible}.

Define $Q(\theta):=-m(\theta)^{\top}W(\theta)m(\theta)/2$ and note that $Q(\theta_\star)=0$ by Assumption \ref{ass:infeasible} and $W(\theta_\star)$ is positive-definite by Assumption \ref{ass:weight}. Thus, for any $\delta>0$,
\begin{flalign*}
	\sup_{\|\theta-\theta_\star\|\ge \delta}\frac{1}{n}\left\{Q_n(\theta)-Q_n(\theta_\star)\right\}\leq& \sup_{\|\theta-\theta_\star\|\ge \delta}2|n^{-1}Q_n(\theta)-Q(\theta)|+\sup_{\|\theta-\theta_\star\|\ge \delta}\left\{Q(\theta)-Q(\theta_\star)\right\}.
\end{flalign*}
From Assumptions~\ref{ass:infeasible} and \ref{ass:weight}, the first term converges to zero in probability. From Assumption~\ref{ass:weight}, for any $\delta>0$ there exists an $\epsilon>0$ such that
$$
\sup_{\|\theta-\theta_\star\|\ge \delta}\left\{Q(\theta)-Q(\theta_\star)\right\}\le -\epsilon.
$$
Hence, 
\begin{equation}
	\label{eq:expconv}
	\lim_{n\rightarrow+\infty}P^{(n)}_0\left[\sup_{\|\theta-\theta_\star\|\geq \delta}\exp\left\{Q_n(\theta)-Q_n(\theta_\star)\right\}\leq \exp(-\epsilon n)\right]=1.
\end{equation}
Use $T_n=\theta_\star+o_p(1)$, the definition $M_n(\theta)=| W_n(\theta)^{-1}|^{1/2}$, and equation \eqref{eq:expconv} to  obtain
\begin{align*}
	{J}_{1n} & \le \{1+o_p(1)\}O_p(1)n^{(d_\theta+\gamma)/2}\int_{\|\theta-\theta_{0}\|\ge \delta }M_n(\theta)\|\theta-\theta_\star\|^{\gamma}\pi\left(\theta\right)\exp\{Q_n(\theta)-Q_n(\theta_\star)\}\dt \theta\\&\leq O_p(1) \exp\left(-\epsilon n\right)n^{(d_\theta+\gamma)/2}\int_{\|\theta-\theta_\star\|\ge \delta }M_n(\theta)\|\theta-\theta_\star\|^{\gamma}\pi\left(\theta\right)\dt \theta\\&\leq  O_p\left\{\exp\left(-\epsilon n\right)n^{(d_\theta+\gamma)/2}\right\}
	\\&=o_p(1); 
\end{align*}where 
the second inequality follows from the moment hypothesis in Assumption~\ref{ass:prior}.
\end{proof}

The following result is used in the proof of Lemma \ref{lem:two} and is a consequence of Proposition 1 in \cite{chernozhukov2003mcmc}.
\begin{lemma}\label{lem:remain} Under Assumption \ref{ass:infeasible}, and for $R_n(\theta)$ as defined in the proof of Lemma \ref{lem:two}, for each $\epsilon>0$ there exists a sufficiently small $\delta>0$ and $h>0$ large enough, such that
	$$
	\limsup_{n\rightarrow+\infty}P^{(n)}_0\left[\sup_{h/\sqrt{n}\le \|\theta-\theta_0\|\le\delta}\frac{|R_n(\theta)|}{1+\sqrt{n}^2\|\theta-\theta_0\|^2}>\epsilon\right]<\epsilon
	$$	and
	$$
	\limsup_{n\rightarrow+\infty}P^{(n)}_0\left[\sup_{ \|\theta-\theta_0\|\le h/\sqrt{n}}{|R_n(\theta)|}>\epsilon\right]=0.
	$$	
\end{lemma}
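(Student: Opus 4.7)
The plan is to exhibit $R_n(\theta)$ as a sum of cross-products arising from two sources of error in the quadratic expansion of $Q_n$---the linearization error in the score $m_n/\sqrt n$ at $\theta_\star$, and the weight-matrix perturbation $W_n(\theta)-W_\star$---and then bound each piece uniformly. The argument is the standard stochastic-equicontinuity remainder analysis of \cite{chernozhukov2003mcmc}, Proposition~1, adapted to our weighted quadratic form.

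Concretely, I will write $a_n(\theta):=m_n(\theta)/\sqrt n$ and introduce the linearization remainder
\[
E_n(\theta):=a_n(\theta)-a_n(\theta_\star)+\sqrt n\,\mathcal H(\theta_\star)(\theta-\theta_\star)=\sqrt n\,r_n(\theta)+\sqrt n\{m(\theta)-m(\theta_\star)+\mathcal H(\theta_\star)(\theta-\theta_\star)\},
\]
where $r_n(\theta)=n^{-1}\{m_n(\theta)-m_n(\theta_\star)\}-\{m(\theta)-m(\theta_\star)\}$. Assumption~\ref{ass:infeasible}(5) with $\theta'=\theta_\star$ controls the stochastic piece by $\epsilon(1+\sqrt n\|\theta-\theta_\star\|)$ uniformly on $\|\theta-\theta_\star\|\le\delta_1$ with probability approaching one, while Assumption~\ref{ass:infeasible}(3), using $m(\theta_\star)=0$ and $\nabla_{\theta}m(\theta_\star)=-\mathcal H(\theta_\star)$, bounds the deterministic Taylor piece by $\epsilon\|\theta-\theta_\star\|$ for $\|\theta-\theta_\star\|\le\delta_2$. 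Choosing $\delta=\min(\delta_1,\delta_2)$ gives $\sup_{\|\theta-\theta_\star\|\le\delta}\|E_n(\theta)\|/(1+\sqrt n\|\theta-\theta_\star\|)\le 2\epsilon$ with probability approaching one. In parallel, Assumption~\ref{ass:weight}(i)--(ii) combined with Lemma~\ref{lem:matlem} yields $\sup_{\|\theta-\theta_\star\|\le\delta}\|W_n(\theta)^{-1}-W_\star^{-1}\|=o_p(1)$, and $\|a_n(\theta_\star)\|=O_p(1)$ by Assumption~\ref{ass:infeasible}(4).

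Substituting $a_n(\theta)=a_n(\theta_\star)-\sqrt n\,\mathcal H(\theta_\star)(\theta-\theta_\star)+E_n(\theta)$ and $W_n(\theta)^{-1}=W_\star^{-1}+[W_n(\theta)^{-1}-W_\star^{-1}]$ into $Q_n(\theta)=-\tfrac12 a_n(\theta)^\top W_n(\theta)^{-1}a_n(\theta)$ and cancelling the leading linear and quadratic terms of the expansion displayed in the proof of Lemma~\ref{lem:twog}, $R_n(\theta)$ reduces to a finite sum of cross-products of the three types
\[
E_n(\theta)^\top W_\star^{-1}E_n(\theta),\qquad E_n(\theta)^\top W_\star^{-1}\{a_n(\theta_\star)-\sqrt n\,\mathcal H(\theta_\star)(\theta-\theta_\star)\},\qquad a_n(\theta)^\top[W_n(\theta)^{-1}-W_\star^{-1}]a_n(\theta).
\]
On the annulus $h/\sqrt n\le\|\theta-\theta_\star\|\le\delta$, the inequality $(1+\sqrt n\|\theta-\theta_\star\|)^2\le 2(1+n\|\theta-\theta_\star\|^2)$ combined with the bounds from the previous paragraph makes each cross-product, divided by $1+n\|\theta-\theta_\star\|^2$, at most $O_p(\epsilon)+o_p(1)+O_p(1)/\sqrt{1+h^2}$; choosing first $h$ large and then $\epsilon$ small delivers the first probability bound. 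On the ball $\|\theta-\theta_\star\|\le h/\sqrt n$, both $\sqrt n\|\theta-\theta_\star\|$ and $n\|\theta-\theta_\star\|^2$ are bounded by $h$ and $h^2$ respectively, so $\|E_n(\theta)\|\le 2\epsilon(1+h)$ and each cross-product is $O_p(\epsilon)+o_p(1)$ uniformly; letting $\epsilon\downarrow 0$ yields the second, sharper statement.

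The main obstacle is the algebraic bookkeeping of the cross-products at the two different scales: in the moderate-deviations regime one must absorb the cross term $\|E_n(\theta)\|\cdot\sqrt n\|\theta-\theta_\star\|$ into the denominator $1+n\|\theta-\theta_\star\|^2$ by exploiting $(1+\sqrt n\|\theta-\theta_\star\|)\lesssim\sqrt{1+n\|\theta-\theta_\star\|^2}$, while in the shrinking-ball regime both scales collapse to $O(1)$ and the bound is obtained directly. The weight-matrix contribution is the subtle piece, since $\|W_n^{-1}-W_\star^{-1}\|$ is only $o_p(1)$ but multiplies $\|a_n(\theta)\|^2$, which can be as large as $O_p(1)+n\|\theta-\theta_\star\|^2$; the perturbation estimate in Lemma~\ref{lem:matlem}, together with the positivity of $W_n(\theta)$ from Assumption~\ref{ass:weight}(i), is what makes this contribution tractable.
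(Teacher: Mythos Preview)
Your proposal is correct and essentially reconstructs, from first principles, the content of Proposition~1 in \cite{chernozhukov2003mcmc}. The paper itself does not carry out this argument: its proof consists solely of invoking that proposition as a black box and checking that Assumptions~\ref{ass:infeasible}--\ref{ass:weight} imply the sufficient conditions (i)--(iv) stated there. So the two routes agree in substance but differ in presentation: the paper delegates to an external result, while you unpack the underlying stochastic-equicontinuity remainder analysis explicitly, decomposing $R_n(\theta)$ into the linearization-error cross-products and the weight-matrix perturbation and bounding each on the two scales separately. Your route is more self-contained and makes transparent why the remainder is negligible; the paper's route is shorter but requires the reader to consult \cite{chernozhukov2003mcmc}.

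Two minor points. First, the term you write as $O_p(1)/\sqrt{1+h^2}$ is not actually needed: once $\|E_n(\theta)\|\le 2\epsilon(1+\sqrt n\|\theta-\theta_\star\|)$ is in hand, every cross-product involving $E_n$ already carries an $\epsilon$ factor after division by $1+n\|\theta-\theta_\star\|^2$, so the annulus bound is simply $O_p(\epsilon)+o_p(1)$ and the role of $h$ is only to separate the two regimes, not to kill an extra term. Second, although the lemma statement mentions only Assumption~\ref{ass:infeasible}, both your argument and the paper's verification step use Assumption~\ref{ass:weight} to control $W_n(\theta)^{-1}-W_\star^{-1}$; this is a harmless understatement in the lemma as written, not a gap in your proof.
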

\begin{proof}[Proof]
	The result is a specific case of Proposition~1 in \cite{chernozhukov2003mcmc}. Therefore, it is
	only necessary to verify that their sufficient conditions are satisfied in our context.
	
	Assumptions (i)-(iii) in their result follow
	directly from Assumptions \ref{ass:infeasible} and \ref{ass:weight}, and the normality of $\sqrt{n}m_n(\theta_\star)$ in 
	Assumption~\ref{ass:infeasible}. Their Assumption~(iv) is stated as follows: for any $\epsilon>0$, there is a $\delta>0$ such that
	$$
	\limsup_{n\rightarrow+\infty}P^{(n)}_0\left\{\sup_{\|\theta-\theta'\|\le\delta}\frac{\sqrt{n}\|
		\{m_n(\theta)-m_n(\theta')\}-\{\mathbb{E}\left[m_n(\theta)\right]-\mathbb{E}
		\left[m_n(\theta')\right]\}\|}{1+n\|\theta-\theta'\|}>\epsilon\right\}<\epsilon . 
	$$
	In our context, this condition is satisfied by Assumption \ref{ass:infeasible}(iv). Hence, the result follows. 
\end{proof}

\section{Proofs of Main Results}

\begin{proof}[Proof of Lemma 1]
Consider the Q-posterior under the mean parameterization $\mu=g(\eta)=\nabla_\eta A(\eta)$, with prior beliefs $\pi(\mu)\propto\exp\{-\frac{1}{2}(\mu-\mu_0)^\top W_0^{-1}(\mu-\mu_0)\}$, where $\mu_0$ and $W_0$ are known prior hyper-parameters. Writing $\bar{S}_n=n^{-1}S_n$, we see that  
$$
n^{-1}m_n(\eta)=g(\eta)-\bar{S}_n=\mu-\bar{S}_n=n^{-1}m_n(\mu).
$$Hence, writing $\bar{S}_n=n^{-1}S_n$, the Q-posterior for $\mu$ is given by 
$$
\pi_n^Q(\mu)\propto \exp\{-\frac{n}{2}\left(\mu-\bar{S}_n\right)^\top W_n^{-1}\left(\mu-\bar{S}_n\right)\}\exp\{-\frac{1}{2}(\mu-\mu_0)^\top W_0^{-1}(\mu-\mu_0)\}.
$$
Algebraic manipulations produce 
\begin{flalign*}
&\exp\left\{-\frac{1}{2}\mu^\top\left[(W_n/n)^{-1}+W_0^{-1}\right]^{}\mu-\mu^\top\left[(W_n/n)^{-1}\bar{S}_n+\mu_0\right]\right\}	
\\&=\exp\left\{-\frac{1}{2}\mu^\top\left[(W_n/n)^{-1}+W_0^{-1}\right]^{}\mu-\mu^\top\left[(W_n/n)^{-1}+W_0^{-1}\right]\left[(W_n/n)^{-1}+W_0^{-1}\right]^{-1}\left[(W_n/n)^{-1}\bar{S}_n+\mu_0\right]\right\}\\&\propto\exp\bigg{\{}-\frac{1}{2}\left(\mu-\Sigma_n^{-1}\left[(W_n/n)^{-1}\bar{S}_n+\mu_0\right]\right)^\top \Sigma_n \left(\mu-\Sigma_n^{-1}\left[(W_n/n)^{-1}\bar{S}_n+\mu_0\right]\right)\bigg{\}}\\&=\exp\left\{-\frac{1}{2}(\mu-b_n)^\top\Sigma_n(\mu-b_n)\right\}
\end{flalign*}
for
\begin{flalign*}
\Sigma_n^{-1}&=\left[(W_n/n)^{-1}+W_0^{-1}\right]^{-1}\\&=W_0\left[n^{-1}W_n+W_0\right]^{-1}W_n\frac{1}{n},
\end{flalign*}
and where the second equality follows from the Woodbury identity, and where 
\begin{flalign*}
b_n&=\Sigma_n^{-1}\left[(W_n/n)^{-1}\bar{S}_n+\mu_0\right]\\&=W_0\left[n^{-1}W_n+W_0\right]^{-1}\bar{S}_n+W_0\left[n^{-1}W_n+W_0\right]^{-1}W_n\frac{\mu_0}{n}\mu_0.
\end{flalign*}
Thus, we see that $\pi_n^Q(\mu)=N\{\mu;b_n,\Sigma_n^{-1}\}$. For a regular exponential family, the change-of-parameter from $\mu\mapsto \eta=g^{-1}(\mu)$ exists if the model is identifiable (in $\eta$). A change of variables $\mu\mapsto \eta$ then implies 
\begin{flalign*}
	\pi^Q_n(\eta)=\pi^Q_n\{g(\eta)\}|\nabla_\eta g(\eta)|=N\{g(\eta);b_n,\Sigma_n^{-1}\}|\nabla_\eta^2 A(\eta)|,
\end{flalign*}where the second equality follows since $g(\eta)=\nabla_\eta A(\eta)$. 
\end{proof}

\begin{proof}[Proof of Theorem \ref{thm:main}]
	We prove the theorem by proving the more general result
	$$
	\int_\Theta\|\varphi(\theta)\||\pi^Q_n(\theta)-\overline\pi^Q_n(\theta)|\dt\t=O(1/N).
	$$Taking $\varphi(\theta)=1$ delivers the first result, while the second result follows since 
	$$
	\left\| \int_\Theta\varphi(\theta)\{\pi^Q_n(\theta)-\overline\pi^Q_n(\theta)\}\dt\t\right\|\le \int_\Theta\|\varphi(\theta)\||\pi^Q_n(\theta)-\overline\pi^Q_n(\theta)|\dt\t.
	$$

	Recall the definitions $$\widehat Q_{n}(\theta;z):= \frac{1}{2}\frac{\widehat{m}_n(\theta;z)^\top}{\sqrt{n}}\left[n^{-1}\widehat{W}_n(\theta;z)\right]^{-1}\frac{\widehat{m}_n(\theta;z)}{\sqrt{n}}$$ and $$Q_n(\theta):= \frac{1}{2} \frac{m_n(\theta)^\top}{\sqrt{n}}\left[n^{-1}{W}_n(\theta)\right]^{-1}\frac{m_n(\theta)}{\sqrt{n}}. $$ 
	Let $\E_z$ denote expectation wrt the simulated data $z=(z_1,\dots,z_N)$ at a fixed $\theta$ and $\y$, where the dependence of $\E_z$ on $\theta$ and $\y$ is suppressed for notational simplicity. 
	We first demonstrate that, uniformly over $\Theta$,
	\begin{flalign}\label{eq:app1}
		\mathbb{E}_z[\exp\{-\widehat Q_{n}(\theta;z)\}]= \exp\left\{-Q_n(\theta)\right\}\left\{1+O(1/N)\right\}+O\left\{\sigma^2_n(\theta)/{N}\right\}.
	\end{flalign}

We first bound $\E_z[\widehat Q_{n}(\theta;z)]$ using the law of iterated expectations and Lemma \ref{lem:LIE}. In particular, taking $\overline{Z}=\widehat{m}_n(\theta;z)$, $\widehat\Sigma=\widehat{W}_N(\theta;z)$, $\sigma^2=\sigma^2_n(\theta)$, and $\mu=m_n(\theta)+O\{\sigma^2_n(\theta)/N\}$, by Lemma \ref{lem:LIE} 
	$$
	\E_z[\widehat Q_{n}(\theta;z)]= Q_n(\theta)+O\{\|\sigma^2_n(\theta) W_n(\theta)^{-1}\|^2/N\}.
	$$Since, under Assumption \ref{ass:tails}, $\sup_{\theta\in\Theta}\|\sigma^2_n(\theta) W_n(\theta)^{-1}\|<\infty$, for all $n\ge1$, it follows that $\|\sigma^2_n(\theta)W_n(\theta)^{-1}\|^2<\infty$, and we conclude that 
	\begin{equation}\label{eq:meanbound}
		\E_z[\widehat Q_{n}(\theta;z)]=\frac{1}{2}m_n(\theta)^\top W_n(\theta)^{-1} m_n(\theta)+O(1/N).	
	\end{equation}

	We now bound the variance $\E_z\{\widehat Q_{n}(\theta;z)-\E_z[\widehat Q_{n}(\theta;z)]\}^2$. Using the same definitions as above, Lemma \ref{lem:LIE} implies that, for $N$ large enough, and some $C>0$,  
	\begin{flalign*}
		\E_z\{\widehat Q_{n}(\theta;z)-\E_z[\widehat Q_{n}(\theta;z)]\}^2\le C\{\|\sigma^2_n(\theta)  W_n(\theta) ^{-1}\|^4\}\sigma^4_n(\theta)/N.
	\end{flalign*} Again, by Assumption \ref{ass:tails}, for all $n\ge1$, $\sup_{\theta\in\Theta}\|\sigma^2_n(\theta) W_n(\theta)^{-1}\|^4<\infty$, so that 
	\begin{equation}\label{eq:var}
		\E_z\{\widehat Q_{n}(\theta;z)-\E_z[\widehat Q_{n}(\theta;z)]\}^2\lesssim \sigma^4_n(\theta)/N.
	\end{equation}

	Using equations \eqref{eq:meanbound} and \eqref{eq:var}, we can upper bound $\E_z[\exp\{-\lambda\widehat{Q}_n(\theta)\}]$, for some $0\le \lambda< \sqrt{N}$, using  Lemma~\ref{lem:tse}: take $b=1/\sqrt{N}$, with $N\ge1$ by definition, and $B$ such that $\E_z \{\widehat{Q}_n(\theta)-\E_z[\widehat{Q}_n(\theta)]\}^2 \le Bb^2$; then
	\begin{flalign*}
		\mathbb{E}_z[\exp\{-\lambda\widehat{Q}_n(\theta)\}]&\le  \exp\{-\lambda \E_z[\widehat{Q}_n(\theta)]\}+{B(\lambda b)^2}\nonumber\\&=\exp[-\lambda Q_n(\theta)+O\{\lambda/N\}]+O\{\lambda^2\sigma^4_n(\theta)/N\}\nonumber\\&=\exp\{-\lambda Q_n(\theta)\}[1+O\{\lambda /N\}]+O\{\lambda^2\sigma^4_n(\theta)/N\}.
	\end{flalign*}Since the above holds for all $0\le\lambda\le1$, we take $\lambda=1$ without loss of generality for $N$ large enough, which yields equation \eqref{eq:app1}.

Define $\overline{g}_n(\theta):=\E_z\exp\{-\widehat{Q}_n(\theta;z)\}$, $ g_n(\theta):=\exp\{-Q_n(\theta)\},$ and apply equation  \eqref{eq:app1} to  obtain
	\begin{flalign}\label{eq:app3}
		|\overline{g}_n(\theta)-g_n(\theta)|\leq g_n(\theta) O\left\{1/N\right\}+O\{\sigma^4_n(\theta)/N\}.
	\end{flalign}so that
	\begin{flalign*}
		\left|\int_{\Theta} \overline{g}_n(\theta)\pi(\theta)\dt \theta-\int_{\Theta} g_n(\theta)\pi(\theta)\dt \theta\right|&\leq \int_\Theta|\overline{g}_n(\theta)-g_n(\theta)|\pi(\theta)\dt \theta\\&\lesssim \frac{1}{N}\int_{\Theta} g_n(\theta)\pi(\theta)\dt \theta+ N^{-1}\int_{\Theta} \sigma_n^4(\theta)\pi(\theta)\dt\t.
	\end{flalign*}
	
Under the hypothesis of the result, $\int_\Theta g_n(\theta)\pi(\theta)\dt \theta<\infty$ for all $n\ge1$,   so that
	\begin{flalign}
		\left|\int_{\Theta} \|\varphi(\theta)\|^{}\overline{g}_n(\theta)\pi(\theta)\dt \theta-\int_{\Theta} \|\theta\|^{}g_n(\theta)\pi(\theta)\dt \theta\right|\nonumber  &\lesssim \frac{1}{N}\int_{\Theta}\|\varphi(\theta)\|\pi_n^Q(\theta)\dt \theta+\frac{1}{N}\int_\Theta \|\varphi(\theta)\|\sigma^4_n(\theta)\pi(\theta)\dt\theta\nonumber \\&= \frac{1}{N}\int_{\Theta}\|\varphi(\theta)\|\pi_n^Q(\theta)\dt \theta+O_{}(1/N)\label{eq:neweq},
	\end{flalign}
	where the first term on the RHS of the first equation comes from dividing and multiplying the first term by $\int_\Theta g_n(\theta)\pi(\theta)\dt\theta$, which is finite for all $n\ge1$ by hypothesis; and where the equality comes about since $\int_\Theta\|\varphi(\theta)\|\sigma_n^4(\theta)\pi(\theta)\dt\theta<\infty$ by hypothesis.
	Furthermore, by hypothesis, for $n\ge1$, $\int_{\Theta}\|\varphi(\theta)\|\pi_n^Q(\theta)\dt \theta<\infty$, and
	\begin{flalign}
		\left|\int_{\Theta}\|\varphi(\theta)\|^{} \overline{g}_n(\theta)\pi(\theta)\dt \theta-\int_{\Theta}\|\varphi(\theta)\|^{} g_n(\theta)\pi(\theta)\dt \theta\right|&\lesssim \frac{1}{N}\int_{\Theta}\|\varphi(\theta)\|^{}\pi_n^Q(\theta)\dt \theta+O(1/N)\nonumber\\&= O_{}(1/N).\label{eq:result1}
	\end{flalign}
	
	It then follows from equation \eqref{eq:result1} that
	\begin{flalign}\label{eq:result2}
		\frac{\left|\int_{\Theta} \overline{g}_n(\theta)\pi(\theta)\dt \theta-\int_{\Theta} g_n(\theta)\pi(\theta)\dt \theta\right|}{\int_{\Theta} g_n(\theta)\pi(\theta)\dt\theta}= O_{}(1/N),
	\end{flalign} and so
	\begin{flalign*}
		\frac{\int_{\Theta} \overline{g}_n(\theta)\pi(\theta)\dt \theta}{\int_{\Theta} g_n(\theta)\pi(\theta)\dt\theta}=1+O_{}(1/N);\text{ }\frac{\int_{\Theta} g_n(\theta)\pi(\theta)\dt\theta}{\int_{\Theta} \overline{g}_n(\theta)\pi(\theta)\dt \theta}=1+O_{}(1/N).
	\end{flalign*}
	Write  $\overline{\pi}_n^Q(\theta)-{\pi}_n^Q(\theta)$ as 
	\begin{flalign*}
		\overline{\pi}_n^Q(\theta)-{\pi}_n^Q(\theta)=& \frac{\overline{g}_n(\theta)\pi(\theta) }{\int_{\Theta} \overline{g}_n(\theta)\pi(\theta)\dt \theta}-\frac{g_n(\theta)\pi(\theta)}{\int_{\Theta} g_n(\theta)\pi(\theta)\dt \theta} \\
		=&\left\{ \overline{g}_n(\theta)-g_n(\theta)\right\}\frac{\pi(\theta)}{\int_{\Theta} g_n(\theta)\pi(\theta)\dt \theta} \frac{\int_{\Theta} g_n(\theta)\pi(\theta)\dt \theta}{\int_{\Theta} \overline{g}_n(\theta)\pi(\theta)\dt \theta} \\
		&-g_{n}(\theta) \pi(\theta)\left(\frac{1}{\int_{\Theta} g_{n}(\theta) \pi(\theta)\dt\theta}-\frac{1}{\int_{\Theta} \overline{g}_{n}(\theta) \pi(\theta)\dt\theta}\right),
	\end{flalign*} and apply the triangle inequality to obtain
	\begin{flalign*}
		\left|\overline{\pi}_n^Q(\theta)-{\pi}_n^Q(\theta)\right| &\leq\left|\overline{g}_n(\theta)-g_n(\theta)\right| \frac{\pi(\theta)}{\int_{\Theta} \overline{g}_n(\theta)\pi(\theta)\dt\theta}\\&+\frac{\left|\int_{\Theta} \overline{g}_n(\theta)\pi(\theta)\dt\theta-\int_{\Theta} g_n(\theta)\pi(\theta)\dt\theta\right|}{\int_{\Theta} \overline{g}_n(\theta)\pi(\theta)\dt\theta} {\pi}_n^Q(\theta).
	\end{flalign*} Multiplying by $\|\varphi(\theta)\|$, integrating both sides and applying
	equations~\eqref{eq:result1} and \eqref{eq:result2}, 
	$$
	\begin{aligned}
		\int_{\Theta}\|\varphi(\theta)\|\left|\overline{\pi}_n^Q(\theta)-{\pi}_n^Q(\theta)\right| \dt \theta & \leq \frac{1}{\int_{\Theta} \overline{g}_n(\theta)\pi(\theta)\dt\theta} \int_{\Theta}\|\varphi(\theta)\|\left|\overline{g}_n(\theta)-g_n(\theta)\right|\pi(\theta) \dt\theta\\&+\frac{\left|\int_{\Theta}\overline{g}_n(\theta)\pi(\theta)\dt\theta-\int_{\Theta}g_n(\theta)\pi(\theta)\dt\theta\right|}{\int_{\Theta}\overline{g}_n(\theta)\pi(\theta)\dt\theta}\int_{\Theta}\|\varphi(\theta)\|\pi_n^Q(\theta)\dt\theta \\
		& \leq O\left(1/N\right )+\frac{\int_{\Theta}g_n(\theta)\pi(\theta)\dt\theta}{\int_{\Theta}\overline{g}_n(\theta)\pi(\theta)\dt\theta} O\left(1/N\right )\int_{\Theta}\|\varphi(\theta)\|\pi_n^Q(\theta)\dt\theta \\
		&=O_{}\left(1/N\right).
	\end{aligned}
	$$
	For $n\ge1$, $\int_{\Theta} \|\varphi(\theta)\| \pi_n^Q(\theta)<\infty$, and 
	the first term in the second inequality is $O_{}(1/N)$; 
	the second term is also $O_{}(1/N)$ because 
	${\int_{\Theta}g_n(\theta)\pi(\theta)\dt\theta}/
	{\int_{\Theta}\overline{g}_n(\theta)\pi(\theta)\dt\theta}=1+O_{}(1/N)$.
	The stated result follows.
	
\end{proof}

\begin{proof}[Proof of Lemma \ref{lem:two}]From the triangle inequality,
	\begin{flalign*}
	\int_{\Theta}\|\vartheta\||\overline\pi_n^Q(\vartheta)-N\{\vartheta;0,\Sigma_\star^{-1}\}|\dt\vartheta\le \int_{\Theta}\|\vartheta\||\overline\pi_n^Q(\vartheta)-\pi_n^Q(\vartheta)|\dt\vartheta+\int_{\Theta}\|\vartheta\||\pi_n^Q(\vartheta)-N\{\vartheta;0,\Sigma_\star^{-1}\}|\dt\vartheta.	
	\end{flalign*}The first term is $O({1}/{N})$ for $N\rightarrow$ as $n\rightarrow+\infty$ by Theorem \ref{thm:main}. The second term is $o_p(1)$ by Lemma \ref{lem:twog}.
\end{proof}

\begin{proof}[Proof of Lemma \ref{lem:three}]
	From Theorem \ref{thm:main} and the change of variables $\theta=\theta_\star+\vartheta/\sqrt{n}+Z_n/\sqrt{n},$
	\begin{flalign*}
		\overline\theta=\int_\Theta \theta\overline\pi_n^Q(\theta)\dt\theta&=\int_\Theta \theta\left\{\overline\pi_n^Q(\theta)-\pi_n^Q(\theta)\right\}\dt\theta+\int_\Theta\theta\pi_n^Q(\theta)\dt\theta\\&=O(1/N)+\int_{\mathcal{T}_n}(\theta_\star+Z_n/n+\vartheta/\sqrt{n})\pi^Q_n(\vartheta)\dt\vartheta 
	\end{flalign*}
so that $$\sqrt{n}(\overline\theta-\theta_\star)-Z_n/\sqrt{n}=O(\sqrt{n}/N)+\int_{\mathcal{T}_n} \vartheta \pi_n^Q(\vartheta)\dt \vartheta.$$ However, since $\mathcal{T}_n\rightarrow\mathbb{R}^{d_\theta}$ as $n\rightarrow\infty$, $\int_{\mathcal{T}_n}\vartheta N\{\vartheta;0,\Sigma_{\star}^{-1}\}\dt\vartheta=o(1)$, and by Lemma \ref{lem:two}
$$
\left\|\int_{\mathcal{T}_n} \vartheta\pi_n^Q(\vartheta)\dt \vartheta\right\|\le \int_{\mathcal{T}_n} \|\vartheta\||\pi_n^Q(\vartheta)-N\{\vartheta;0,\Sigma_\star^{-1}\}\dt \vartheta+o(1)=o_p(1).
$$Therefore, we have that $\|\sqrt{n}(\overline\theta-\theta_\star)-Z_n/\sqrt{n}\|=o_p(1)$. By Assumption \ref{ass:infeasible}, $Z_n/\sqrt{n}\Rightarrow N(0,\Sigma_\star^{-1})$, and the stated result follows. 
\end{proof}

\begin{proof}[Proof of Theorem \ref{lem:qtwo}]
The proof follows the same technique as that used to prove Lemma \ref{lem:twog} but with $m_n(\theta)=\psi_n(\theta)$ and $W_n(\theta)=V_n(\theta)$, and is therefore omitted. 	
\end{proof}
\end{document}